\title{Non-Linear Strong Data-Processing for \\ Quantum Hockey-Stick Divergences}
\author[1]{Theshani Nuradha}
\author[2]{Ian George}
\author[3]{Christoph Hirche}
\affil[1]{Department of Mathematics and
Illinois Quantum Information Science and Technology (IQUIST) Center,
University of Illinois Urbana-Champaign, Urbana, IL 61801, USA, 
theshani.gallage@gmail.com}
\affil[2]{Centre for Quantum Technologies,
National University of Singapore, Singapore 117543, Singapore,
qit.george@gmail.com}
\affil[3]{Institute for Information Processing (tnt/L3S),
Leibniz Universit\"at Hannover, Germany, christoph.hirche@gmail.com}
\date{ }
\begin{document}
\maketitle

\begin{abstract}
 Data-processing is a desired property of classical and quantum divergences and information measures. In information theory, the contraction coefficient measures how much the distinguishability of quantum states decreases when they are transmitted through a quantum channel, establishing linear strong data-processing inequalities (SDPI). However, these linear SDPI are not always tight and can be improved in most of the cases. In this work, we establish non-linear SDPI for quantum hockey-stick divergence for noisy channels that satisfy a certain noise criterion. We also note that our results improve upon existing linear SDPI for quantum hockey-stick divergences and also non-linear SDPI for classical hockey-stick divergence. We define $F_\gamma$ curves generalizing Dobrushin curves for the quantum setting while characterizing SDPI for the sequential composition of heterogeneous channels. In addition, we derive reverse-Pinsker type inequalities for $f$-divergences with additional constraints on hockey-stick divergences. 
 We show that these non-linear SDPI can establish tighter finite mixing times that cannot be achieved through linear SDPI. Furthermore, we find applications of these in establishing stronger privacy guarantees for the composition of sequential private quantum channels when privacy is quantified by quantum local differential privacy.
\end{abstract}

\tableofcontents

\section{Introduction}

The data-processing property is fundamental in both classical and quantum information processing. This property says that two states can only be less distinguishable after being processed by a (possibly noisy) channel. The data-processing property is formalized by the data-processing inequality of a divergence: a divergence $\sD$ satisfies the data-processing inequality (DPI) if for all channels $\cN$ and states $\rho$ and $\sigma$,
\begin{equation}\label{eq:DPI}
    \sD\!\left( \cN(\rho) \Vert \cN(\sigma) \right) \leq \sD(\rho \Vert \sigma).
\end{equation}
For the classical setting, the states $\rho$ and $\sigma$ can be replaced by two probability distributions $p$ and $q$ and $\cN$ by a Markov kernel (a conditional probability distribution).

The data processing inequality limits the useful information left in a system after undergoing a channel. As such, strengthened claims about the data processing inequality result in stronger claims about information processing tasks and are thus useful. In principle, there exists a fundamental function $f_{\sD}(\cN,\rho,\sigma) \coloneq \frac{\sD(\cN(\rho) \Vert \cN(\sigma))}{\sD(\rho \Vert \sigma)}$ so that for all $\cN,\rho,\sigma$,
\begin{align}\label{eq:opt-contr-coeff}
    \sD(\cN(\rho) \Vert \cN(\sigma)) = f_{\sD}(\cN,\rho,\sigma) \sD(\rho \Vert \sigma) \ .
\end{align}
This function could be useful as it would establish the strongest claims about data processing. Nonetheless, the problem with this function is two-fold. First, constructing this function for a fixed channel $\cN$ is likely equivalent to computing its value at every pair of inputs $(\rho,\sigma)$,
which is computationally too expensive. Second, in information-theoretic studies, we are primarily interested in claims that are more generic than appealing to specific inputs, and so this function is more specified than is actually useful. For both reasons, the aim is to find relaxations of this function that still obtain tighter results than \cref{eq:DPI}.

To that end, contraction coefficients of divergences have been studied extensively~\cite{Dobrushin1956,Hiai15, Hirche2022contraction,gao2022complete, asoodeh2023contractionegammadivergenceapplicationsprivacy,nuradha2024contraction,hirche2025partial,george2025unifiedapproachquantumcontraction,George-2024ergodic}. Formally, given a divergence $\sD$ satisfying DPI, the contraction coefficient $\eta_{\sD}(\cN) \in [0,1]$ is defined as
\begin{equation}
    \eta_{\sD}(\cN) \coloneq \sup_{\rho, \sigma \textnormal{ s.t.} \sD(\rho \Vert \sigma) \neq 0} \frac{\sD\!\left( \cN(\rho) \Vert \cN(\sigma) \right)}{\sD(\rho \Vert \sigma) } \ . \label{eq:SDPI_linear_opt}
\end{equation}
Using this definition, one sees that for all states $\rho$ and $\sigma$,
\begin{equation}
    \sD\!\left( \cN(\rho) \Vert \cN(\sigma) \right) \leq \eta_{\sD}(\cN) \ \sD(\rho \Vert \sigma). \label{eq:SDPI_linear}
\end{equation}
It follows immediately that $\eta_{\sD}(\cN)$ is an upper bound on $f_{\sD}(\cN,\rho,\sigma)$ for all pairs of inputs $\rho$ and $\sigma$ while \cref{eq:SDPI_linear} can only ever being a tighter inequality than \cref{eq:DPI}. Because of that, \cref{eq:SDPI_linear} is known as a strong data-processing inequality (SDPI) when $\eta_{\sD}(\cN) < 1$. Because $\eta_{\sD}(\cN)$ is a constant independent of $\rho$ and $\sigma$, \cref{eq:SDPI_linear} contracts the value of the divergence linearly, and thus we call \cref{eq:SDPI_linear} a \textit{linear} SDPI for divergence $\sD$.

While contraction coefficients $\eta_{\sD}(\cN)$ resolve the need for a more generically applicable quantity than $f_{\sD}(\cN,\rho,\sigma)$, they still suffer from certain issues. The first issue is that determining $\eta_{\sD}(\cN)$ is also generally computationally difficult. For example, for $\sD$ being the trace distance and $\cN$ being a quantum channel, it has been established to be an NP hard problem \cite{delsol2025computationalaspectstracenorm}. To circumvent this challenge, efficiently computable alternatives have been proposed to bound the contraction coefficients to obtain useful insights into information contraction \cite{Doeblin1937,makur2024doeblin,hirche2024quantum, george2025quantumdoeblincoefficientsinterpretations}. The second issue is that while linear SDPI provide useful improvements over the data-processing inequality, by definition it may be the case that a linear SDPI is only achieved for two specific inputs, and thus \eqref{eq:SDPI_linear} is not tight for almost all inputs. 

To improve upon the fact a linear SDPI may be loose for almost all inputs, one can establish \textit{non-linear} strong data processing inequalities~\cite{Polyanskiy-2015a}. This is done via a function that we refer to as the divergence-curve for a channel $\cN$:
\begin{equation} \label{eq:divergence_curve}
    F_{\sD}(\cN,t) \coloneq \sup\{\sD(\cN(\rho)\Vert \cN(\sigma)): \sD(\rho \Vert \sigma) \leq t\} \ . 
\end{equation}
From the definition of the divergence curve, a non-linear SDPI immediately follows:
\begin{equation}\label{eq:NL-SDPI}
     \sD\!\left( \cN(\rho) \Vert \cN(\sigma) \right) \leq F_{\sD}\left(\cN, \sD\!\left(\rho \Vert \sigma\right)\right) \ .
\end{equation}
\cref{eq:NL-SDPI} is referred to as a \textit{non-linear} SDPI specifically because the divergence contracts following the divergence-\textit{curve}. This inequality is appealing for two reasons. First, the divergence curve depends on $\cN$ and the value of $\sD(\rho \Vert \sigma)$, so any generically tighter inequality than \cref{eq:NL-SDPI} would rely upon knowledge of further structure on the inputs. Thus, \cref{eq:NL-SDPI} is the tightest inequality that remains agnostic about the inputs. Second, once $\cN$ is fixed, for every value the distinguishability $\sD\!\left(\rho \Vert \sigma \right)$ can take, the inequality is tight for at least one pair of inputs. Unless the information loss is independent of the initial distinguishability of the states, this improves upon the linear contraction coefficient. In fact, because $\eta_{\sD}(\cN)$ may be tight only on the worst-case value of $\sD(\rho \Vert \sigma)$, even upper bounds on $F_{\sD}(\cN,t)$ at a given $t$ can result on tighter bounds than applying a linear SDPI.  As tighter bounds obtain stronger claims about information processing, characterizing the divergence-curve is of both fundamental and practical relevance. 

In the classical setting, non-linear SDPI have been established for various divergences and information measures including total-variation distance (Dobrushin curves)~\cite{Polyanskiy-2015a}, mutual information ($F_I$ curves)~\cite{du2017strong}, Doeblin curves~\cite{lu2024doeblin} and hockey-stick divergences ($F_\gamma$ curves)~\cite{zamanlooy2024mathrm}. A closely related topic is the joint range approach for general $f$-divergences~\cite{harremoes2011pairs}.
In the quantum setting, only a generalization of the Dobrushin curve to obtain non-linear SDPI for the trace distance has been proposed in~\cite{huber2019jointly}.

Recently, the quantum hockey-stick divergence $E_{\gamma}(\rho \Vert \sigma)$~\cite{sharma2012strongconversesquantumchannel} has found new uses in quantum information processing beyond its initial application for strong converse bounds. In~\cite{hirche2024quantumDivergences}, a new family of quantum $f$-divergences defined via integrating the quantum hockey stick divergence over the parameter $\gamma \geq 1$ was introduced. By appealing to the contraction coefficient of the hockey-stick divergence, the authors were able to bound the contraction of these new $f$-divergences, which includes the contraction coefficient of the quantum relative entropy. Furthermore, the contraction coefficient of the quantum hockey-stick divergence has found applications in ensuring privacy for quantum systems~\cite{hirche2022quantum,nuradha2023quantum,angrisani2023differentialprivacyamplificationquantum, dasgupta2025quantum,gallage2025theory}. Given these recent results, it is clear that establishing non-linear SDPI for hockey-stick divergences will have immediate applications in privacy and mixing of quantum channels that satisfy certain noise criteria. However, non-linear SDPI for quantum divergences has not been studied beyond the Dobrushin curve. As such, the goal of this work is to establish non-linear SDPI for hockey-stick divergences and showcase improvements of non-linear SDPI over linear SDPI. 
We also suspect that these characterizations will provide insights into the non-linear SDPI for other families of divergences that stem from hockey-stick divergences~\cite{hirche2024quantumDivergences}.

\subsection{Contributions}
The main contributions of this work are as follows:

\begin{itemize}
    \item We obtain a non-linear strong data-processing inequality for the hockey-stick divergence in~\cref{thm:non_linear_HS_div} for channels that satisfy the following relation: for some $\gamma \geq 1$ and $\delta \in [0,1]$
    \begin{equation} \label{eq:intro_class_of_channels}
         \sup_{\rho, \sigma \in \cD(\cH)} E_{\gamma}\!\left( \cN(\rho) \Vert \cN(\sigma) \right) \leq \delta,
    \end{equation}
where $E_\gamma(\rho \Vert \sigma) \coloneq \Tr\!\left[ (\rho-\gamma \sigma)_+ \right]$.
To obtain the said non-linear characterization, we first derive a linear contraction coefficient for hockey-stick divergence in~\cref{prop:contraction_coeff_upper_bound} that holds for all $\gamma \geq 1$ and $\delta \in [0,1]$. As the next step, we obtain a general result on the strong data-processing of hockey-stick divergence in~\cref{lem:contraction_separate_gamma} with respect to a contraction coefficient that depends on the input states given. By utilizing both of these, we arrive at a \textit{non-linear} strong data-processing inequality for hockey-stick divergence. Figure~\ref{fig:compare} graphically shows the potential gains we can achieve from both linear and non-linear SDPI established in our work for a specific instance. This characterization generalizes the classical results obtained in~\cite{zamanlooy2023strong} to the quantum setting, as well as to the scenario where $\delta >0$. In fact, we show that this characterization is tight by finding a channel that achieve the said inequality.

\textit{Note that our results are valid for the classical setting by replacing states with probability distributions $p,q$ such that $\rho =\sum_{x} p(x) |x\rangle\! \langle x|$ and $\sigma =\sum_{x} q(x) |x\rangle\! \langle x|$.}

\begin{figure}[ht]
    \centering
\includegraphics[width=0.7\linewidth]{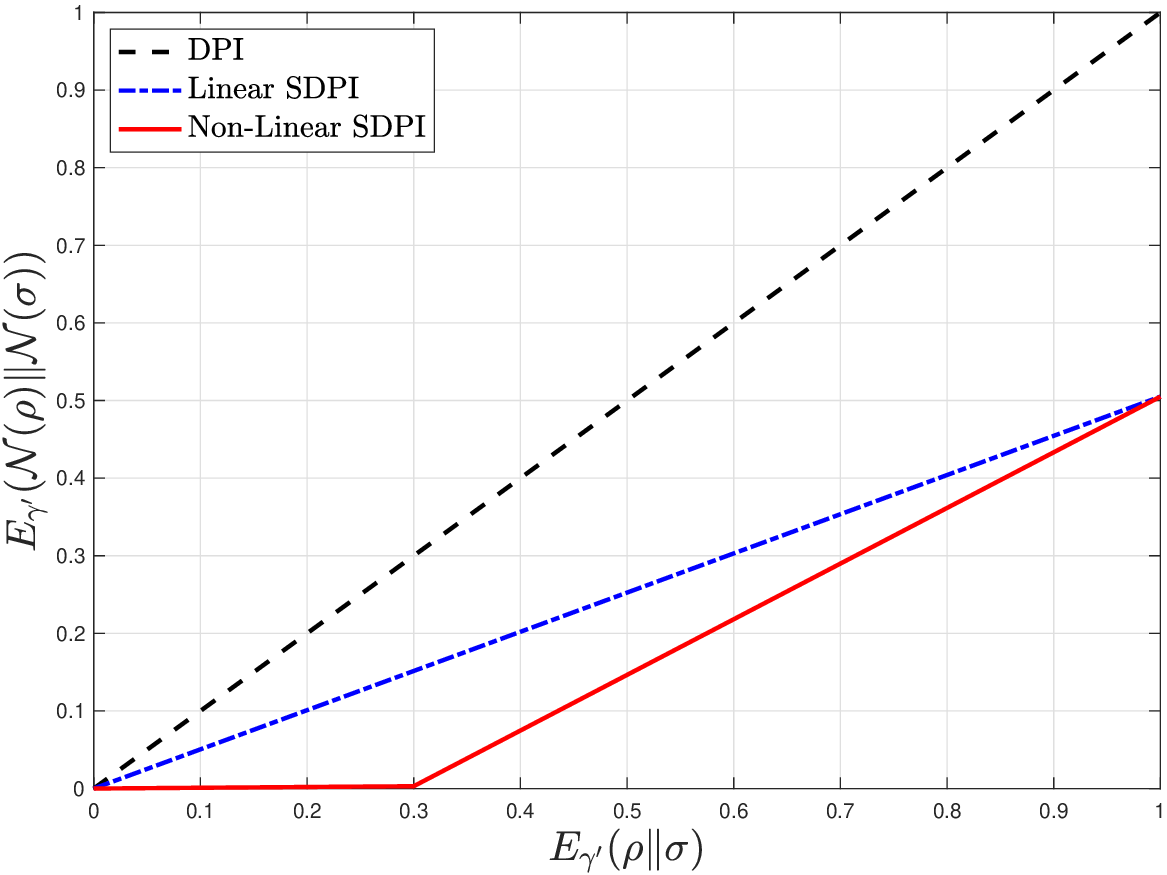}
    \caption{Comparison of data processing inequalities: DPI refers to the standard data-processing inequality; Linear SDPI refers to~\cref{prop:contraction_coeff_upper_bound}; and Non-Linear SDPI refers to~\cref{thm:non_linear_HS_div}. In this example setting, we consider $\gamma=6, \gamma'=2.5, \delta=0.01$ for a channel $\cN \in \cB^{\gamma,\delta}$. Each of these lines/curves show the largest $E_{\gamma'}\!\left(\cN(\rho) \Vert \cN(\sigma)\right)$ value that can be reached for the input distinguishability $E_{\gamma'}(\rho \Vert \sigma) \in [0,1]$. }
    \label{fig:compare}
\end{figure}

\item Next, we define $F_\gamma$ curves as a tool to obtain non-linear strong data-processing for divergences, with particular focus on hockey-stick divergence (see~\eqref{eq:f_gamma_def}). Then, we show that an upper bound on that can be obtained via~\cref{thm:non_linear_HS_div} so as to get~\cref{prop:F_gamma_1}. It is also interesting to know how $F_\gamma$ behaves when a composition of several channels is applied, so that one can utilize them in applications where the channel evolves at different timesteps. To this end, we obtain~\cref{prop:F_gamma_hetero_comp} for $\delta=0$, which is applicable for the setting where several channels (possibly heterogeneous) are applied sequentially one after the other. We also extend this result for the setting where $\delta >0$ in~\cref{prop:F_gamma_n_delta}. There, we find an interesting transition depending on the number of channel repetitions and the rate of contraction of the channel.

\item We study conditions that ensure a channel satisfies~\eqref{eq:intro_class_of_channels} for some $\gamma$ and $\delta$. This is important as we establish strong data-processing inequalities for hockey-stick divergences when the channel of interest satisfies~\eqref{eq:intro_class_of_channels}.

\item 
We establish reverse Pinsker-type inequalities for $f$-divergences in terms of hockey-stick divergence in~\cref{thm:revPinGeneral} and~\cref{Prop:RevPinCase1}. In previous works, reverse Pinsker inequalities give bounds on divergences in terms of the trace distance. Here, we explore the setting where we additionally have a constraint on certain hockey-stick divergences. 

\item We study how fast the hockey-stick divergence decays to a fixed value, which is what is referred to as mixing times in~\cref{prop:mixing_times1} and~\cref{prop:mixing_times_delta_not_0}. We in particular find cases where our non-linear SDPI can establish a finite time for the channel to reach its fixed point, which linear SDPI results cannot establish (\cref{rem:compare_mixing}). In addition, when the fixed point of the channel of interest is a full-rank state, we obtain possibly faster mixing times than the general setting even for the channels that do not satisfy the noise criterion in~\eqref{eq:intro_class_of_channels}, in~\cref{Cor:mixingtime2} and~\cref{Cor:mixingtime3}. 

\item Finally, we illustrate how our findings provide formal guarantees on the composition of private quantum systems in~\cref{prop:strong_privacy_eps_delta},~\cref{Cor:QLDP_2}, and~\cref{prop:delta_zero_mechanisms}, highlighting how one can generate stronger privacy mechanisms by composition. We also obtain contraction bounds on $f$-divergences under privacy constraints that hold even for $\delta \neq 0$, generalizing the previously known bounds for $\delta =0$ in both classical and quantum settings.
\end{itemize}

\textit{Note: Independent and concurrent work~ \cite{dasgupta2025quantum} also obtained a linear contraction coefficient for the hockey-stick divergence as in \cref{prop:contraction_coeff_upper_bound}, which is one ingredient to our study of non-linear SDPI, via a different proof method.} 

\section{Notations and Preliminaries}

 A quantum system $R$ is identified with a finite-dimensional Hilbert space~$\cH_R$. We denote the set of linear operators acting on $\cH_R$ by $\cL(\cH_R)$. A quantum state $\rho_R$ on $R$ is a positive semidefinite, unit-trace operator acting on $\cH_R$. We denote the set of all quantum states in 
$\cL(\cH_R)$ by $\cD(\cH_R)$. 
A state $\rho_R$ of rank one is called pure, and we may choose a normalized vector $| \psi \rangle \in \cH_R$ satisfying $\rho_R= | \psi \rangle\!\langle \psi | $ in this case. Otherwise,
$\rho_R$ is called a mixed state. 
A quantum channel $\cN \colon \cL(\cH_A ) \to \cL(\cH_B)$ is a linear, 
completely positive, and trace-preserving (CPTP) map from $\cL(\cH_A)$ to $\cL(\cH_B)$. We denote the composition of a channel $\cN_{A \to A}$ with itself $n$ times by $\cN^{(n)}$. A measurement of a quantum system $R$ is described by a
positive operator-valued measure (POVM) $\{M_y\}_{y \in \cY}$, which is defined to be a collection of PSD operators  satisfying $\sum_{y \in \cY} M_y= I_{R}$, where $\cY$ is a finite alphabet and $I_R$ is the identity operator on $R$. According to the Born rule, after applying the above POVM to $\rho \in \cD(\cH_R)$, the probability of observing the outcome $y$ is given by~$\Tr\!\left[M_y \rho \right]$.

 \bigskip
The normalized trace distance between the states $\rho$ and $\sigma$ is defined as
\begin{equation}
\label{eq:normalized-TD}
    T(\rho,\sigma)\coloneqq \frac{1}{2} \left\| \rho -\sigma\right\|_1.
\end{equation} 
It generalizes the total-variation distance between two probability distributions. For $\gamma \geq 0$,
the quantum hockey-stick divergence is defined as~\cite{sharma2012strongconversesquantumchannel}
\begin{equation}\label{eq:hockey_stick}
    E_\gamma(\rho \Vert \sigma) \coloneqq \Tr\!\left[(\rho -\gamma \sigma)_{+} \right] -(1-\gamma)_+,
\end{equation}
where $\left(  A\right)  _{+}\coloneqq \sum_{i:a_{i}\geq0}a_{i}|i\rangle\!\langle i| $
for a Hermitian operator $A = \sum_{i}a_{i}|i\rangle\!\langle i|$ and $(x)_+ \coloneqq \max\{0,x\}$ for scalars. For $\gamma=1$, observe that $ E_1(\rho \Vert \sigma)=T(\rho,\sigma)$. This shows the hockey stick divergence is a generalization of the trace distance. The  Hockey stick divergence is also known to enjoy the following semidefinite program formulation for $\gamma \geq 1$ \cite[Lemma II.1]{hirche2022quantum}
\begin{equation} \label{eq:HS_SDP}
    E_\gamma\!\left( \rho \Vert \sigma \right) = \sup_{0 \leq M \leq I} \Tr\!\left[ M(\rho-\gamma \sigma)\right] .
 \end{equation}
The max-relative entropy is defined as follows:
\begin{equation}
    D_{\max}(\rho\|\sigma)  \coloneqq \ln \inf_{\lambda \geq 0} \left\{ \lambda: \rho \leq \lambda \sigma\right\}. \label{eq:dmax-1-herm-maps}
\end{equation}
Finally, we will occasionally use a smoothed variant of the max-relative entropy, defined as
\begin{align}\label{Eq:smooth-Dmax}
    D^\delta_{\max}(\rho\|\sigma) &\coloneqq \log\inf\{ \lambda : Z\leq\lambda\sigma, Z\geq0, \Tr(\rho-Z)_+\leq\epsilon \} \\
    &= \log\inf\{ \lambda : \rho\leq\lambda\sigma+Q, Q\geq0, \Tr Q\leq\epsilon \} ,
\end{align}
where the equality between the two expressions was recently formally proven in~\cite{regula2025tight}. 

\subsection{Channels and Contraction Coefficients}

We say $\cN \in \cB^{\gamma, \delta}$ for $\gamma \geq 1$  and $\delta \in [0,1]$ if 
\begin{equation}\label{eq:gamma-delta-ball-def}
    \sup_{\rho, \sigma \in \cD(\cH)} E_{\gamma}\!\left( \cN(\rho) \Vert \cN(\sigma) \right) \leq \delta.
\end{equation}
In particular, $\cN \in \cB^{\gamma,\delta}$ guarantees that no two inputs are more distinguishable than $\delta$ under the hockey stick divergence after the application of the channel $\cN$. In this sense, $\cN \in \cB^{\gamma,\delta}$ is a convenient way to guarantee how much information is lost by an action of a channel $\cN$ without specifying the channel itself. This exact method of only specifying information loss is also used in differential privacy, see~\cite{hirche2022quantum}, and, it is readily seen from \cref{def:QLDP} in~\cref{Sec:privacy} that $\cN \in \cB^{\gamma,\delta}$ is equivalent to $\cN$ being $(\ln(\gamma),\delta)$-quantum local differentially private.
One example of such a channel is a depolarizing channel with the flip parameter $p$ such that $\gamma,\delta$ satisfy the following: $p \geq d(1-\delta)/(\gamma -1+d)$, where $d$ is the dimension \cite[ Lemma IV.1]{hirche2022quantum}. We discuss further conditions to achieve this condition in~\cref{Sec:Containment_class_of_channels}. 

Define the contraction coefficient for channel $\cN$ with respect to $E_\gamma$ as follows by choosing $\sD=E_\gamma$ in~\eqref{eq:SDPI_linear_opt}:
\begin{equation}\label{eq:eta-gamma-defn}
    \eta_{\gamma}(\cN) \coloneqq \sup_{\rho \neq \sigma} \frac{E_\gamma(\cN(\rho) \Vert \cN(\sigma))}{E_\gamma(\rho \Vert \sigma)}.
\end{equation}
Equivalently, we have that \cite{hirche2022quantum}, 
\begin{equation}\label{eq:contraction-optimized-by-pure-states}
     \eta_{\gamma}(\cN) =  \sup_{\psi_1 \perp \psi_2} E_\gamma( \cN(\psi_1) \Vert \cN(\psi_2)),
\end{equation}
  where the optimization is over $\psi_1$ and $\psi_2$ that are orthogonal pure states. {Moreover,  \cite[Proposition 11]{nuradha2024contraction} establishes 
  \begin{align}\label{eq:max-HS-dist-optimized-by-pure-states}
    \sup_{\rho, \sigma \in \cD(\cH)} E_{\gamma}\!\left( \cN(\rho) \Vert \cN(\sigma) \right) = \sup_{\psi_1 \perp \psi_2} E_\gamma( \cN(\psi_1) \Vert \cN(\psi_2)) \ . 
  \end{align}
   Combining \cref{eq:gamma-delta-ball-def}, \cref{eq:contraction-optimized-by-pure-states}, and \cref{eq:max-HS-dist-optimized-by-pure-states},} we have that 
  \begin{equation}\label{eq:ball-containment-equiv-contraction-coeff}
      \cN \in \cB^{\gamma, \delta}  \iff  \eta_{\gamma}(\cN) \leq \delta.
  \end{equation}

\section{Strong Data-Processing}
In this section, we establish strong data-processing inequalities (SDPI) for hockey-stick divergence. In particular, we establish both linear and non-linear SDPI for hockey-stick divergences that are tighter than the existing linear SDPI.

We begin by reviewing known results on the contraction bounds of hockey-stick divergence from previous works, so as to highlight the instances where one can improve linear SDPI and for comparisons to the SDPI obtained in this work: 
\begin{enumerate}
    \item \textit{Contraction Coefficients for $E_\gamma$:} Let $\cN \in \cB^{\gamma,\delta}$. We have that for $\gamma' \geq 1$ and $\delta=0$:
    \begin{equation} \label{eq:HS_Con_delta_0}
        E_{\gamma'}\!\left( \cN(\rho) \Vert \cN(\sigma) \right) \leq \left(\frac{\gamma -\gamma'}{\gamma +1} \right)_+ E_{\gamma'}(\rho \Vert \sigma).
    \end{equation}
    The above can be deduced from \cite[Theorem~1]{nuradha2024contraction}, and we can also recover it from \cref{prop:contraction_coeff_upper_bound} proved in this work with $\delta=0$. 
    Equation~\eqref{eq:HS_Con_delta_0} also says that for $\cN \in \cB^{\gamma,\delta=0}$, we get
    \begin{equation} \label{eq:contraction_hockey_delta_0}
        \eta_{\gamma'}(\cN) \leq \left(\frac{\gamma -\gamma'}{\gamma +1} \right)_+ .
    \end{equation}
    However, it is not clear whether this inequality is tight or can be improved. We show in this work that, in fact, it can be improved, and the tight characterization is, in contrast, a non-linear bound. 
    
     \item \textit{Contraction Coefficients for $E_1$ (trace distance) :} Let $\cN \in \cB^{\gamma,\delta}$. We have that for normalized trace distance $(\gamma'=1)$ \cite[Theorem~5]{nuradha2024contraction} (see also~\cite[Proposition~4.5]{Christoph2024sample} when $\delta=0$)
    \begin{equation}\label{Eq:E1-contraction}
        E_{1}\!\left( \cN(\rho) \Vert \cN(\sigma) \right) \leq \left(\frac{\gamma -1 + 2 \delta }{\gamma +1} \right) E_{1}(\rho \Vert \sigma).
    \end{equation}
    Also, there exists a channel $\cN$ that achieves the above inequality (by the achievability proof of \cite[Theorem~5]{nuradha2024contraction}). Therefore, the linear contraction bound is in fact tight for $\gamma=1$ .
\end{enumerate}

We now turn to establishing our non-linear SDPI (\cref{thm:non_linear_HS_div}). We first derive relations between hockey-stick divergences of different parameter values. These are quantum generalizations of~\cite[Propositions 1 and 2]{zamanlooy2024mathrm}, which were key technical lemmas in that work for establishing non-linear SDPI for classical hockey-stick divergences. We begin with a quantum generalization of \cite[Proposition 1]{zamanlooy2024mathrm}.

\begin{proposition} \label{prop:HS_relationship_gamma'}
    Let $1 \leq \gamma' \leq \gamma$. Then, we have that 
    \begin{equation}
        E_{\gamma'}(\rho \Vert \sigma) \leq \frac{\gamma- \gamma'}{\gamma+1} + \frac{\gamma'+1}{\gamma+1} \max\{ E_\gamma(\rho \Vert \sigma), E_\gamma(\sigma \Vert \rho)\}.
    \end{equation}
\end{proposition}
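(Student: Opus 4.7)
The plan is to reduce the claim to the elementary sub-additivity
\[
\Tr[(X+Y)_+] \le \Tr[X_+] + \Tr[Y_+]
\]
for Hermitian $X, Y$---immediate from \cref{eq:HS_SDP} together with sub-additivity of suprema---by decomposing $\rho - \gamma'\sigma$ as a non-negative linear combination of the two Hermitian operators $\rho - \gamma\sigma$ and $\gamma\rho - \sigma$.

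Solving the $2\times 2$ system matching the coefficients of $\rho$ and $\sigma$ yields the identity
\begin{equation*}
\rho - \gamma'\sigma \;=\; a(\rho - \gamma\sigma) + b(\gamma\rho - \sigma),\qquad a = \frac{\gamma\gamma' - 1}{\gamma^2 - 1}, \quad b = \frac{\gamma - \gamma'}{\gamma^2 - 1},
\end{equation*}
which is well-defined for $\gamma > 1$ and has both coefficients non-negative under the hypothesis $1 \le \gamma' \le \gamma$. Applying $\Tr[(\cdot)_+]$ together with the sub-additivity above gives
\begin{equation*}
E_{\gamma'}(\rho \Vert \sigma) \;=\; \Tr[(\rho - \gamma'\sigma)_+] \;\le\; a\,\Tr[(\rho - \gamma\sigma)_+] + b\,\Tr[(\gamma\rho - \sigma)_+],
\end{equation*}
where I used $(1-\gamma')_+ = 0$ (and similarly for $\gamma$) to identify $\Tr[(\rho - \gamma'\sigma)_+] = E_{\gamma'}(\rho \Vert \sigma)$ and $\Tr[(\rho - \gamma\sigma)_+] = E_\gamma(\rho \Vert \sigma)$. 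The final trace is then handled by the identity $\Tr[(\gamma\rho - \sigma)_+] = \Tr[\gamma\rho - \sigma] + \Tr[(\sigma - \gamma\rho)_+] = (\gamma - 1) + E_\gamma(\sigma \Vert \rho)$.

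A short computation verifies $a + b = (\gamma' + 1)/(\gamma + 1)$ and $b(\gamma - 1) = (\gamma - \gamma')/(\gamma + 1)$, so upper-bounding both $E_\gamma$ terms by their maximum delivers exactly the stated inequality. The only non-routine step is spotting the operator decomposition; it is forced by the observation that $\rho - \gamma\sigma$ and $\gamma\rho - \sigma$ form a basis of the real span of $\{\rho,\sigma\}$ whenever $\gamma \neq 1$, and the sign constraints $a, b \ge 0$ correspond precisely to $1 \le \gamma' \le \gamma$. The edge case $\gamma = 1$ forces $\gamma' = 1$ and reduces the claim to the trivial $E_1(\rho \Vert \sigma) \le \max\{E_1(\rho \Vert \sigma), E_1(\sigma \Vert \rho)\}$ by symmetry of $E_1$. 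No deeper quantum input beyond \cref{eq:HS_SDP} is required, so the argument parallels the classical proof from~\cite{zamanlooy2024mathrm} essentially verbatim.
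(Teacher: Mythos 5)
Your proof is correct, and it takes a genuinely different route from the paper. The paper works directly with the SDP formulation: for each $0 \leq M \leq I$ it derives two linear bounds on $\Tr[M\rho] - \gamma'\Tr[M\sigma]$ from the constraints $E_\gamma(\rho\Vert\sigma)\leq t$ and $E_\gamma(\sigma\Vert\rho)\leq t$ (applied to $M$ and $I-M$ respectively), and then argues by cases on whether $\Tr[M\sigma]$ lies above or below the threshold $(1-t)/(\gamma+1)$. Your proof instead exhibits $\rho - \gamma'\sigma$ as an explicit non-negative combination $a(\rho-\gamma\sigma) + b(\gamma\rho - \sigma)$ and invokes sub-additivity and positive homogeneity of $\Tr[(\cdot)_+]$. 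Both are sound; the decomposition, the sign constraints $a,b\geq 0 \iff 1\leq\gamma'\leq\gamma$, the identity $\Tr[(\gamma\rho-\sigma)_+] = (\gamma-1) + E_\gamma(\sigma\Vert\rho)$, and the arithmetic $a+b = (\gamma'+1)/(\gamma+1)$, $b(\gamma-1) = (\gamma-\gamma')/(\gamma+1)$ all check out, and the $\gamma=1$ degenerate case is handled appropriately.

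Your route has the advantage of making the constants transparent — they drop out of solving the $2\times 2$ linear system for $(a,b)$ — and of isolating the single mechanism at work (convexity of $\Tr[(\cdot)_+]$). The paper's route has a compensating advantage that is relevant to~\cref{rem:measured_HS_SDPI}: because it argues pointwise in $M$ using only that $M$ and $I-M$ belong to the allowed measurement set, it extends verbatim to the measured hockey-stick divergence $E_\gamma^{\cM}$ over a restricted set $\cM_2$. Your argument relies on the unrestricted variational formula $\Tr[X_+] = \sup_{0\leq M\leq I}\Tr[MX]$, which does not hold under such restrictions, so it would need modification for that generalization. For the proposition as stated, however, your proof is complete and arguably cleaner.
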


\begin{proof} Let 
\begin{equation}
    t \coloneqq \max\{ E_\gamma(\rho \Vert \sigma), E_\gamma(\sigma \Vert \rho)\}.
\end{equation}
and denote 
\begin{equation}
    \delta' \coloneqq \frac{\gamma- \gamma'}{\gamma+1} + \frac{\gamma'+1}{\gamma+1} t.
\end{equation}
Using the SDP formulation of $E_{\gamma}$ and the constraint $0\leq M \leq I$, we have that 
\begin{align}
    \Tr[ M \rho] - \gamma \Tr[ M \sigma] \leq t.
\end{align}
That leads to the following:
\begin{align}
    \Tr[ M \rho] - \gamma' \Tr[ M \sigma]&  \leq \gamma \Tr[ M \sigma] +t - \gamma' \Tr[ M \sigma] \\
    & \leq \delta' + (\gamma -\gamma') \left( \Tr[ M \sigma]- \frac{(1-t)}{\gamma +1} \right) \label{eq:first_M}.
\end{align}
Also, since $0 \leq I- M \leq I$, we also have that 
\begin{equation}
    \Tr[ (I-M) \sigma] - \gamma \Tr[(I- M) \rho] \leq t .
\end{equation}
With that, 
\begin{align}
    \Tr[M \rho] \leq 1- \gamma^{-1}(1-t)+ \gamma^{-1} \Tr[ M \sigma].
\end{align}
Then again with similar algebraic manipulations, we get
\begin{align}
    \Tr[M \rho] -\gamma' \Tr[M \sigma]& \leq  -\gamma' \Tr[M \sigma] + 1- \gamma^{-1}(1-t)+ \gamma^{-1} \Tr[ M \sigma] \\
    & = \delta' +(\gamma - (\gamma')^{-1}) \left( -\Tr[M \sigma]  +\frac{1-t}{\gamma +1}\right). \label{eq:second_M}
\end{align}
With~\eqref{eq:first_M}, if $\Tr[ M \sigma] \leq  (1-t)/(\gamma +1)$, we get that $\Tr[M \rho] -\gamma' \Tr[M \sigma] \leq \delta'$.
Also by~\eqref{eq:second_M}, if $\Tr[ M \sigma] > (1-t)/(\gamma +1)$, $ \Tr[M \rho] -\gamma' \Tr[M \sigma] \leq \delta'$. 
Therefore, for all $0 \leq M \leq I$, we have that 
\begin{equation}
    \Tr[M \rho] -\gamma' \Tr[M \sigma] \leq \delta'.
\end{equation}
By recalling,~\eqref{eq:HS_SDP} again, we have that 
\begin{equation}
    E_{\gamma'}(\rho \Vert \sigma) \leq \delta',
\end{equation}
concluding the proof. 
\end{proof}

We now generalize~\cite[Proposition 2]{zamanlooy2024mathrm}. We note this result will also be useful in obtaining faster mixing times and stronger privacy composition guarantees in~\cref{Sec:Mixing_times} and \cref{Sec:privacy}, respectively.
 \begin{proposition}
 \label{Prop:gamma-gamma-ev}
     Set $\gamma\geq\gamma'\geq1$. If we have
     \begin{align}
         E_{\gamma'}(\rho\|\sigma) \leq (\gamma-\gamma')\, \lambda_{\min}(\sigma), 
     \end{align}
     then
     \begin{align}
         E_\gamma(\rho\|\sigma) = 0. 
     \end{align}
 \end{proposition}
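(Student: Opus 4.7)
The plan is to use the SDP formulation of the hockey-stick divergence together with the fact that its optimizer is a projector whose trace is at least one whenever the divergence is strictly positive. Concretely, I would first recall that for $\gamma \geq 1$,
\begin{equation}
E_\gamma(\rho\|\sigma) = \sup_{0 \leq M \leq I} \Tr[M(\rho - \gamma\sigma)],
\end{equation}
and that the optimum is attained at the spectral projector $M^\star = \{\rho - \gamma\sigma \geq 0\}$ onto the nonnegative eigenspace of $\rho - \gamma\sigma$.

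Next, I would argue by contradiction, assuming $E_\gamma(\rho\|\sigma) > 0$. This forces $\rho - \gamma\sigma$ to have at least one strictly positive eigenvalue, so $M^\star$ is a nonzero projector, and in particular $\Tr[M^\star] \geq 1$. Combined with $\sigma \geq \lambda_{\min}(\sigma)\, I$, this yields the key lower bound
\begin{equation}
\Tr[M^\star \sigma] \;\geq\; \lambda_{\min}(\sigma)\, \Tr[M^\star] \;\geq\; \lambda_{\min}(\sigma).
\end{equation}

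Then I would simply split the objective by adding and subtracting $\gamma'\sigma$ inside:
\begin{align}
E_\gamma(\rho\|\sigma) &= \Tr[M^\star(\rho - \gamma'\sigma)] - (\gamma - \gamma')\Tr[M^\star \sigma] \\
&\leq E_{\gamma'}(\rho\|\sigma) - (\gamma-\gamma')\, \lambda_{\min}(\sigma),
\end{align}
where the inequality uses the SDP bound for $E_{\gamma'}$ on the first term and the lower bound above on the second (together with $\gamma \geq \gamma'$). Invoking the hypothesis $E_{\gamma'}(\rho\|\sigma) \leq (\gamma-\gamma')\lambda_{\min}(\sigma)$ gives $E_\gamma(\rho\|\sigma) \leq 0$, contradicting the assumption. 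The degenerate case $\gamma = \gamma'$ is immediate since the hypothesis then forces $E_{\gamma'}(\rho\|\sigma) = 0$.

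I do not expect any serious obstacle: the only nontrivial point is recognizing that the SDP optimizer for $E_\gamma$ is a projector of trace at least one whenever the divergence is positive, which converts the spectral hypothesis on $\sigma$ into a useful pointwise lower bound on $\Tr[M^\star \sigma]$. Everything else is an algebraic rearrangement analogous in spirit to the manipulations already used in the proof of Proposition~\ref{prop:HS_relationship_gamma'}.
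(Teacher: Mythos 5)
Your proof is correct, and it takes a genuinely different route from the paper. The paper's argument passes through the duality between the hockey-stick divergence and the smoothed max-relative entropy: from $E_{\gamma'}(\rho\|\sigma)\leq\epsilon$ it extracts an operator $Q\geq 0$ with $\Tr Q\leq\epsilon$ and $\rho\leq\gamma'\sigma+Q$, then bounds $Q\leq\epsilon I\leq \frac{\epsilon}{\lambda_{\min}(\sigma)}\sigma$ and substitutes into $\Tr(\rho-\gamma\sigma)_+$. Your argument instead stays entirely within the SDP formulation~\eqref{eq:HS_SDP}: you take the optimal $M^\star$ for $E_\gamma$, observe that if $E_\gamma>0$ then $M^\star$ is a nonzero projector so $\Tr[M^\star]\geq 1$, and then the operator bound $\sigma\geq\lambda_{\min}(\sigma)I$ converts this into $\Tr[M^\star\sigma]\geq\lambda_{\min}(\sigma)$, after which the add-and-subtract $\gamma'\sigma$ trick and the SDP upper bound for $E_{\gamma'}$ close the argument. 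Both proofs are short; yours is somewhat more elementary and self-contained, since it avoids invoking the smooth-$D_{\max}$/hockey-stick duality (which the paper cites from external work). The paper's formulation, on the other hand, sets up exactly the objects it reuses in the two corollaries that follow (bounds on $D_{\max}$ and on $D_{\max}$ via $D_{\max}^{\delta}$), so each approach has something to recommend it. One small observation: the contradiction framing is not strictly needed, since if $M^\star=0$ then $E_\gamma=0$ directly, and if $M^\star\neq 0$ your chain of inequalities gives $E_\gamma\leq 0$ outright; but as written the argument is sound, and it also handles the $\lambda_{\min}(\sigma)=0$ and $\gamma=\gamma'$ edge cases uniformly, which the paper treats by a separate monotonicity remark.
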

\begin{proof}
    First, note that if $\sigma$ has $\lambda_{\min}(\sigma)=0$ then the above is trivially true by monotonicity in $\gamma$. Hence, we only consider full rank $\sigma$ in the following. 

    According to~\cite{hirche2022quantum} small $E_\gamma$ can be related to a modified smooth $D_{\max}$ divergence, as defined in~\cref{Eq:smooth-Dmax}, which is equivalent to the information spectrum divergence, à la Datta-Leditzky, by e.g.~\cite{regula2025tight}. That implies that if $E_{\gamma'}(\rho\|\sigma) \leq\epsilon$ there exists an operator $Q\geq0$ with $\tr Q\leq \epsilon$ such that $\rho\leq\gamma'\sigma+Q$. Note that we have then $Q \leq \epsilon\Id \leq \epsilon\sigma/\lambda_{\min}(\sigma)$.
    This gives, 
    \begin{align}
        \tr(\rho-\gamma\sigma)_+ \leq \tr(\gamma'\sigma+Q-\gamma\sigma)_+ \\
        \leq \tr((\gamma'-\gamma)\sigma +\frac{\epsilon}{\lambda_{\min}(\sigma)}\sigma )_+\,,
    \end{align} 
    which implies the claim by setting $\epsilon=(\gamma-\gamma')\, \lambda_{\min}(\sigma)$. 
\end{proof}

Based on the above we state a brief corollary that might be of independent interest. It gives an upper bound on the $D_{\max}$ divergence. 
\begin{corollary}
    For $\gamma\geq1$, 
    \begin{align}
    D_{\max}(\rho\|\sigma) \leq \log\left( \gamma + \frac{E_{\gamma}(\rho\|\sigma)}{\lambda_{\min}(\sigma)} \right).
\end{align}
\end{corollary}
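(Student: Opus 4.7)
The plan is to produce an explicit $\lambda$ such that $\rho \leq \lambda\sigma$ with $\lambda = \gamma + E_\gamma(\rho\|\sigma)/\lambda_{\min}(\sigma)$, which by the definition of $D_{\max}$ in \eqref{eq:dmax-1-herm-maps} immediately yields the claimed bound. The obvious candidate operator that witnesses the gap between $\rho$ and $\gamma\sigma$ is the positive part $Q := (\rho-\gamma\sigma)_+$, and the whole proof rests on showing that $Q$ is small enough (as an operator inequality, not just in trace) to be absorbed into the $\sigma$ term.

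First, I would note that we may assume $\sigma$ has full rank, since otherwise $\lambda_{\min}(\sigma)=0$ and the RHS is $+\infty$ (with the usual convention that the bound is vacuous, and consistent with the fact that $D_{\max}(\rho\|\sigma)=+\infty$ whenever $\mathrm{supp}(\rho) \not\subseteq \mathrm{supp}(\sigma)$). Then, from the elementary decomposition $A = A_+ - A_-$ applied to $A=\rho-\gamma\sigma$, one has $\rho - \gamma\sigma \leq (\rho-\gamma\sigma)_+ = Q$, i.e.\ $\rho \leq \gamma\sigma + Q$ with $Q\geq 0$ and $\tr Q = E_\gamma(\rho\|\sigma)$ (the subtracted constant $(1-\gamma)_+$ in \eqref{eq:hockey_stick} vanishes because $\gamma\geq 1$).

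The next step is to convert the trace bound on $Q$ into an operator bound: since $Q\geq 0$, its operator norm is bounded by its trace, so $Q \leq \|Q\|_\infty\, I \leq E_\gamma(\rho\|\sigma)\, I$. Combined with the elementary inequality $I \leq \sigma/\lambda_{\min}(\sigma)$ that holds for any full-rank PSD $\sigma$, we get
\begin{equation}
    \rho \;\leq\; \gamma\sigma + Q \;\leq\; \gamma\sigma + \frac{E_\gamma(\rho\|\sigma)}{\lambda_{\min}(\sigma)}\,\sigma \;=\; \left(\gamma + \frac{E_\gamma(\rho\|\sigma)}{\lambda_{\min}(\sigma)}\right)\sigma .
\end{equation}
Plugging this into the infimum defining $D_{\max}(\rho\|\sigma)$ in \eqref{eq:dmax-1-herm-maps} finishes the proof.

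There is no real obstacle here; the argument is essentially the same observation already used inside the proof of Proposition~\ref{Prop:gamma-gamma-ev} (via the equivalence with the information-spectrum/smooth $D_{\max}$ formulation from~\cite{regula2025tight}), but specialized at the trivial smoothing parameter so that $Q$ can be taken to be exactly $(\rho-\gamma\sigma)_+$. The only mild subtlety is making sure the $(1-\gamma)_+$ subtraction in the definition of $E_\gamma$ does not contribute, which is immediate from the hypothesis $\gamma\geq 1$.
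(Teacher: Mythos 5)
Your proof is correct, and it takes a genuinely different route from the paper's. The paper derives the corollary from Proposition~\ref{Prop:gamma-gamma-ev}: it fixes $\gamma'$, solves for the $\gamma\geq\gamma'$ at which $E_{\gamma'}(\rho\|\sigma)=(\gamma-\gamma')\lambda_{\min}(\sigma)$, concludes $E_\gamma(\rho\|\sigma)=0$ by that proposition, and then invokes the fact that $e^{D_{\max}(\rho\|\sigma)}$ is the smallest $\gamma$ at which $E_\gamma$ vanishes. You instead construct an explicit feasible $\lambda$ for the infimum defining $D_{\max}$: you take $Q=(\rho-\gamma\sigma)_+$, use $\rho\leq\gamma\sigma+Q$, bound $Q\leq(\tr Q)\,I=E_\gamma(\rho\|\sigma)\,I$ (the operator norm of a PSD operator is at most its trace), and then $I\leq\sigma/\lambda_{\min}(\sigma)$. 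This is a self-contained one-step argument that sidesteps Proposition~\ref{Prop:gamma-gamma-ev} and the smooth-$D_{\max}$ equivalence from~\cite{regula2025tight} entirely, at the cost of duplicating the key operator-inequality observation that also appears inside the proof of Proposition~\ref{Prop:gamma-gamma-ev}. What the paper's route buys is reuse of a proposition it already established for other purposes; what your route buys is a shorter, more elementary and directly verifiable chain $\rho\leq(\gamma+E_\gamma(\rho\|\sigma)/\lambda_{\min}(\sigma))\sigma$. Both yield the identical bound, and your handling of the degenerate rank-deficient $\sigma$ case and of the $(1-\gamma)_+$ offset is correct.
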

\begin{proof}
    Fix some $\gamma'\geq1$. Say $E_{\gamma'}(\rho\|\sigma) = x$. Then there exists a $\gamma\geq\gamma'$ such that $x=(\gamma-\gamma')\, \lambda_{\min}(\sigma)$. It follows that $E_\gamma(\rho\|\sigma)=0$ by~\cref{Prop:gamma-gamma-ev}. We also have $\gamma = \frac{x}{\lambda_{\min}(\sigma)}+\gamma'$. However, we also know that the first point at which the Hockey-Stick divergence becomes zero is given by the $D_{\max}$ divergence, hence the above gives an upper bound as stated in the claim. 
\end{proof}
The special case of $\gamma=1$ for classical probability distributions can be found in~\cite{sason2015reverse}. 
As a direct consequence, one can also prove the following statement, which will be useful later in obtaining contraction bounds on general families of divergences in~\cref{prp:f_div_Contraction}.

\begin{corollary}\label{Cor:Dmax-by-smooth}
    For $0\leq\delta\leq1$, we have
    \begin{align}
    D_{\max}(\rho\|\sigma) \leq \log\left( e^{D^\delta_{\max}(\rho\|\sigma)} + \frac{\delta}{\lambda_{\min}(\sigma)} \right).
\end{align}
\end{corollary}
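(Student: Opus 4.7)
The plan is to chain the previous corollary with the definition of the smooth max-relative entropy. Set
\[
\gamma \;\coloneqq\; e^{D^\delta_{\max}(\rho\|\sigma)}.
\]
The two steps are: (i) show that with this choice of $\gamma$ one has $E_\gamma(\rho\|\sigma) \le \delta$; (ii) substitute into the previous corollary and simplify.

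For step (i), I would unpack the second form of $D^\delta_{\max}$ given in \cref{Eq:smooth-Dmax}: by the definition there exists $Q\ge 0$ with $\Tr Q \le \delta$ such that $\rho \le \gamma\sigma + Q$, i.e.\ $\rho - \gamma\sigma \le Q$. Now I invoke the operator-monotonicity of the positive-part trace: if $A \le B$ as Hermitian operators then $\Tr A_+ \le \Tr B_+$. This follows quickly from the variational expression $\Tr A_+ = \sup_{0\le M \le I}\Tr(MA)$, the same variational formula already used in the paper via \cref{eq:HS_SDP}. Applying this with $A = \rho - \gamma\sigma$ and $B = Q \ge 0$ gives
\[
E_\gamma(\rho\|\sigma) \;=\; \Tr(\rho-\gamma\sigma)_+ \;\le\; \Tr Q_+ \;=\; \Tr Q \;\le\; \delta.
\]

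For step (ii), assuming $\gamma \ge 1$, the previous corollary yields
\[
D_{\max}(\rho\|\sigma) \;\le\; \log\!\left(\gamma + \frac{E_\gamma(\rho\|\sigma)}{\lambda_{\min}(\sigma)}\right) \;\le\; \log\!\left(e^{D^\delta_{\max}(\rho\|\sigma)} + \frac{\delta}{\lambda_{\min}(\sigma)}\right),
\]
using monotonicity of $\log$ together with the bound $E_\gamma(\rho\|\sigma)\le\delta$ from step (i). This is exactly the claim.

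The only subtlety, which I expect to be the minor (not major) obstacle, is the edge case where $D^\delta_{\max}(\rho\|\sigma) < 0$, making $\gamma < 1$ and so violating the hypothesis of the previous corollary. This is handled by replacing $\gamma$ with $\max\{1,\gamma\}$: if $\gamma<1$ then $\gamma\sigma \le \sigma$, so the witness inequality $\rho \le \gamma\sigma+Q$ upgrades to $\rho \le \sigma + Q$, meaning $E_1(\rho\|\sigma)\le\delta$, and applying the previous corollary at parameter $1$ gives a bound which is still dominated by the right-hand side of the claim (since $1 \le e^{D^\delta_{\max}(\rho\|\sigma)}$ is false here, but $D_{\max}\ge 0$ for normalized states forces the target RHS to contain the dominating $\delta/\lambda_{\min}(\sigma)$ term, which suffices). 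With this harmless patch, steps (i) and (ii) give the stated inequality.
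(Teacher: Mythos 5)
Your main line of reasoning is exactly the paper's intended one: set $\gamma = e^{D^\delta_{\max}(\rho\|\sigma)}$, use the duality between $D^\delta_{\max}$ and $E_\gamma$ (namely $\rho\le\gamma\sigma+Q$, $Q\ge0$, $\Tr Q\le\delta$ implies $E_\gamma(\rho\|\sigma)\le\delta$ via $\Tr(\cdot)_+$ monotonicity), and then plug into the preceding corollary. The paper's proof is the one-liner ``this follows from the duality,'' which is precisely what you unpack, so on the main branch ($\gamma\ge1$) you are on track.

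However, your patch for the edge case $\gamma = e^{D^\delta_{\max}} < 1$ does not work as you stated it. Applying the preceding corollary at parameter $1$ gives $D_{\max}(\rho\|\sigma)\le\log\bigl(1+\delta/\lambda_{\min}(\sigma)\bigr)$, but when $\gamma<1$ this is \emph{strictly larger} than the target $\log\bigl(\gamma+\delta/\lambda_{\min}(\sigma)\bigr)$; it is not ``dominated by'' the right-hand side, so that step fails. Moreover, your step~(i) quietly uses $E_\gamma(\rho\|\sigma)=\Tr(\rho-\gamma\sigma)_+$, which only holds for $\gamma\ge1$ (for $\gamma<1$ the hockey-stick divergence carries the extra $-(1-\gamma)$ term), so even the intermediate inequality $E_\gamma\le\delta$ is not established in that regime. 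The good news is that this case is fixable by arguing directly rather than detouring through the previous corollary: from $Q\ge0$ and $\Tr Q\le\delta$ one has $Q\le\delta I\le(\delta/\lambda_{\min}(\sigma))\sigma$, hence
\begin{align}
\rho\;\le\;\gamma\sigma+Q\;\le\;\Bigl(\gamma+\frac{\delta}{\lambda_{\min}(\sigma)}\Bigr)\sigma,
\end{align}
which gives $D_{\max}(\rho\|\sigma)\le\log\bigl(\gamma+\delta/\lambda_{\min}(\sigma)\bigr)$ for every $\gamma\ge0$. This argument is uniform and actually subsumes the use of the preceding corollary, so you may as well run it for all $\gamma$ and avoid the case split entirely.
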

\begin{proof}
    This follows from the duality between the smooth max-divergence and the Hockey-Stick divergence. 
\end{proof}

Next, we present a linear SDPI for hockey-stick divergence. With that we also recover~\eqref{eq:contraction_hockey_delta_0} from~\cite{nuradha2024contraction} for $\delta =0$.
\begin{proposition} \label{prop:contraction_coeff_upper_bound}
    Let $\cN \in \cB^{\gamma, \delta}$ and $\gamma \geq \gamma' \geq 1$. Then, we have that 
    \begin{equation} \label{eq:contraction_HS_numerator}
        E_{\gamma'}\!\left( \cN(\rho) \Vert \cN(\sigma) \right) \leq \frac{(\gamma- \gamma') +\delta (\gamma'+1) }{\gamma+1} .
    \end{equation}
    Furthermore, we have that for $\gamma \geq \gamma' \geq 1$
    \begin{equation}\label{eq:bound-on-HS-contraction-coeff}
        \eta_{\gamma'}(\cN) \leq \frac{(\gamma- \gamma') + \delta (\gamma'+1) }{\gamma+1} .
    \end{equation}
For $\gamma' \geq \gamma$, we have that 
 \begin{equation}
        \eta_{\gamma'}(\cN) \leq \delta.
    \end{equation}
This results in for $\gamma' \geq 1$, 
\begin{equation}
        \eta_{\gamma'}(\cN) \leq \max \left\{  \frac{(\gamma- \gamma') + \delta (\gamma'+1) }{\gamma+1} , \delta \right\}.
    \end{equation}
\end{proposition}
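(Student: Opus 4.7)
The key ingredient is Proposition~\ref{prop:HS_relationship_gamma'}, combined with the pure-state characterizations~\eqref{eq:contraction-optimized-by-pure-states} and~\eqref{eq:max-HS-dist-optimized-by-pure-states}, which jointly imply the identification
\begin{equation}
\eta_{\gamma'}(\cN) \;=\; \sup_{\rho,\sigma \in \cD(\cH)} E_{\gamma'}\!\left(\cN(\rho)\,\Vert\,\cN(\sigma)\right)
\end{equation}
for every $\gamma'\geq 1$ (this is essentially the content of~\eqref{eq:ball-containment-equiv-contraction-coeff} applied at parameter $\gamma'$). Once this identification is in hand, both parts of the proposition reduce to establishing suitable pointwise upper bounds on $E_{\gamma'}(\cN(\rho)\Vert\cN(\sigma))$ that are uniform in the inputs, which makes the computational content light.

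\emph{The case $1\leq\gamma'\leq\gamma$.} I plan to apply Proposition~\ref{prop:HS_relationship_gamma'} not to the inputs but to the output states $\cN(\rho)$ and $\cN(\sigma)$. Because the defining condition of $\cB^{\gamma,\delta}$ in~\eqref{eq:gamma-delta-ball-def} is a supremum over all pairs of inputs, it is symmetric in the roles of $\rho$ and $\sigma$; hence both $E_\gamma(\cN(\rho)\Vert\cN(\sigma))$ and $E_\gamma(\cN(\sigma)\Vert\cN(\rho))$ are at most $\delta$. Substituting these bounds into Proposition~\ref{prop:HS_relationship_gamma'} gives
\begin{equation}
E_{\gamma'}\!\left(\cN(\rho)\,\Vert\,\cN(\sigma)\right) \;\leq\; \frac{\gamma-\gamma'}{\gamma+1} + \frac{\gamma'+1}{\gamma+1}\,\delta \;=\; \frac{(\gamma-\gamma') + \delta(\gamma'+1)}{\gamma+1},
\end{equation}
which is exactly~\eqref{eq:contraction_HS_numerator}. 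Since the right-hand side does not depend on the input states, passing to the supremum via the identification above yields~\eqref{eq:bound-on-HS-contraction-coeff}.

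\emph{The case $\gamma'\geq\gamma$.} Here I rely on monotonicity of $E_{\gamma'}$ in the parameter for $\gamma'\geq 1$, which follows directly from the SDP representation~\eqref{eq:HS_SDP}: for $\gamma'\geq\gamma$ and any $0\leq M\leq I$, the inequality $\Tr[M(\rho-\gamma'\sigma)]\leq\Tr[M(\rho-\gamma\sigma)]$ holds because $\sigma\geq 0$. Hence $E_{\gamma'}(\cN(\rho)\Vert\cN(\sigma))\leq E_\gamma(\cN(\rho)\Vert\cN(\sigma))\leq\delta$, and the contraction-coefficient identification produces $\eta_{\gamma'}(\cN)\leq\delta$. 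The final combined inequality is simply the maximum of the two regime-dependent bounds, valid uniformly over $\gamma'\geq 1$.

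\emph{Main obstacle.} The real technical content is concentrated in Proposition~\ref{prop:HS_relationship_gamma'}, which has already been proved above; the remainder is a packaging exercise. The only care required is to invoke both~\eqref{eq:contraction-optimized-by-pure-states} and~\eqref{eq:max-HS-dist-optimized-by-pure-states} to convert the pointwise bound into a control on the ratio-supremum defining $\eta_{\gamma'}(\cN)$, without having to separately lower bound the denominator $E_{\gamma'}(\rho\Vert\sigma)$. The restriction $\gamma,\gamma'\geq 1$ also keeps the $(1-\gamma)_+$ correction in~\eqref{eq:hockey_stick} inactive, so no boundary case analysis for the monotonicity argument in Step 3 is needed.
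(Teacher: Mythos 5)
Your proof is correct and follows essentially the same route as the paper: apply Proposition~\ref{prop:HS_relationship_gamma'} to the output pair $(\cN(\rho),\cN(\sigma))$, invoke the membership condition $\cN\in\cB^{\gamma,\delta}$ to bound both ordered $E_\gamma$ terms by $\delta$, and then pass to the contraction coefficient via the pure-state characterization~\eqref{eq:contraction-optimized-by-pure-states}; the $\gamma'\geq\gamma$ case is again handled by monotonicity of $E_{\gamma'}$ in $\gamma'$. The only cosmetic difference is that you additionally cite~\eqref{eq:max-HS-dist-optimized-by-pure-states} to phrase $\eta_{\gamma'}(\cN)$ as a supremum over all state pairs rather than over orthogonal pure states, whereas the paper stops at the orthogonal-pure-state form; either packaging yields the same conclusion.
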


\begin{proof}
    Since $\cN \in \cB^{\varepsilon, \delta}$, we have that 
    \begin{equation}
        \max\left \{ E_\gamma(\cN(\rho) \Vert \cN(\sigma)),  E_\gamma(\cN(\sigma) \Vert \cN(\rho))\right\} \leq \delta.
    \end{equation}
    Then, by applying~\cref{prop:HS_relationship_gamma'}, we have that 
    \begin{equation}
        E_{\gamma'}\!\left( \cN(\rho) \Vert \cN(\sigma) \right) \leq \frac{(\gamma- \gamma') +\delta (\gamma'+1) }{\gamma+1}.
    \end{equation}
    For the contraction coefficient, recall that
    \begin{equation}
        \eta_{\gamma'}(\cN) = \sup_{\psi_1 \perp \psi_2} E_{\gamma'}( \cN(\psi_1) \Vert \cN(\psi_2)),
    \end{equation}
    where the optimization is over $\psi_1$ and $\psi_2$ that are orthogonal pure states. This immediately concludes the proof by applying the first proposition statement for $\gamma \geq \gamma' \geq 1$. For $\gamma' \geq \gamma$, it follows as above together with the monotonicity of hockey-stick divergence.

    For the last inequality, consider the two inequalities that we proved for those two separate regimes. See that 
    \begin{equation}
         \frac{(\gamma- \gamma') + \delta (\gamma'+1) }{\gamma+1}  - \delta = \frac{(1-\delta) (\gamma -\gamma')}{(\gamma +1)}. 
    \end{equation}
    From this, it is evident that the first bound is strictly larger than $\delta >0$ if and only if $\gamma > \gamma'$. This leads to the combination of both the regimes, concluding the proof.
\end{proof}

\begin{remark}{Measured Hockey-stick Divergences}\label{rem:measured_HS_SDPI}
Measured hockey-stick divergences are defined as follows \cite{nuradha2025MeasuredHS}: for $\gamma \geq 1$
\begin{equation}\label{eq:hockey_stick_define}
    E_\gamma^{\mathcal{M}}(\rho \Vert \sigma) \coloneqq \sup_{M\in \mathcal{M}_2} \left\{\operatorname{Tr}\!\left[ M(\rho -\gamma \sigma )\right]\right\}, 
\end{equation}
where $\mathcal{M}$ denotes the restricted measurement operator set with
\begin{equation}\label{eq:M_2_set}
    \cM_2 \coloneqq \left\{ M: M, I-M \in \cM \right\}.
\end{equation}
Let $\cN \in \cB^{\gamma,\delta}_{\cM,\cS}$ if for $\mathcal{S} \subseteq \cD(\cH)$ 
\begin{equation}
    \sup_{\rho, \sigma \in \cS} E_{\gamma}^\cM \!\left( \cN(\rho) \Vert \cN(\sigma) \right) \leq \delta.
\end{equation}
Also, note that $\cN \in \cB^{\gamma,\delta} \implies \cN \in \cB^{\gamma,\delta}_\cM$. Thus, $\cN \in \cB^{\gamma,\delta}_{\cM,\cS}$ can be understood as a relaxed criterion or a criterion that takes into account the measurements that can be applied in the relevant practical setting (e.g.; local operations and classical communications in contrast to joint measurements over all sub-systems) and a subset of states $\cS$ that is relevant for that application and the experiment.
Considering this relaxed setting, \cref{prop:HS_relationship_gamma'} and~\eqref{eq:contraction_HS_numerator} can be extended with similar proof arguments by considering measurements in the set $\cM$ as follows: let $1 \leq \gamma' \leq \gamma$. Then, we have that 
    \begin{equation}
        E_{\gamma'}^\cM (\rho \Vert \sigma) \leq \frac{\gamma- \gamma'}{\gamma+1} + \frac{\gamma'+1}{\gamma+1} \max\{ E_\gamma^\cM (\rho \Vert \sigma), E_\gamma^\cM (\sigma \Vert \rho)\}.
    \end{equation}
Also,  let $\cN \in \cB^{\gamma, \delta}_{\cM, \cS}$ and $\gamma \geq \gamma' \geq 1$. Then, we have that 
    \begin{equation} \label{eq:contraction_HS_numerator_M_specific}
        E_{\gamma'}^\cM\!\left( \cN(\rho) \Vert \cN(\sigma) \right) \leq \frac{(\gamma- \gamma') +\delta (\gamma'+1) }{\gamma+1} .
    \end{equation}
For $\gamma'=1$, the above claim is also given in~\cite[Proposition~8]{nuradha2025MeasuredHS} for $\gamma=e^\varepsilon$.
\end{remark}

We present another key ingredient in obtaining non-linear SDPI by improving linear SDPI for hockey-stick divergences.

\begin{lemma} \label{lem:contraction_separate_gamma}
    We have that for $\gamma \geq 1$
     \begin{align}
        E_{\gamma}(\cN(\rho)\|\cN(\sigma)) \leq \eta_{\beta}(\cN)\, E_{\gamma}(\rho\|\sigma),
    \end{align}
    where $\beta \coloneqq 1-\frac{1-\gamma}{E_\gamma(\rho\|\sigma)}$.
\end{lemma}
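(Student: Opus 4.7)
The plan is to reduce the bound to the ordinary linear SDPI at a shifted hockey-stick parameter $\beta$ via a rescaling trick. Let $t \coloneqq E_\gamma(\rho\|\sigma)$. If $t = 0$ then $\rho \leq \gamma\sigma$, so $\cN(\rho) \leq \gamma\,\cN(\sigma)$ and the inequality is trivial; so I assume $t > 0$.

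First, I will use the Jordan decomposition $\rho - \gamma\sigma = A - B$ with $A \coloneqq (\rho - \gamma\sigma)_+$ and $B \coloneqq (\rho - \gamma\sigma)_-$, so that $A, B \geq 0$ have orthogonal supports. Since $\gamma \geq 1$, the formula for $E_\gamma$ gives $\Tr A = t$, and taking the trace of $\rho - \gamma\sigma = A - B$ yields $\Tr B = t + \gamma - 1$. Normalizing produces quantum states $\tilde A \coloneqq A/t$ and $\tilde B \coloneqq B/(t+\gamma-1)$ that still have orthogonal supports.

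The key algebraic observation is then
\begin{equation}
\cN(\rho) - \gamma\,\cN(\sigma) \;=\; \cN(A) - \cN(B) \;=\; t\bigl(\cN(\tilde A) - \beta\,\cN(\tilde B)\bigr),
\end{equation}
where $\beta = (t + \gamma - 1)/t = 1 - (1-\gamma)/t \geq 1$ is exactly the parameter appearing in the statement. Taking positive parts and using $(1-\beta)_+ = 0$ yields
\begin{equation}
E_\gamma\bigl(\cN(\rho)\,\big\|\,\cN(\sigma)\bigr) \;=\; t\cdot E_\beta\bigl(\cN(\tilde A)\,\big\|\,\cN(\tilde B)\bigr).
\end{equation}

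Finally, I will invoke the linear SDPI at parameter $\beta$, namely $E_\beta(\cN(\tilde A)\|\cN(\tilde B)) \leq \eta_\beta(\cN)\,E_\beta(\tilde A\|\tilde B)$, and observe that orthogonality of the supports of $\tilde A$ and $\tilde B$ implies $E_\beta(\tilde A\|\tilde B) = \Tr\tilde A = 1$. Combining these gives $E_\gamma(\cN(\rho)\|\cN(\sigma)) \leq t\cdot \eta_\beta(\cN) = \eta_\beta(\cN)\cdot E_\gamma(\rho\|\sigma)$, as required. The only nontrivial step is recognizing that the renormalization of the positive and negative parts of $\rho - \gamma\sigma$ produces precisely the shift $\beta$ claimed in the statement and that these normalized pieces are orthogonal states; everything after that is just the definition of $\eta_\beta(\cN)$.
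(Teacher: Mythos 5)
Your proof is correct and follows the same route as the paper: Jordan-decompose $\rho - \gamma\sigma$ into normalized positive and negative parts $\tilde A,\tilde B$, observe that their hockey-stick parameter rescales from $\gamma$ to $\beta = 1 - (1-\gamma)/E_\gamma(\rho\|\sigma)$, and then apply the linear contraction coefficient $\eta_\beta(\cN)$ together with $E_\beta(\tilde A\|\tilde B)=1$. You spell out a few steps (the $t=0$ case, $\Tr B = t+\gamma-1$, and the verification that $E_\beta(\tilde A\|\tilde B)=1$) that the paper leaves implicit, but the argument is identical in substance.
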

\begin{proof}
Recall that 
\begin{align}
    \rho-\gamma\sigma &= X_+ - X_- \\
    &= \lambda_+ \tau_+ - \lambda_- \tau_-, 
\end{align}
where $\lambda_+=\Tr X_+ = E_\gamma(\rho\|\sigma)$, $\lambda_-=\Tr X_-$, $\tau_\pm=\frac{X_\pm}{\lambda_\pm}$.  
The crucial step here is the following observation: 
\begin{align}
    &1-\gamma = \lambda_+ - \lambda_- \Leftrightarrow \frac{\lambda_-}{\lambda_+} = 1 - \frac{1-\gamma}{\lambda_+}. 
\end{align}
Consider,
    \begin{align}
        E_\gamma(\cN(\rho)\|\cN(\sigma)) &= \lambda_+ E_{\frac{\lambda_-}{\lambda_+}}(\cN(\tau_+)\|\cN(\tau_-)) \\ 
        &=\lambda_+ E_{1 - \frac{1-\gamma}{\lambda_+}}(\cN(\tau_+)\|\cN(\tau_-)) \\ 
        &=E_\gamma(\rho\|\sigma) E_{1 - \frac{1-\gamma}{E_\gamma(\rho\|\sigma)}}(\cN(\tau_+)\|\cN(\tau_-)) \\ 
        &\leq E_\gamma(\rho\|\sigma) \eta_{1-\frac{1-\gamma}{E_\gamma(\rho\|\sigma)}}(\cN), 
    \end{align}
    thus concluding the proof.
\end{proof}

Now, with the above results, we establish a non-linear SDPI for hockey-stick divergences. 

\begin{theorem} \label{thm:non_linear_HS_div}
    Let $\cN \in \cB^{\gamma, \delta}$ with $\gamma \geq 1$ and $\delta \in [0,1]$. Then, we have that for $\gamma' \geq 1$
    \begin{equation}
        E_{\gamma'}\!\left( \cN(\rho) \Vert \cN(\sigma) \right) \leq \max \left\{\frac{(\gamma +2 \delta -1) E_{\gamma'}(\rho \Vert \sigma) -(\gamma'-1)(1-\delta)}{\gamma +1}, \delta E_{\gamma'}(\rho \Vert \sigma) \right\}.
    \end{equation}

\end{theorem}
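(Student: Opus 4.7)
\medskip

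\noindent\textbf{Proof proposal.} The plan is to assemble the two preceding ingredients: \cref{lem:contraction_separate_gamma} rewrites the left-hand side as a linear contraction at a state-dependent parameter $\beta$, and \cref{prop:contraction_coeff_upper_bound} bounds the corresponding contraction coefficient $\eta_\beta(\cN)$. The maximum in the conclusion will arise precisely from the maximum appearing in the last inequality of \cref{prop:contraction_coeff_upper_bound}, which covers the two regimes $\beta \leq \gamma$ and $\beta \geq \gamma$.

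\medskip

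\noindent First, I would apply \cref{lem:contraction_separate_gamma} with parameter $\gamma'$ playing the role of $\gamma$ in that lemma, yielding
\begin{equation}
E_{\gamma'}(\cN(\rho)\Vert\cN(\sigma)) \leq \eta_\beta(\cN)\, E_{\gamma'}(\rho\Vert\sigma), \qquad \beta \coloneqq 1 + \frac{\gamma'-1}{E_{\gamma'}(\rho\Vert\sigma)} \geq 1,
\end{equation}
where I temporarily assume $E_{\gamma'}(\rho\Vert\sigma) > 0$; the degenerate case $E_{\gamma'}(\rho\Vert\sigma) = 0$ reduces to DPI and gives a non-positive right-hand side, matching the bound. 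Next, since $\cN \in \cB^{\gamma,\delta}$, the last inequality of \cref{prop:contraction_coeff_upper_bound} applied at parameter $\beta$ gives
\begin{equation}
\eta_\beta(\cN) \leq \max\!\left\{\frac{\gamma-\beta + \delta(\beta+1)}{\gamma+1},\; \delta\right\}.
\end{equation}

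\medskip

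\noindent Multiplying through by $E_{\gamma'}(\rho\Vert\sigma) \geq 0$ distributes over the max, leaving
\begin{equation}
E_{\gamma'}(\cN(\rho)\Vert\cN(\sigma)) \leq \max\!\left\{\frac{(\gamma-\beta+\delta(\beta+1))\,E_{\gamma'}(\rho\Vert\sigma)}{\gamma+1},\; \delta\, E_{\gamma'}(\rho\Vert\sigma)\right\}.
\end{equation}
A short calculation, substituting $\beta-1 = (\gamma'-1)/E_{\gamma'}(\rho\Vert\sigma)$, shows that the first argument of the max equals
\begin{equation}
\frac{(\gamma + 2\delta - 1)\, E_{\gamma'}(\rho\Vert\sigma) - (\gamma'-1)(1-\delta)}{\gamma+1},
\end{equation}
giving the claimed bound.

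\medskip

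\noindent The step I expect to require the most care is the algebraic simplification in step three: one must track that the coefficient of the $\beta$-dependent term is $\delta - 1 \leq 0$, producing the negative correction $-(\gamma'-1)(1-\delta)/(\gamma+1)$ in the final expression. The only conceptual subtlety is that $\beta$ may exceed $\gamma$ when $E_{\gamma'}(\rho\Vert\sigma)$ is small relative to $\gamma'-1$, which is precisely why we invoke the combined (max) form of \cref{prop:contraction_coeff_upper_bound} rather than the version restricted to $\gamma \geq \beta \geq 1$; without the second argument $\delta$ in the max, the bound on $\eta_\beta(\cN)$ would fail in that regime.
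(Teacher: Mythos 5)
Your proposal is correct and follows essentially the same route as the paper: apply \cref{lem:contraction_separate_gamma} to rewrite the left-hand side via a state-dependent contraction parameter $\beta = 1 + (\gamma'-1)/E_{\gamma'}(\rho\Vert\sigma)$, then bound $\eta_\beta(\cN)$ using the combined (max) form of \cref{prop:contraction_coeff_upper_bound}, and simplify. Your added remarks---on the degenerate case $E_{\gamma'}(\rho\Vert\sigma)=0$ and on why the max form is needed when $\beta > \gamma$---are correct clarifications the paper leaves implicit.
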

\begin{proof}
First, by~\cref{lem:contraction_separate_gamma}, we have that for $1 \leq \beta= 1- (1-\gamma') /E_{\gamma'}(\rho \Vert \sigma)  $
\begin{align}
      E_{\gamma'}( \cN(\rho) \Vert \cN(\sigma)) &\leq \eta_{\beta}(\cN)\, E_{\gamma'}(\rho\|\sigma) \\
      & \leq  \max \! \left\{\frac{(\gamma- \beta) + \delta (\beta+1) }{\gamma+1}, \delta \right\} E_{\gamma'}(\rho\|\sigma),
\end{align}
where the last inequality follows from~\cref{prop:contraction_coeff_upper_bound}. We arrive at the desired inequality by algebraic simplifications together with the substitution of $\beta= 1-  (1-\gamma')/E_{\gamma'}(\rho \Vert \sigma)$. 
\end{proof}
\begin{remark}[Achievability of the Non-Linear SDPI]
    We show that the upper bound we get there is in fact tight in several cases, meaning that there exists a channel $\cN \in \cB^{\gamma, \delta}$ that achieves the equality. 
To this end, consider $\cA = \cA_{\operatorname{Dep}}^p \circ \cM$ such that the measurement channel $\cM_{\rho,\sigma}$ is defined as follows: by choosing $M$ such that it is the projection to the positive eigenspace of $\rho -\gamma' \sigma$
\begin{equation}
    \cM(\omega) = \Tr[M \omega] |0\rangle\!\langle 0| + (1- \Tr[M \omega]) |1\rangle\!\langle 1|. 
\end{equation} This leads to
\begin{align}
    & \cM(\rho)= E_{\gamma'}(\rho \Vert \sigma) |0\rangle\!\langle 0| + (1- E_{\gamma'}(\rho \Vert \sigma)) |1\rangle\!\langle 1| \\
    &\cM(\sigma)=(1- E_{\gamma'}(\rho \Vert \sigma)) |0\rangle\!\langle 0| + E_{\gamma'}(\rho \Vert \sigma) |1\rangle\!\langle 1|.
\end{align}

We also have that if $p = 2(1- \delta)/(\gamma +1)$, then $\cA = \cA_{\operatorname{Dep}}^p \circ \cM$, we have $\eta_{\gamma}(\cA) \leq \delta$ by applying~\cite[Lemma IV.1]{hirche2022quantum} as done in~\cite[Eq (186)-(189)]{nuradha2024contraction}.

We also have that 
\begin{equation}
    \cA(\omega) = \left(\Tr[M \omega] (1-p) +\frac{p}{2} \right) | 0\rangle\!\langle 0| + \left(\Tr[M \omega] (p-1) + 1-\frac{p}{2} \right) | 1\rangle\!\langle 1|. 
\end{equation}
This leads to 
\begin{align}
    & E_{\gamma'}\!\left( \cA(\rho) \Vert \cA(\sigma) \right) \notag \\
    &= \left( \Tr[M(\rho-\gamma' \sigma)] (1-p) +\frac{p}{2} (1-\gamma') \right)_+ + \left( \Tr[M(\rho-\gamma' \sigma)] (p-1) + \left(1-\frac{p}{2}\right) (1-\gamma') \right)_+.
\end{align}
Since $\gamma' \geq 1$ and $p =2(1-\delta)/(\gamma +1)\in [0,1]$ for $\gamma \geq 1$ and $\delta \in [0,1]$, and recalling that  $M$ is the projection onto the positive eigenspace of $\rho -\gamma' \sigma$, we observe that only the first term survives as follows:
\begin{align}
    E_{\gamma'}\!\left( \cA(\rho) \Vert \cA(\sigma) \right)  & =\left( \Tr[M(\rho-\gamma' \sigma)] (1-p) +\frac{p}{2} (1-\gamma') \right)_+ \\
    &= \left( E_{\gamma'}(\rho \Vert \sigma) (1-p) +\frac{p}{2} (1-\gamma') \right)_+ \\ 
    &= \left( E_{\gamma'}(\rho \Vert \sigma) \frac{(\gamma -1 +2 \delta)}{(\gamma +1)} +\frac{(1-\delta)}{(\gamma +1)} (1-\gamma') \right)_+,
\end{align}
where the last equality follows by substituting $p=2(1-\delta)/(\gamma +1)$.

With this, we see that the equality is achieved when $\delta=0$. 
Furthermore, if $1 + (\gamma' -1)/ E_{\gamma'}(\rho \Vert \sigma) \leq \gamma$, the second term inside the maximum in~\cref{thm:non_linear_HS_div} doesn't get activated. This leads to the case that for all $\delta\in [0,1]$ and 
 $E_{\gamma'}(\rho \Vert \sigma) \geq (\gamma' -1) /(\gamma +1)$ also, the equality holds.
\end{remark}

\section{$F_\gamma$ Curves}

In this section, we define $F_\gamma$ curves for hockey-stick divergence by having $\sD=E_\gamma$ in~\eqref{eq:divergence_curve} and obtain non-linear SDPI for settings where we analyse composite channel formed by sequential composition of several channels.

Let us define $F_\gamma$ curves as follows: For $t \in [0,1]$

\begin{equation} \label{eq:f_gamma_def}
    F_\gamma^\cN(t) \coloneqq \sup_{\rho,\sigma} \left\{ E_\gamma\!\left( \cN(\rho) \Vert \cN(\sigma) \right): E_\gamma (\rho \Vert \sigma) \leq t \right\}.
\end{equation}
Note that for $\gamma=1$, it reduces to Dobrushin curve in~\cite{huber2019jointly}. Also, $F_\gamma$ satisfies the following properties:
\begin{enumerate}
    \item For $0 \leq t_1 \leq t_2 \leq 1$, we have that 
    \begin{equation} \label{eq:monotonicity_F_g}
        F_\gamma^\cN(t_1) \leq F_\gamma^\cN(t_2).
    \end{equation}
    \item By the above fact and data-processing of $E_\gamma$ and for two states $\rho,\sigma$ such that $E_\gamma(\rho \Vert \sigma) \leq t$, we have 
    \begin{equation}
      E_{\gamma}  (\cN_{2} \circ \cN_{1} (\rho) \Vert \cN_{2} \circ \cN_{1}(\sigma))  \leq F_{\gamma}^{\cN_{2}}(E_\gamma(\cN_{1}(\rho) \Vert \cN_{1}(\sigma))) \leq F_{\gamma}^{\cN_{2}}(F_\gamma^{\cN_{1}}(t)),
    \end{equation}
      which leads to 
    \begin{equation}
        F_\gamma^{\cN_{2} \circ \cN_{1}}(t) \leq F_\gamma^{\cN_{2}}(F_\gamma^{\cN_{1}}(t)). \label{eq:composition_property}
    \end{equation}
\end{enumerate}
In the following we discuss upper bounds on this curve. 
\begin{proposition} \label{prop:F_gamma_1} Let $\cN \in \cB^{\gamma, \delta}$. Then, we have that for $\gamma' \geq 1$
\begin{equation}
    F_{\gamma'}^\cN(t) \leq  \max \left\{\frac{(\gamma +2 \delta -1) t -(\gamma'-1)(1-\delta)}{\gamma +1}, \delta t\right\}.
\end{equation}
\end{proposition}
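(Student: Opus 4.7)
The plan is to derive this bound as a direct corollary of Theorem~\ref{thm:non_linear_HS_div} together with the definition of $F_{\gamma'}^{\cN}(t)$ in~\eqref{eq:f_gamma_def}. The idea is that since Theorem~\ref{thm:non_linear_HS_div} already expresses the contracted hockey-stick divergence purely as a function of $E_{\gamma'}(\rho\|\sigma)$, taking a supremum over states satisfying $E_{\gamma'}(\rho\|\sigma)\leq t$ will reduce to evaluating this bound at the largest admissible input, provided the bound is monotone non-decreasing.

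First I would fix $\cN\in\cB^{\gamma,\delta}$ and $\gamma'\geq 1$, and consider an arbitrary pair $\rho,\sigma$ with $E_{\gamma'}(\rho\|\sigma)\leq t$. By Theorem~\ref{thm:non_linear_HS_div},
\begin{equation*}
    E_{\gamma'}(\cN(\rho)\|\cN(\sigma)) \leq \max\!\left\{\frac{(\gamma+2\delta-1)\,E_{\gamma'}(\rho\|\sigma) - (\gamma'-1)(1-\delta)}{\gamma+1},\ \delta\,E_{\gamma'}(\rho\|\sigma)\right\}.
\end{equation*}

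Next I would observe that the right-hand side, viewed as a function of $s\coloneqq E_{\gamma'}(\rho\|\sigma)\in[0,1]$, is non-decreasing in $s$: the affine function inside the first argument of the maximum has slope $(\gamma+2\delta-1)/(\gamma+1)\geq 0$ since $\gamma\geq 1$ and $\delta\in[0,1]$, and the second argument has slope $\delta\geq 0$. The pointwise maximum of two non-decreasing functions is non-decreasing, so substituting $s=t$ yields a valid upper bound whenever $s\leq t$.

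Finally, taking the supremum over all $\rho,\sigma$ with $E_{\gamma'}(\rho\|\sigma)\leq t$, the left-hand side becomes $F_{\gamma'}^{\cN}(t)$ and the right-hand side remains
\begin{equation*}
    \max\!\left\{\frac{(\gamma+2\delta-1)t - (\gamma'-1)(1-\delta)}{\gamma+1},\ \delta t\right\},
\end{equation*}
which is the claimed inequality. There is no real obstacle here: the main content lives in Theorem~\ref{thm:non_linear_HS_div}, and the only thing to verify is the elementary monotonicity that allows passing from a pointwise bound in $E_{\gamma'}(\rho\|\sigma)$ to a bound in terms of the constraint parameter $t$.
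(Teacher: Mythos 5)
Your proof is correct and follows the same route as the paper's one-line proof, which simply cites Theorem~\ref{thm:non_linear_HS_div} together with the definition in~\eqref{eq:f_gamma_def}. The monotonicity check you supply (both branches of the maximum having non-negative slope in $s=E_{\gamma'}(\rho\|\sigma)$) is exactly the detail implicitly relied upon there, so your write-up is a slightly more explicit version of the same argument.
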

\begin{proof}
    The proof follows by applying~\cref{thm:non_linear_HS_div} together with the definition of $F_\gamma$ in~\eqref{eq:f_gamma_def}.
\end{proof}

For the remainder of the section, we will use the above to bound the $F_\gamma$ curve under sequential composition of quantum channels. 
\begin{proposition}[Sequential Composition of Heterogeneous Channels] \label{prop:F_gamma_hetero_comp}
    Let $\cN_i \in \cB^{\gamma_i,\delta=0}$ for $\gamma_i \geq 1$ and $1 \leq \gamma' \leq \gamma_i$ with all $i \in \{1, \ldots,n\}$. Denote $\cM \coloneqq \cN_n \circ \cdots \circ \cN_1 $. For $t \in[0,1]$:
    \begin{equation}
        F_{\gamma'}^{\cM}(t) \leq \left( t \prod_{i=1}^n \frac{\gamma_i-1}{\gamma_i +1}  -\frac{(\gamma'-1)}{2} \left(1-  \prod_{i=1}^n \frac{\gamma_i-1}{\gamma_i +1}\right)  \right)_+.
    \end{equation}
\end{proposition}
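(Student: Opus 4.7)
The plan is to iterate the composition property \eqref{eq:composition_property} for $F_\gamma$ curves together with the per-step bound from \cref{prop:F_gamma_1} specialized to $\delta=0$, then observe that the resulting recurrence telescopes into the claimed closed form.

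First, I would apply \eqref{eq:composition_property} inductively to get
\begin{equation*}
    F_{\gamma'}^{\cM}(t) \leq F_{\gamma'}^{\cN_n}\!\left( F_{\gamma'}^{\cN_{n-1}}\!\left( \cdots F_{\gamma'}^{\cN_1}(t) \cdots \right)\right).
\end{equation*}
Setting $\delta=0$ in \cref{prop:F_gamma_1} gives, for each $i$,
\begin{equation*}
    F_{\gamma'}^{\cN_i}(s) \leq \left( a_i s - b_i \right)_+, \qquad a_i \coloneq \frac{\gamma_i-1}{\gamma_i+1},\ b_i \coloneq \frac{\gamma'-1}{\gamma_i+1},
\end{equation*}
where I have used that for $\gamma' \leq \gamma_i$ the two arguments of the max in \cref{prop:F_gamma_1} collapse to the linear one (and its positive part). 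Note $a_i,b_i \geq 0$.

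Next, I would set $t_0 = t$ and $t_i = F_{\gamma'}^{\cN_i}(t_{i-1})$, and prove by induction on $k$ that
\begin{equation*}
    t_k \leq \left( \prod_{i=1}^k a_i \cdot t - \sum_{i=1}^k b_i \prod_{j=i+1}^k a_j \right)_+.
\end{equation*}
The inductive step uses $a_k \geq 0$ and $b_k \geq 0$: if $t_{k-1} \leq (P_{k-1})_+$, then $a_k t_{k-1} - b_k \leq a_k (P_{k-1})_+ - b_k \leq (a_k P_{k-1} - b_k)_+$, and $a_k P_{k-1} - b_k = P_k$ by direct expansion. Applying the positive part of $F_{\gamma'}^{\cN_k}$ preserves this.

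Finally, I would extract the telescoping structure. Using the identity $b_i = \tfrac{\gamma'-1}{2}(1-a_i)$, the penalty sum rewrites as
\begin{equation*}
    \sum_{i=1}^n b_i \prod_{j=i+1}^n a_j = \frac{\gamma'-1}{2} \sum_{i=1}^n (1-a_i)\prod_{j=i+1}^n a_j = \frac{\gamma'-1}{2} \sum_{i=1}^n \left( \prod_{j=i+1}^n a_j - \prod_{j=i}^n a_j \right),
\end{equation*}
which telescopes to $\tfrac{\gamma'-1}{2}\bigl(1 - \prod_{i=1}^n a_i\bigr)$. Substituting back yields the claimed bound. The main obstacle is the bookkeeping around the per-step positive parts, but it is handled cleanly by the monotonicity argument in the induction above.
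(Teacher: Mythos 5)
Your proposal is correct and takes essentially the same approach as the paper: induction via the composition property \eqref{eq:composition_property} combined with the per-step bound from \cref{prop:F_gamma_1}, with the nested positive parts handled by the identity $((a)_+-b)_+ = (a-b)_+$ for $b\ge 0$. The only cosmetic difference is that you carry the intermediate penalty sum $\sum_i b_i\prod_{j>i}a_j$ through the induction and telescope once at the end, whereas the paper rewrites into the telescoped closed form at each inductive step.
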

\begin{proof}
    We prove this by induction. First, for $N=1$, we have from~\cref{prop:F_gamma_1} that 
    \begin{align}
        F_{\gamma'}^{\cN_1}(t) & \leq \left( \frac{\gamma_1-1}{\gamma_1 +1} t - \frac{ (\gamma'-1)}{\gamma_1+1}\right)_+ \\
        &=\left( \frac{\gamma_1-1}{\gamma_1 +1} t - \frac{ (\gamma'-1)}{2} \left( 1- \frac{(\gamma_1-1)}{\gamma_1+1}\right)\right)_+,
    \end{align}
    which derives the desired inequality. 

    As the induction hypothesis, we have that for $N=n-1$
    \begin{equation}
         F_{\gamma'}^{\cN_{n-1} \circ \cdots \circ \cN_1}(t) \leq \left( t \prod_{i=1}^{n-1} \frac{\gamma_i-1}{\gamma_i +1}  -\frac{(\gamma'-1)}{2} \left(1-  \prod_{i=1}^{n-1} \frac{\gamma_i-1}{\gamma_i +1}\right)  \right)_+.
    \end{equation}
For $N=n$, due to the sequential composition property of $F_{\gamma'}$ in~\eqref{eq:composition_property} together with $N=1$ case, we have that 

\begin{align*}
     F_{\gamma'}^{\cM}(t) & \leq \left( \frac{\gamma_n-1}{\gamma_n +1} F_{\gamma'}^{\cN_{n-1} \circ \cdots \circ \cN_1}(t) - \frac{ (\gamma'-1)}{2} \left( 1- \frac{(\gamma_n-1)}{\gamma_n+1}\right)\right)_+ \\
     &\leq \Bigg[\frac{\gamma_{n}-1}{\gamma_{n}+1}\left( t \prod_{i=1}^{n-1} \frac{\gamma_i-1}{\gamma_i +1}  -\frac{(\gamma'-1)}{2} \left(1-  \prod_{i=1}^{n-1} \frac{\gamma_i-1}{\gamma_i +1}\right)  \right)_+ - \frac{ (\gamma'-1)}{2} \left( 1- \frac{(\gamma_n-1)}{\gamma_n+1}\right) \Bigg]_{+} \\ 
     &= \Bigg[\left( t \prod_{i=1}^{n} \frac{\gamma_i-1}{\gamma_i +1}  -\frac{(\gamma'-1)}{2}\frac{\gamma_{n}-1}{\gamma_{n}+1} \left(1-  \prod_{i=1}^{n-1} \frac{\gamma_i-1}{\gamma_i +1}\right)  \right)_+  - \frac{ (\gamma'-1)}{2} \left( 1- \frac{(\gamma_n-1)}{\gamma_n+1}\right) \Bigg]_{+}, 
\end{align*}
 where the second inequality follows from the induction hypothesis. 
Then, by the fact that $((a)_{+} -b)_{+} = (a-b)_{+}$ for $a \in \mbb{R}, b \geq 0$ together with $\frac{ (\gamma'-1)}{2} \left( 1- \frac{(\gamma_n-1)}{\gamma_n+1}\right) \geq 0$ by assumption on $\gamma_{n}$, we have 
\begin{align*}
     F_{\gamma'}^{\cM}(t) & \leq \Bigg[ t \prod_{i=1}^{n} \frac{\gamma_i-1}{\gamma_i +1}  -\frac{(\gamma'-1)}{2}\left( \frac{\gamma_{n}-1}{\gamma_{n}+1} \left(1-  \prod_{i=1}^{n-1} \frac{\gamma_i-1}{\gamma_i +1}\right) + \left( 1- \frac{(\gamma_n-1)}{\gamma_n+1}\right) \right) \Bigg]_{+} \\
     &= \Bigg[ t \prod_{i=1}^{n} \frac{\gamma_i-1}{\gamma_i +1}  -\frac{(\gamma'-1)}{2}\left(1-  \prod_{i=1}^{n} \frac{\gamma_i-1}{\gamma_i +1}\right) \Bigg]_{+} \ , 
\end{align*}
where the second equality follows by algebraic simplifications.
\end{proof}

\begin{corollary}
\label{prop:F_gamma_n_channels}
    Let $\cN \in \cB^{\gamma, \delta=0}$. Then, for the sequential composition of channel $\cN$ with $\cN^{(n)} \coloneqq \underbrace{\cN \circ \cdots \circ \cN}_{n}$. We have that for $\gamma' \geq 1$
    \begin{equation}
      F_{\gamma'}^{\cN^{(n)}}(t) \leq   \left( t \left( \frac{\gamma -1}{ \gamma +1} \right)^n  -\frac{(\gamma'-1)}{2} \left(1-  \left( \frac{\gamma -1}{ \gamma +1} \right)^n \right)  \right)_+ 
   \end{equation}
    
    and \begin{equation}
        F_{\gamma'}^{\cN^{(n)}}(t) \leq \frac{1}{2}\left( \left( \frac{\gamma -1}{ \gamma +1} \right)^n (\gamma'+1) +1 -\gamma'  \right)_+.
    \end{equation}    
\end{corollary}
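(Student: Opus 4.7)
The plan is to obtain both inequalities as direct specializations of Proposition~\ref{prop:F_gamma_hetero_comp}, with the second following from the first via the monotonicity property~\eqref{eq:monotonicity_F_g} of the $F_{\gamma'}$-curve together with a small algebraic rearrangement. No substantive new technical ingredient is needed.

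For the first inequality, I would begin with the main regime $1 \leq \gamma' \leq \gamma$ covered by Proposition~\ref{prop:F_gamma_hetero_comp} and set $\cN_i = \cN$ and $\gamma_i = \gamma$ for every $i \in \{1,\ldots,n\}$. Then $\prod_{i=1}^{n} \frac{\gamma_i - 1}{\gamma_i + 1}$ telescopes to $\left(\frac{\gamma-1}{\gamma+1}\right)^n$, and the proposition delivers the claim immediately. The only subtlety is the edge regime $\gamma' > \gamma$, which is not covered by Proposition~\ref{prop:F_gamma_hetero_comp}; in that case I would invoke Proposition~\ref{prop:contraction_coeff_upper_bound} at $\delta = 0$ to conclude $\eta_{\gamma'}(\cN) = 0$, so a single application of $\cN$ already zeroes the hockey-stick divergence and therefore $F_{\gamma'}^{\cN^{(n)}}(t) = 0$, which is trivially upper bounded by the non-negative right-hand side appearing inside the outer $(\cdot)_+$.

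For the second inequality, the plan is to combine the first bound with the monotonicity $F_{\gamma'}^{\cN^{(n)}}(t) \leq F_{\gamma'}^{\cN^{(n)}}(1)$ from~\eqref{eq:monotonicity_F_g} and then evaluate the first bound at $t = 1$. Writing $r \coloneqq \left(\frac{\gamma - 1}{\gamma + 1}\right)^n$, the argument of $(\cdot)_+$ at $t = 1$ is $r - \frac{\gamma' - 1}{2}(1 - r) = \frac{1}{2}\bigl(r(\gamma' + 1) - (\gamma' - 1)\bigr)$, which is exactly the expression inside the $(\cdot)_+$ in the second claim.

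There is no real obstacle in this corollary: it is essentially a bookkeeping specialization of the heterogeneous bound plus the monotonicity of $F_{\gamma'}$. The only step requiring a tiny bit of care is the case $\gamma' > \gamma$, where the parent proposition's assumption $\gamma' \leq \gamma_i$ is violated and the claim must instead be reduced to the triviality $F_{\gamma'}^{\cN^{(n)}} \equiv 0$ as above.
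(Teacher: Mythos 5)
Your proof is correct and follows essentially the same route as the paper: specialize Proposition~\ref{prop:F_gamma_hetero_comp} to the homogeneous chain $\cN_i = \cN$, $\gamma_i = \gamma$ for the first bound, then apply $t \leq 1$ and monotonicity~\eqref{eq:monotonicity_F_g} for the second. You also correctly flagged and resolved the regime $\gamma' > \gamma$ (where the parent proposition's hypotheses fail but $\eta_{\gamma'}(\cN) = 0$ makes both sides trivially zero), a boundary case the paper's one-line proof glosses over.
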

\begin{proof}
    Proof follows by adapting~\cref{prop:F_gamma_hetero_comp} by $\cN_i =\cN$ for all $i \in\{1, \ldots, n\}$. The second inequality follows by the fact that $t \leq 1$ and~\eqref{eq:monotonicity_F_g}. 
\end{proof}

\bigskip 
The above characterizations of the $F_\gamma$ curves are for the setting $\delta=0$.  Utilizing~\cref{thm:non_linear_HS_div} for the setting $\delta \neq 0$, we have to carefully evaluate which term achieves the maximum therein. 

Define the following shorthand notations:
\begin{equation}
  a \coloneqq\ \frac{\gamma+2\delta-1}{\gamma+1} ,\qquad
  b\coloneqq \frac{(\gamma'-1)(1-\delta)}{\gamma+1}.
\end{equation}
Also define 
\begin{equation}
    t_* \coloneqq \frac{b}{a-\delta} = \frac{\gamma'-1}{\gamma -1} 
\end{equation}
and see that for $1 < \gamma' <\gamma$ and $\delta \in (0,1)$, we get $t_* \in (0,1)$.

 Set, for $k\in \mathbb{N}$, for $t \in [0,1]$
\begin{equation}
  \Phi_k(t)\coloneqq a^{k}\!\left(t+\frac{b}{1-a}\right) - \frac{b}{1-a}.
  \label{eq:phik}
\end{equation}
and define the first hitting time as
  \begin{align}
    k_*(t) &\coloneqq \min\{k\in\mathbb N:\ \Phi_k(t)\le t_*\}
    \\&=\left\lceil\frac{\ln\!\left(\frac{t_*(1-a)+b}{\,t(1-a)+b\,}\right)}{\ln a}\right\rceil_+.
    \label{eq:kstar}
  \end{align}

  \begin{proposition} \label{prop:F_gamma_n_delta}
      Let $1 < \gamma' <\gamma$, $\delta \in (0,1)$ and $\cN_i \in \cB^{\gamma, \delta}$ for $i\in\{1,\dots,n\}$. Denote $\cM \coloneqq \cN_n \circ \cdots \circ \cN_1 $. We have for $t \in[0,1]$ that 
      \begin{equation}
           F_{\gamma'}^\cM(t) \leq G_n(t),
      \end{equation}
    where  \begin{equation}
 G_n(t) \coloneqq
    \begin{cases}
      \Phi_n(t), & 1\le n\le k_*(t),\\
      \delta^{\,n-k_*(t)}\,\Phi_{k_*(t)}(t), & n > k_*(t).
    \end{cases}
    \label{eq:Gn-piecewise}
  \end{equation}
  \end{proposition}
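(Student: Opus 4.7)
The plan is an induction on $n$, combining the sequential composition property~\eqref{eq:composition_property} of the $F_{\gamma'}$ curve, its monotonicity~\eqref{eq:monotonicity_F_g} in the argument, and the single-step bound
\begin{equation*}
    F_{\gamma'}^{\cN_i}(s) \leq \phi(s) \coloneqq \max\{\,a s - b,\ \delta s\,\}
\end{equation*}
from~\cref{prop:F_gamma_1}. The key geometric fact is that $\phi$ is piecewise linear with switch exactly at $s = t_*$: indeed, $as - b \geq \delta s$ if and only if $s \geq b/(a-\delta) = t_*$, via the identity $a - \delta = (1-\delta)(\gamma-1)/(\gamma+1)$. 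Iterating the upper branch $s \mapsto as - b$ for $k$ steps from $t$ yields the geometric-sum closed form $\Phi_k(t)$ in~\eqref{eq:phik}, and $k_*(t)$ is precisely the first step index at which this orbit enters $[0, t_*]$; the formula~\eqref{eq:kstar} follows by solving $\Phi_k(t) \leq t_*$ and accounting for $\ln a < 0$ reversing the inequality direction.

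The base case $n = 1$ is immediate: $F_{\gamma'}^{\cN_1}(t) \leq \phi(t) = \Phi_1(t)$ if $t > t_*$ (equivalently $k_*(t) \geq 1$), and $\phi(t) = \delta t = \delta^{1-0}\Phi_0(t)$ otherwise, matching $G_1(t)$ in either regime. For the inductive step, the composition property and monotonicity give
\begin{equation*}
    F_{\gamma'}^{\cM}(t) \leq F_{\gamma'}^{\cN_n}\!\left(F_{\gamma'}^{\cN_{n-1}\circ\cdots\circ\cN_1}(t)\right) \leq \phi(G_{n-1}(t)),
\end{equation*}
where the induction hypothesis enters at the last step. One then splits on whether $G_{n-1}(t)$ lies above or below the threshold $t_*$. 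If $n \leq k_*(t)$, then $n - 1 < k_*(t)$, so by the minimality defining $k_*(t)$ we have $G_{n-1}(t) = \Phi_{n-1}(t) > t_*$; the upper branch of $\phi$ fires, producing $\phi(G_{n-1}(t)) = a \Phi_{n-1}(t) - b = \Phi_n(t) = G_n(t)$. If $n > k_*(t)$, then either $n - 1 = k_*(t)$ (giving $G_{n-1}(t) = \Phi_{k_*(t)}(t) \leq t_*$) or $n - 1 > k_*(t)$ (giving $G_{n-1}(t) = \delta^{n-1-k_*(t)}\Phi_{k_*(t)}(t) \leq t_*$ since $\delta < 1$); in both sub-cases the lower branch activates and $\phi(G_{n-1}(t)) = \delta G_{n-1}(t) = \delta^{n-k_*(t)}\Phi_{k_*(t)}(t) = G_n(t)$.

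The main obstacle is purely bookkeeping rather than conceptual: one has to verify that $\Phi_k$ is the $k$-fold iterate of $s \mapsto as - b$ (a short geometric-series computation around the unstable fixed point $-b/(1-a)$), that once the orbit descends into $[0, t_*]$ it stays trapped there under the contracting $\delta$-branch, and that the ceiling expression in~\eqref{eq:kstar} correctly yields $k_*(t) = 0$ when $t \leq t_*$ via the $\lceil\cdot\rceil_+$ convention. With these ingredients in place the piecewise structure of $G_n$ falls out cleanly from the two-regime geometry of $\phi$.
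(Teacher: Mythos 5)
Your proof is correct and follows essentially the same route as the paper's: induction on $n$, the single-step bound from Proposition~\ref{prop:F_gamma_1}, the composition and monotonicity properties of $F_{\gamma'}$, and a case split on whether $G_{n-1}(t)$ lies above or below $t_*$ (your three sub-cases correspond exactly to the paper's $m<k_*$, $m=k_*$, $m>k_*$ trichotomy with $m=n-1$). One cosmetic slip: the fixed point $-b/(1-a)$ of $s\mapsto as-b$ is attracting, not unstable, since $a\in(0,1)$; this has no bearing on the argument.
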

\begin{proof}
    We prove this using an induction argument. 
    First for $n=1$, as $a \in (0,1)$ by the assumptions on the parameters, when $t \geq t_{\ast}$, we get $k_{\ast}(t) \geq 1$, so $G_{1}(t) = \Phi_{1}(t)$. By the same reasoning, when $t < t_{\ast}$, $k_{\ast}(t) = 0$, so that $G_{1}(t) = \delta \cdot \Phi_{0}(t) = \delta t$.
    By using~\cref{prop:F_gamma_1}, we confirm that for $n=1$, the required claim holds.

    As the induction hypothesis, we assume that the claim holds for $n=m$. With that, we prove that the claim holds for $n=m+1$ as follows: for $\cN_i=\cN$ for all $i$
    \begin{align}
        F_{\gamma'}^{\cN_{m+1} \circ \cdots \circ \cN_1}(t) & \leq F_{\gamma'}^{\cN_{m+1}}\!\left( F_{\gamma'}^{\cN_{m} \circ \cdots \circ \cN_1} (t)\right) \\ 
        & \leq F_{\gamma'}^{\cN_{m+1}}\!\left( G_m(t)\right) \\
        & \leq G_1\left( G_m (t)\right) \\
        &=  G_{m+1}(t),
    \end{align}
    where the first inequality from~\eqref{eq:composition_property}; the second inequality from the induction hypothesis and the monotonicity in~\eqref{eq:monotonicity_F_g}; third by the proved claim for an application of a single channel in $n=1$ case by substituting $ G_m(t) \leq t_*$ for $t$ therein; and the last equality follows due to the reasoning discussed below.

\medskip
    \textit{Proof of $G_1\left( G_m (t)\right)= G_{m+1}(t) $}: We consider three cases. First, consider $m < k_*(t).$ 
    Then $G_m(t) =\Phi_m(t)$. 
    Recall that 
    \begin{align}
    G_{1}(t) = \begin{cases} a(t+\frac{b}{1-a}) - \frac{b}{1-a} & k_{\ast}(t) \geq 1 \\
    \delta t & k_{\ast}(t) < 1 \ .
    \end{cases}
\end{align}
Since $m < k_*(t)$, we have $G_m(t) =\Phi_m(t) > t_*$. 
Then, by recalling that $ k_*(G_m(t)) = \min\{k\in\mathbb N:\ \Phi_k(G_m(t))\le t_*\}$, 
we get $k_*(G_m(t)) \geq 1$. 
With that, the first branch of the $G_1$ gets activated as follows: 
    \begin{align}
     G_1\left( G_m (t)\right) &=   G_1\left( \Phi_m (t)\right)  \\ 
     &= a \Phi_m (t) -b \label{eq:case-1-step} \\ 
     &= \Phi_{m+1}(t).
    \end{align}
Since $m+1 \leq  k_*(t)$, we have that $G_{m+1}(t)= \Phi_{m+1}(t)$ so as to get $G_1(G_m) = G_{m+1}$.

Second, consider $m=k_*(t)$. That leads to 
$G_m(t)= \Phi_{k_*(t)}(t) $. Due to $\Phi_{k_*(t)}(t) \leq t_*$, we have that $k_*(G_m(t))=0 <1$ activating the second branch of $G_1$ to get
\begin{equation}
    G_1\!\left(G_m(t)\right)
= G_1\!\left(\Phi_{k_*(t)}(t)\right)
= \delta \Phi_{k_*(t)}(t).
\end{equation}
Since $m= k_*(t)$, we also have that  $G_{m+1}(t)= \delta \Phi_{k_*(t)}(t)$.  for $m+1 > k_*(t)$. With that, we conclude $G_1(G_m) = G_{m+1}$ here as well.

Third, consider $m > k_*(t)$, and the second branch of $G_m(t)$ getting activated leading to 
\begin{equation}
    G_m(t)
= \delta^{\,m - k_*(t)}\,\Phi_{k_*(t)}(t)
\le t_*
\end{equation}
Now since $G_m(t) \leq t_*$, $k_*(G_m(t))=0 <1$ so that the second branch of $G_1$ gets activated.
To this end, 
\begin{align}
    G_1\!\left(G_m(t)\right)
& = \delta G_m(t) \\
& = \delta^{ (m+1) - k_*(t)} \Phi_{k_*(t)}(t). 
\end{align}
Finally, we
prove the desired claim in this case as well by noting that for $m+1 > k_*(t)$, we also have that $G_{m+1}(t)=  \delta^{ (m+1) - k_*(t)} \Phi_{k_*(t)}(t) $.

With the proof of the claim $G_1\left( G_m (t)\right)= G_{m+1}(t) $, we conclude the proof of the Proposition.
\end{proof}

\section{Containment in $\cN \in \cB^{\gamma,\delta}$} \label{Sec:Containment_class_of_channels}

The application of the previous sections requires knowing whether the channel $\cN$ is contained in $\cB^{\gamma,\delta}$. For a classical channel, the situation is very simple as the following proposition captures, which is a simple generalization of a well-known folklore result for the total variation contraction coefficient.
\begin{proposition}\label{prop:efficiency-of-classical-contract-coeff}
    Let $\gamma \geq 1$ and $\cW_{Y \vert X}$ be a classical channel. Then one may solve for $\eta_{\gamma}(\cW)$, or equivalently the minimal $\delta$ such that $\cW \in \cB^{\gamma,\delta}$ holds, in $O(\vert \cX \vert^{2}\vert \cY \vert)$ time. 
\end{proposition}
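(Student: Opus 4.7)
The plan is to reduce $\eta_\gamma(\cW)$ to a maximum over ordered pairs of distinct input symbols, each instance of which can be evaluated in a single linear scan over $\cY$. Abbreviate $W_x \coloneqq \cW(\cdot\vert x)$.

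First I would invoke the characterization \eqref{eq:contraction-optimized-by-pure-states}, $\eta_\gamma(\cN) = \sup_{\psi_1 \perp \psi_2} E_\gamma(\cN(\psi_1)\Vert\cN(\psi_2))$, and specialize it to the classical channel $\cW$ viewed as a quantum channel acting on diagonal states. Any pure input $|\psi\rangle = \sum_x \alpha_x|x\rangle$ is mapped by $\cW$ to the classical output distribution $\sum_x |\alpha_x|^2\, W_x$, so the supremum collapses to one over pairs of input distributions $(p,q)$ on $\cX$. Joint convexity of $E_\gamma$, which is immediate from the SDP formulation \eqref{eq:HS_SDP} since the objective is linear in $(\rho,\sigma)$, then gives
\[
E_\gamma\!\left(\textstyle\sum_x p_x W_x \,\big\Vert\, \sum_x q_x W_x\right) \;\leq\; \sum_{x_1,x_2} p_{x_1} q_{x_2}\, E_\gamma(W_{x_1}\Vert W_{x_2}) \;\leq\; \max_{x_1,x_2} E_\gamma(W_{x_1}\Vert W_{x_2}).
\]
The diagonal terms vanish since $\gamma \geq 1$ forces $E_\gamma(W_x\Vert W_x)=0$, and the point masses $\delta_{x_1},\delta_{x_2}$ for $x_1 \neq x_2$ are orthogonal pure states that attain the maximum. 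Together with \eqref{eq:ball-containment-equiv-contraction-coeff}, this identifies
\[
\eta_\gamma(\cW) \;=\; \max_{x_1 \neq x_2 \in \cX} E_\gamma(W_{x_1}\Vert W_{x_2}),
\]
which is simultaneously the least $\delta$ with $\cW \in \cB^{\gamma,\delta}$.

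Next I would count operations. There are $|\cX|(|\cX|-1) = O(|\cX|^2)$ ordered pairs $(x_1,x_2)$ with $x_1 \neq x_2$, and for each the evaluation
\[
E_\gamma(W_{x_1}\Vert W_{x_2}) \;=\; \sum_{y \in \cY} \bigl(W_{x_1}(y) - \gamma\, W_{x_2}(y)\bigr)_+
\]
is a single pass over $\cY$, requiring $O(|\cY|)$ arithmetic operations. Maintaining the running maximum across all pairs then yields the claimed $O(|\cX|^2 |\cY|)$ total running time.

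I do not anticipate a genuine obstacle: this is the natural extension of the folklore $O(|\cX|^2|\cY|)$ algorithm for the Dobrushin coefficient ($\gamma = 1$) to arbitrary $\gamma \geq 1$. The only step worth being explicit about is the reduction from the sup over orthogonal pure quantum states to a max over pairs of distinct classical inputs, which joint convexity of $E_\gamma$ renders transparent once one observes that a classical channel only sees the diagonal populations of its input.
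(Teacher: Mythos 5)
Your proof is correct and the complexity count is identical to the paper's. The only difference is in how the key identity $\eta_\gamma(\cW)=\max_{x\neq x'}E_\gamma(\cW(\cdot\mid x)\Vert\cW(\cdot\mid x'))$ is obtained: the paper simply cites it from prior work (\cite{asoodeh2023contractionegammadivergenceapplicationsprivacy}), whereas you re-derive it inline from \eqref{eq:contraction-optimized-by-pure-states} together with joint convexity of $E_\gamma$ read off from the SDP form \eqref{eq:HS_SDP}, a self-contained bonus. One small phrasing caveat: relaxing the supremum over orthogonal pure states to a supremum over arbitrary pairs of input distributions $(p,q)$ is an inequality in the needed direction rather than a literal ``collapse,'' since the orthogonality constraint on $\psi_1,\psi_2$ need not be realizable for every $(p,q)$; your separate achievability step via point masses $\delta_{x_1},\delta_{x_2}$ closes the gap, so nothing is actually missing.
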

\begin{proof}
    From~\cite{asoodeh2023contractionegammadivergenceapplicationsprivacy},    $$\eta_{\gamma}(\cW) = \max_{x,x' \neq x} E_{\gamma}(\cW(\cdot \vert x) \Vert \cW(\cdot \vert x')) \  . $$
    As for classical distributions $p$ and $q$, $E_{\gamma}(p \Vert q) = \sum_{x} \max\{0,p(x) - \gamma q(x)\} + (1-\gamma)_{+}$, it is linear in $\vert \cY \vert$ to calculate $E_{\gamma}(\cW(\cdot \vert x) \Vert \cW(\cdot \vert x'))$ for an $(x,x')$ pair. There are ${\vert \cX \vert \choose 2} = \frac{1}{2}\vert \cX \vert (\vert \cX \vert-1)$ pairs to check, so combining these points, one can calculate $\eta_{\gamma}(\cW)$ in $O(\vert \cX \vert^{2}\vert \cY \vert)$. \cref{eq:ball-containment-equiv-contraction-coeff} implies the stated equivalent condition.
\end{proof}

In contrast to the above proposition, for a quantum channel $\cN$, determining $\eta_{\gamma}(\cN)$, or equivalently the minimal $\delta$ such that $\cN \in \cB^{\gamma,\delta}$ is generally quite difficult. To see this, consider that by \cref{eq:gamma-delta-ball-def}, the minimal delta is given by the solution to the bilinear optimization  
\begin{equation}
    \sup_{\rho, \sigma \in \cD(\cH)} E_{\gamma}\!\left( \cN(\rho) \Vert \cN(\sigma) \right) = \sup_{\rho,\sigma \in \cD} \sup_{0\leq M \leq I } \Tr[ M( \cN(\rho)- \gamma \cN(\sigma))] \ , 
\end{equation}
which is linear in the pair of states $(\rho,\sigma) \in \cD^{\times 2}$ for fixed POVM element $M$ and vice-versa. In general, finding the global optimum of a bilinear optimization problem is difficult, and indeed it has been shown for $\gamma =1$ (i.e. trace distance) that determining if $\eta_{\gamma=1}(\cN) = \eta_{\text{TD}}(\cN)$ is one or not is NP hard \cite{delsol2025computationalaspectstracenorm}. Given this, one would expect computing the minimal $\delta$ such that $\cN \in \cB^{\gamma,\delta}$ is generically a difficult computational problem. As such, it is useful to have computationally efficient sufficient conditions on $\delta$ given $\cN$ to guarantee $\cN \in \cB^{\gamma,\delta}$. To that end, we build on ideas from \cite{hirche2024quantum,george2025quantumdoeblincoefficientsinterpretations}, which developed tools for computational efficient bounds on $\eta_{\text{TD}}(\cN)$ via the quantum Doeblin coefficient. In particular, using the Choi operator of a superoperator $\mathcal{M}_{A\to B}$ defined as
\begin{equation}
\label{eq:choi_operator}
    \Gamma^{\cM}_{AB} \coloneqq  \sum_{i,j} |i\rangle\!\langle j|_A \otimes \cM_{A'\to B}(|i\rangle\!\langle j|_{A'}),
\end{equation}
where system $A'$ is isomorphic to system $A$, and the positive Doeblin coefficient  
\begin{equation}
\label{eq:choi_operator}
    \Gamma^{\cM}_{AB} \coloneqq  \sum_{i,j} |i\rangle\!\langle j|_A \otimes \cM_{A'\to B}(|i\rangle\!\langle j|_{A'}),
\end{equation}
 \cite[Proposition 29]{george2025quantumdoeblincoefficientsinterpretations} shows 
 \begin{align}\label{eq:positive-Doeblin-upper-bound}
     \eta_{\gamma}(\cN) \leq \sup_{r \geq 1} \sup_{\substack{\rho_{RA}\neq\sigma_{RA},\\ \rho_R=\sigma_R, \\ |R|=r}} \frac{E_{\gamma}((\id\otimes\cN)(\rho_{RA})\|(\id\otimes\cN)(\sigma_{RA}))}{E_{\gamma}(\rho_{RA}\|\sigma_{RA})} \leq 1 - \alpha_{+}(\cN) \ .
 \end{align}

\begin{proposition}
    Let $\gamma \geq 1$. If $\delta \geq 1 - \alpha_{+}(\cN)$, then $\cN \in \cB^{\gamma,\delta}$. In particular, if $\delta \geq 1 - d_{B}\lambda_{\min}(\Gamma^{\cN})$, then $\cN \in \cB^{\gamma,\delta}$.
\end{proposition}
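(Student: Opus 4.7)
The plan is to reduce the proposition to two previously established facts and chain them together; there is essentially no new content to prove here, only the right combination of tools already in hand.

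For the first statement, I would start from the equivalence \eqref{eq:ball-containment-equiv-contraction-coeff}, which says $\cN \in \cB^{\gamma,\delta}$ if and only if $\eta_{\gamma}(\cN) \leq \delta$. So it suffices to show that $\eta_{\gamma}(\cN) \leq 1-\alpha_{+}(\cN)$ whenever $\gamma \geq 1$, since then the hypothesis $\delta \geq 1-\alpha_{+}(\cN)$ immediately implies $\eta_{\gamma}(\cN) \leq \delta$, and hence $\cN \in \cB^{\gamma,\delta}$. But this upper bound is precisely the content of \eqref{eq:positive-Doeblin-upper-bound}, invoking~\cite[Proposition~29]{george2025quantumdoeblincoefficientsinterpretations}. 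Thus, the first claim follows in one line by substitution.

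For the second ``in particular'' statement, I would appeal to the known lower bound on the positive Doeblin coefficient in terms of the smallest eigenvalue of the Choi operator: namely,
\begin{equation}
    \alpha_{+}(\cN) \geq d_{B}\,\lambda_{\min}(\Gamma^{\cN}),
\end{equation}
which is established in the Doeblin-coefficient machinery of \cite{hirche2024quantum,george2025quantumdoeblincoefficientsinterpretations}. Combining this with the first statement gives the chain
\begin{equation}
    \delta \geq 1 - d_{B}\,\lambda_{\min}(\Gamma^{\cN}) \geq 1 - \alpha_{+}(\cN) \geq \eta_{\gamma}(\cN),
\end{equation}
and another application of \eqref{eq:ball-containment-equiv-contraction-coeff} yields $\cN \in \cB^{\gamma,\delta}$.

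The main (and only) subtlety is making sure the cited inequality $\alpha_{+}(\cN) \geq d_{B}\lambda_{\min}(\Gamma^{\cN})$ is pulled from the right place in the Doeblin literature rather than reproved from scratch; if one wanted a self-contained argument, the bound is obtained by taking the ``flat'' replacer channel $\cR(\cdot) = \Tr[\cdot]\,I_{B}/d_{B}$, for which $d_{B}\lambda_{\min}(\Gamma^{\cN})\,\cR$ is completely positive and dominated by $\cN$, which exhibits a feasible point for the positive Doeblin optimization with value $d_{B}\lambda_{\min}(\Gamma^{\cN})$. Beyond that, no computation is needed—the proposition is a direct corollary of \eqref{eq:positive-Doeblin-upper-bound} and \eqref{eq:ball-containment-equiv-contraction-coeff}.
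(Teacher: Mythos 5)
Your proof is correct and follows essentially the same route as the paper: both chain $\delta \geq 1 - \alpha_{+}(\cN) \geq \eta_{\gamma}(\cN)$ via \eqref{eq:positive-Doeblin-upper-bound}, then invoke $\alpha_{+}(\cN) \geq d_{B}\lambda_{\min}(\Gamma^{\cN})$ (which the paper attributes to the proof of \cite[Proposition~36]{george2025quantumdoeblincoefficientsinterpretations}, and which your replacer-channel feasibility sketch reproduces), and close with the equivalence \eqref{eq:ball-containment-equiv-contraction-coeff}. No gaps.
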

\begin{proof}
    By the same proof as \cite[Proposition 36]{george2025quantumdoeblincoefficientsinterpretations}, $\alpha_{+}(\cN) \geq d_{B}\lambda_{\min}(\Gamma^{\cN})$. Thus, by \cref{eq:positive-Doeblin-upper-bound}, $\eta_{\gamma}(\cN) \leq 1-\alpha_{+}(\cN) \leq 1 - d_{B}\lambda_{\min}(\Gamma^{\cN})$.
\end{proof}

In the case that $\gamma \geq d_{B}$, the following result can improve upon the $1-d_{B}\lambda_{\min}(\Gamma^{\cN})$ bound.
\begin{proposition}
    If $\delta \geq 1  - \gamma \min_{\ket{v}_{A},\ket{w}_{B}}\Tr[\dyad{v} \otimes \dyad{w} \Gamma^{\cN}]$, then $\cN \in \cB^{\gamma,\delta}$. In particular, if $\cN \not \in \cB^{\gamma,0}$ and $\delta \geq 1 - \gamma \lambda_{\min}(\Gamma^{\cN})$, then $\cN \in \cB^{\gamma,\delta}$.
\end{proposition}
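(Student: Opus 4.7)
The plan is to reduce to orthogonal pure-state inputs via~\cref{eq:max-HS-dist-optimized-by-pure-states}, apply the SDP formulation~\eqref{eq:HS_SDP} of $E_\gamma$, and then use the hypothesis to lower-bound the $\gamma\cN(\psi_2)$ contribution. First I would give
$$\lambda \coloneqq \min_{|v\rangle_A,|w\rangle_B}\Tr\!\left[|v\rangle\!\langle v|_A \otimes |w\rangle\!\langle w|_B\, \Gamma^{\cN}_{AB}\right]$$
an operational reading. Expanding the Choi operator~\eqref{eq:choi_operator} and using the Choi ``ricochet'' identity yields
$$\Tr\!\left[|v\rangle\!\langle v|_A \otimes |w\rangle\!\langle w|_B\, \Gamma^{\cN}_{AB}\right] = \langle w|\cN(|\bar v\rangle\!\langle \bar v|)|w\rangle,$$
where $|\bar v\rangle$ is the complex conjugate of $|v\rangle$ in the defining basis. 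Since $|v\rangle \mapsto |\bar v\rangle$ is a bijection of the unit sphere of $\cH_A$, this gives $\lambda = \min_{|u\rangle} \lambda_{\min}(\cN(|u\rangle\!\langle u|))$; equivalently, every pure input $\psi$ satisfies $\cN(\psi) \geq \lambda\, I_B$ as an operator inequality.

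Next I would use~\cref{eq:max-HS-dist-optimized-by-pure-states} to rewrite the supremum in~\eqref{eq:gamma-delta-ball-def} as $\sup_{\psi_1 \perp \psi_2} E_\gamma(\cN(\psi_1)\|\cN(\psi_2))$. For fixed orthogonal pure $\psi_1, \psi_2$, the SDP~\eqref{eq:HS_SDP} is attained by the spectral projector $\Pi_+$ of $\cN(\psi_1) - \gamma\cN(\psi_2)$ onto its positive eigenspace, so
$$E_\gamma(\cN(\psi_1)\|\cN(\psi_2)) = \Tr[\Pi_+ \cN(\psi_1)] - \gamma\Tr[\Pi_+ \cN(\psi_2)].$$
If $\Pi_+ = 0$, the divergence is already zero. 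Otherwise $\Pi_+$ is a nonzero orthogonal projection, so its rank is at least one and hence $\Tr[\Pi_+] \geq 1$; combined with $\cN(\psi_2) \geq \lambda I_B$ this gives $\Tr[\Pi_+\cN(\psi_2)] \geq \lambda \Tr[\Pi_+] \geq \lambda$. Using also $\Tr[\Pi_+\cN(\psi_1)] \leq \Tr[\cN(\psi_1)] = 1$ yields $E_\gamma(\cN(\psi_1)\|\cN(\psi_2)) \leq 1-\gamma\lambda \leq \delta$, and taking the supremum proves $\cN \in \cB^{\gamma,\delta}$.

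For the ``in particular'' statement I would note that minimising the Rayleigh quotient of $\Gamma^{\cN}$ over product pure states cannot go below its minimum over all pure states, so $\lambda \geq \lambda_{\min}(\Gamma^{\cN})$ and the hypothesis $\delta \geq 1-\gamma\lambda_{\min}(\Gamma^{\cN})$ implies $\delta \geq 1-\gamma\lambda$, reducing to the first part. The side assumption $\cN \not\in \cB^{\gamma,0}$ only serves to keep the bound informative: if $\gamma\lambda_{\min}(\Gamma^{\cN}) > 1$, then a fortiori $\gamma\lambda > 1$, and one checks $\cN(\psi_1) - \gamma\cN(\psi_2) \leq (1-\gamma\lambda) I_B < 0$, which would already force $\cN \in \cB^{\gamma,0}$ in contradiction to the hypothesis. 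The subtle step I expect to dwell on is the first one, namely the identification $\lambda = \min_{|u\rangle}\lambda_{\min}(\cN(|u\rangle\!\langle u|))$ via the Choi ricochet; once that operational reading is in hand, the remainder is a short calculation combining the SDP witness $\Pi_+$ with the observation that a nonzero orthogonal projection has trace at least one.
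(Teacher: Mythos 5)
Your proof is correct and takes essentially the same route as the paper's: reduce to orthogonal pure inputs, pick the optimal SDP projector, apply $\Tr[M\cN(\psi_1)]\leq 1$, and lower-bound $\Tr[M\cN(\psi_2)]$ via the Choi product-state minimum, which the paper does by rewriting $\Tr[M\cN(\psi_2)]=\Tr[\dyad{\psi_{2}}^{T}\otimes M\,\Gamma^{\cN}]$ inline while you first convert $\lambda$ into the operator inequality $\cN(\psi)\geq\lambda I_B$ -- the same ricochet identity. Your explicit case split on $\Pi_+=0$ cleanly handles the degenerate situation that the paper instead sidesteps by working under the hypothesis $\cN\not\in\cB^{\gamma,0}$, and your observation that this side assumption is not logically necessary is accurate.
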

\begin{proof}
    If $\cN \not \in \cB^{\gamma,0}$, then by \cref{eq:contraction-optimized-by-pure-states}, \cref{eq:ball-containment-equiv-contraction-coeff}, and \cref{eq:HS_SDP},
    \begin{align}
        0 < \max_{\psi_{1} \perp \psi_{2}, 0 \leq M \leq I} \Tr[M(\cN(\dyad{\psi_{1}} - \gamma \dyad{\psi_{2}}))] \ , 
    \end{align}
    where without loss of generality $M$ is a projector. The strict inequality can only be true if the projector $M$ has rank of at least one. Continuing with the assumption $M$ is at least rank one, 
    \begin{align}
        \Tr[M(\cN(\dyad{\psi_{1}} - \gamma \dyad{\psi_{2}}))] &\leq 1 - \gamma \Tr[M\cN(\dyad{\psi_{2}})] \\
        &=1  - \gamma \Tr[\dyad{\psi_{2}}^{T} \otimes M \Gamma^{\cN}] \\
        &\leq 1  - \gamma \min_{\ket{v}_{A},\ket{w}_{B}}\Tr[\dyad{v} \otimes \dyad{w} \Gamma^{\cN}] \\
        &\leq 1 - \gamma \lambda_{\min}(\Gamma^{\cN}) \ , 
    \end{align}
    where the first inequality is $0 \leq M \leq I$ and that $\cN$ is a channel, the second inequality is that $M$ is a projector of at least rank one, and the final inequality is relaxing minimizing over product unit vectors to a unit vector.
\end{proof}

The above results in particular show non-trivial $\delta$ such that $\cN \in \cB^{\gamma,\delta}$ exist whenever $\lambda_{\min}(\Gamma^{\cN}) > 0$. The following establishes non-trivial (i.e. non-zero) lower bounds on what $\delta$ must be when $\lambda_{\min}(\Gamma^{\cN}) = 0$.

We begin again with the classical case as it both clarifies and motivates the quantum generalization.
\begin{proposition}\label{prop:cl-chan-nec-cond-on-delta}
    For a classical channel $\cW_{Y \vert X}$ and $\gamma \geq 1$, $\cW \in \cB^{\gamma,\delta}$ only if  
    \begin{align}
        \delta \geq \max_{(x,y): \cW(y\vert x) = 0} \max_{x' \neq x} \cW(y \vert x') \ .
    \end{align}
\end{proposition}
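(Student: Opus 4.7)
The plan is to prove this necessary condition by specializing the defining inequality of $\cW \in \cB^{\gamma,\delta}$ to a pair of point-mass inputs and extracting a single-letter bound from the explicit classical formula for $E_\gamma$. First, because $\gamma \geq 1$, the correction $(1-\gamma)_+$ in \eqref{eq:hockey_stick} vanishes, so the classical hockey-stick divergence reduces to
\[
E_\gamma(p\Vert q)=\sum_{y\in\cY}\max\{0,\,p(y)-\gamma q(y)\},
\]
as already recorded in the proof of \cref{prop:efficiency-of-classical-contract-coeff}. In particular, retaining only the $y_0$-summand yields the one-letter lower bound $E_\gamma(p\Vert q) \geq \max\{0,\,p(y_0)-\gamma q(y_0)\}$ for any fixed $y_0 \in \cY$.

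Next, I would fix any triple $(x,y,x')$ with $\cW(y\vert x)=0$ and $x'\neq x$, and apply $\cW \in \cB^{\gamma,\delta}$ to the pair of point-mass distributions concentrated at $x'$ and at $x$; their images under $\cW$ are the conditional distributions $\cW(\cdot\vert x')$ and $\cW(\cdot\vert x)$. Combining the defining inequality with the one-letter bound at $y_0=y$ gives
\[
\delta \;\geq\; E_\gamma\!\left(\cW(\cdot\vert x')\,\Vert\,\cW(\cdot\vert x)\right) \;\geq\; \max\{0,\,\cW(y\vert x')-\gamma\cW(y\vert x)\} \;=\; \cW(y\vert x'),
\]
where the final equality uses $\cW(y\vert x)=0$. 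Taking the maximum over all admissible $(x,y,x')$ then yields the claimed lower bound on $\delta$.

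The only delicate point is keeping track of which slot of $E_\gamma$ carries the zero: it must sit in the reference (second) slot, so that the positive part preserves the full mass $\cW(y\vert x')$ of the first distribution. Since the claim only requires a witness for each valid $(x,y)$, the freedom to place the $x'$-point mass in the first slot handles this without effort, and the case in which no such $(x,y)$ exists makes the bound vacuous. Consequently there is no genuine obstacle: once the classical expression for $E_\gamma$ is in hand, the argument is essentially a direct substitution into the $\cB^{\gamma,\delta}$ condition.
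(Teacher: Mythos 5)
Your proof is correct and follows essentially the same route as the paper: evaluate the $\cB^{\gamma,\delta}$ condition on the pair of point masses at $x'$ and $x$, then lower-bound $E_\gamma$ by a single output letter $y$ with $\cW(y\vert x)=0$ to extract $\cW(y\vert x')$. The only cosmetic difference is that the paper keeps the SDP form of $E_\gamma$ and chooses $M=\vert y\rangle\!\langle y\vert$, whereas you use the equivalent classical sum formula and retain the $y$-summand.
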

\begin{proof}
    Let $\cW(y \vert x) = 0$. Then 
    \begin{align}
        \eta_{\gamma}(\cW) &\geq E_{\gamma}(\cW(\cdot \vert x') \Vert \cW(\cdot \vert x)) \\
        &= \sup_{0 \leq M \leq I} \Tr[M(\cW(\cdot \vert x') - \gamma \cW(\cdot \vert x))] \\
        &\geq \Tr[\dyad{y}(\cW(\cdot \vert x') - \gamma \cW(\cdot \vert x))] \\
        &= \cW(y \vert x') \ , 
    \end{align}
    where the final equality uses our assumption. Maximizing over the choice of $(x,y)$ pair such that $W(y \vert x) = 0 $ and then $x' \neq x$ completes the proof.
\end{proof}

\begin{corollary}\label{cor:clas-suf-conds-for-not-being-completely-contractive}
    Let $\cW_{Y \vert X}$ such that $\cY$ is the span of the image of $\cW$. If $\lambda_{\min}(\Gamma^{\cW}) = 0$, then $\cW \not \in \cB^{\gamma,0}$ for all $\gamma \geq 1$.
\end{corollary}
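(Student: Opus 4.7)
The plan is to use the characterization of the Choi operator of a classical channel together with \cref{prop:cl-chan-nec-cond-on-delta} to force a strictly positive lower bound on any valid $\delta$.

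First, I would observe that for a classical channel $\cW_{Y\vert X}$, embedded as a quantum channel that maps diagonal to diagonal via $\cW(|x\rangle\!\langle x|) = \sum_{y} \cW(y\vert x) |y\rangle\!\langle y|$, the Choi operator is diagonal in the product basis:
\begin{equation}
    \Gamma^{\cW}_{AB} = \sum_{x,y} \cW(y\vert x)\, |x,y\rangle\!\langle x,y|\, .
\end{equation}
Consequently $\lambda_{\min}(\Gamma^{\cW}) = \min_{x,y} \cW(y\vert x)$, so the hypothesis $\lambda_{\min}(\Gamma^{\cW}) = 0$ yields the existence of a pair $(x^*, y^*)$ with $\cW(y^*\vert x^*) = 0$.

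Next I would invoke the standing assumption that $\cY$ equals the span of the image of $\cW$: every output symbol must arise with positive probability from at least one input. Applied to $y^*$, this produces some $x' \in \cX$ with $\cW(y^*\vert x') > 0$, and necessarily $x' \neq x^*$ since $\cW(y^*\vert x^*) = 0$.

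Finally I would invoke \cref{prop:cl-chan-nec-cond-on-delta}, which states that $\cW \in \cB^{\gamma,\delta}$ only if
\begin{equation}
    \delta \,\geq\, \max_{(x,y):\ \cW(y\vert x) = 0}\ \max_{x' \neq x} \cW(y \vert x') \, .
\end{equation}
Specializing the outer maximum to the pair $(x^*, y^*)$ and the inner one to $x'$ constructed above, this lower bound is at least $\cW(y^*\vert x') > 0$. In particular $\delta = 0$ is impossible, so $\cW \notin \cB^{\gamma, 0}$, and since $\gamma \geq 1$ was arbitrary this holds for all such $\gamma$. There is no serious obstacle: the argument is essentially a bookkeeping exercise, and the only subtle point is ensuring that the support condition on $\cY$ guarantees the existence of a witness $x' \neq x^*$, which is exactly what that hypothesis was designed to supply.
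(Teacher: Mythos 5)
Your proof is correct and follows essentially the same route as the paper's: identify a zero entry $\cW(y^*\vert x^*)=0$ from $\lambda_{\min}(\Gamma^{\cW})=0$, use the span condition to produce a witness $x'\neq x^*$ with $\cW(y^*\vert x')>0$, and then invoke \cref{prop:cl-chan-nec-cond-on-delta}. The only difference is that you spell out the diagonal structure of the classical Choi operator, which the paper treats as immediate.
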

\begin{proof}
Note $\lambda_{\min}(\Gamma^{\cW}) = 0$ if and only if there is an $(x,y)$ 
pair such that $\cW(y \vert x) = 0$. As $\cY$ is the span of the image of $\cW$, there exists $x' \neq x$ such that $\cW(y \vert x) > 0$. Applying the previous proposition completes the proof. 
\end{proof}
We remark restricting to the span of the image of $\cW$ is critical in the above claim. In particular, a replacer channel $\cR(\dyad{x}) = \dyad{y}$ for all $x \in \cX$ is contained in $\cB^{\gamma,0}$ for all $\gamma$. This is not a contradiction because when its output is restricted in the specified manner, the condition $\lambda_{\min}(\Gamma^{\cW}) = 0$ is not true.

We now generalize the above to the quantum scenario. This will require further notation. First, we let $\Pi_{\sigma}$ denote the projector onto the support of a state $\sigma$. Second, given a unit vector $\ket{\phi} \in A$, define $A^{\perp}_{\phi} = \text{ker}(\dyad{\phi}) = \{\ket{\psi} \in A: \langle \phi \vert \psi \rangle = 0\}$. This will be useful as 
\begin{align}\label{eq:eta-gamma-nested-optimization}
    \eta_{\gamma}(\cN) = \sup_{\ket{\phi} \in B} \; \max_{\rho \in \DD(A^{\perp}_{\phi})} E_{\gamma}(\cN(\rho) \Vert \cN(\dyad{\phi})) \ . 
\end{align}
The above follows from the right hand side being a relaxation of \cref{eq:contraction-optimized-by-pure-states}, but a restriction of \cref{eq:eta-gamma-defn} by noting $E_{\gamma}(\rho \Vert \dyad{\phi}) = 1$ for any $\ket{\phi}$ and $\rho \in \DD(A^{\perp}_{\phi})$. We may then use this to generalize \cref{prop:cl-chan-nec-cond-on-delta}.
\begin{proposition}
    For a quantum channel $\cN_{A \to B}$, $\cN \in \cB^{\gamma,\delta}$ only if for unit vector $\ket{\phi} \in B$,
    $$\delta \geq 1 - \min_{\rho \in \DD(A^{\perp}_{\phi})} \Tr[\Pi_{\cN(\phi)}\cN(\rho)] \ , $$
\end{proposition}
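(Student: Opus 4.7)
The plan is to leverage the nested-optimization characterization of the contraction coefficient in \cref{eq:eta-gamma-nested-optimization} together with the SDP formula for the hockey-stick divergence in \cref{eq:HS_SDP}, specialized to a carefully chosen measurement operator. By \cref{eq:ball-containment-equiv-contraction-coeff}, the hypothesis $\cN \in \cB^{\gamma,\delta}$ is equivalent to $\eta_\gamma(\cN) \leq \delta$, so I only need to exhibit a lower bound on $\eta_\gamma(\cN)$ matching the right-hand side. (I will interpret $\ket{\phi}\in B$ in the statement as $\ket{\phi}\in A$, which is needed so that $\cN(\dyad{\phi})$ is defined; this mirrors the classical case in \cref{prop:cl-chan-nec-cond-on-delta}.)

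Fix any unit vector $\ket{\phi}\in A$ and any state $\rho\in\DD(A_{\phi}^{\perp})$. By \cref{eq:eta-gamma-nested-optimization},
\begin{equation*}
    \eta_\gamma(\cN) \;\geq\; E_\gamma(\cN(\rho)\Vert \cN(\dyad{\phi})) \;=\; \sup_{0\leq M \leq I}\Tr\!\left[M\bigl(\cN(\rho)-\gamma\cN(\dyad{\phi})\bigr)\right].
\end{equation*}
The key step is then to choose $M := I - \Pi_{\cN(\phi)}$, which is a valid POVM element. With this choice, $\Tr[M\,\cN(\dyad{\phi})] = 0$ by the defining property of $\Pi_{\cN(\phi)}$, so the $\gamma$-term drops out entirely, leaving
\begin{equation*}
    E_\gamma(\cN(\rho)\Vert \cN(\dyad{\phi})) \;\geq\; \Tr\!\left[(I-\Pi_{\cN(\phi)})\cN(\rho)\right] \;=\; 1 - \Tr\!\left[\Pi_{\cN(\phi)}\cN(\rho)\right].
\end{equation*}

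Combining with $\eta_\gamma(\cN)\leq\delta$, we obtain $\delta \geq 1 - \Tr[\Pi_{\cN(\phi)}\cN(\rho)]$ for every $\rho\in\DD(A_{\phi}^{\perp})$. Maximizing the right-hand side over such $\rho$ is the same as minimizing $\Tr[\Pi_{\cN(\phi)}\cN(\rho)]$, which yields the claim. There is no real obstacle: the only subtle point is recognizing that the natural choice $M = I - \Pi_{\cN(\phi)}$ is precisely the measurement that annihilates $\cN(\phi)$ while giving exactly the weight of $\cN(\rho)$ outside the support of $\cN(\phi)$, which is the quantum analogue of evaluating $\Tr[\dyad{y}(\cdot)]$ at an output symbol $y$ with $\cW(y|x)=0$ in the classical proof of \cref{prop:cl-chan-nec-cond-on-delta}. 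The restriction $\rho\in\DD(A_{\phi}^{\perp})$ is what replaces the classical condition $x' \neq x$, and guarantees that $E_\gamma(\rho\Vert\dyad{\phi})=1$ so that the bound is being applied to a nontrivial input pair via \cref{eq:eta-gamma-nested-optimization}.
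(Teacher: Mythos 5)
Your proof is correct and follows essentially the same route as the paper's: lower-bound $\eta_\gamma(\cN)$ via \cref{eq:eta-gamma-nested-optimization}, apply the SDP form \cref{eq:HS_SDP}, and pick the feasible POVM element $M = I - \Pi_{\cN(\phi)}$ so that the $\gamma$-dependent term vanishes. Your note that $\ket{\phi}$ should live in $A$ rather than $B$ is a valid observation about a typo in the statement and matches what the paper's own proof implicitly assumes.
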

\begin{proof}
    Starting from \eqref{eq:eta-gamma-nested-optimization},
    \begin{align}
        \eta_{\gamma}(\cN)  &= \sup_{\ket{\tau} \in B} \max_{\rho \in \DD(A^{\perp}_{\tau})} E_{\gamma}(\cN(\rho) \Vert \cN(\dyad{\tau})) \\
        & \geq \max_{\rho \in \DD(A^{\perp}_{\phi})} E_{\gamma}(\cN(\rho) \Vert \cN(\dyad{\phi})) \\
        & = \max_{\rho \in \DD(A^{\perp}_{\phi}) , 0 \leq M \leq I} \Tr[M(\cN(\rho) - \gamma \cN(\dyad{\phi})] \\
        & \geq \max_{\rho \in \DD(A^{\perp}_{\phi})} \Tr[(I - \Pi_{\cN(\dyad{\phi})}) \cN(\rho) - \gamma \cN(\dyad{\phi})] \\
        &= 1 - \min_{\rho \in \DD(A^{\perp}_{\phi})} \Tr[\Pi_{\cN(\phi)}\cN(\rho)] \ , 
    \end{align}
    where all inequalities are making choices for optimization variables.
\end{proof}

We now prove a generalization of \cref{cor:clas-suf-conds-for-not-being-completely-contractive}, although it is not the same conditions and the proof method does not use the previous proposition.
\begin{proposition}
    Let $\cN_{A \to B}$ such that $B$ is the span of the image of $\cN$. If there is $\ket{\phi} \in A$ such that $\text{rank}(\cN(\dyad{\phi})) < d_{B}$, then $\cN \not \in \cB^{\gamma,0}$. 
\end{proposition}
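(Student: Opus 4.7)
The plan is to exhibit a unit vector $\ket{\psi} \in A^{\perp}_{\phi}$ with $\Tr[\Pi \cN(\dyad{\psi})] > 0$, where $\Pi := I_B - \Pi_{\cN(\dyad{\phi})}$ is the projector onto the kernel of $\cN(\dyad{\phi})$. The rank hypothesis is exactly what guarantees $\Pi \neq 0$. Once such a $\ket{\psi}$ is produced, plugging $M=\Pi$ into the SDP formulation~\cref{eq:HS_SDP} of $E_\gamma$ yields
\begin{equation}
E_\gamma(\cN(\dyad{\psi}) \Vert \cN(\dyad{\phi})) \,\ge\, \Tr[\Pi\, \cN(\dyad{\psi})] \,-\, \gamma\, \Tr[\Pi\, \cN(\dyad{\phi})] \,=\, \Tr[\Pi\, \cN(\dyad{\psi})] \,>\, 0,
\end{equation}
where the middle equality uses that $\cN(\dyad{\phi})$ is supported in the range of $\Pi_{\cN(\dyad{\phi})}$. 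Since $\ket{\psi} \perp \ket{\phi}$, this strict inequality together with~\cref{eq:contraction-optimized-by-pure-states} forces $\eta_\gamma(\cN)>0$, and then~\cref{eq:ball-containment-equiv-contraction-coeff} gives $\cN \not\in \cB^{\gamma,0}$.

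To produce $\ket{\psi}$, I would consider the completely positive map $\tilde{\cN}(X) := \Pi \cN(X) \Pi$. It is not identically zero: if $\tilde{\cN}(\rho) = 0$ for every state $\rho$, then by positivity of $\cN(\rho)$ one would have $\Pi \cN(\rho) = 0$, contradicting the assumption that the span of the image of $\cN$ is all of $B$. Fix any Kraus decomposition $\tilde{\cN}(X) = \sum_k A_k X A_k^\dagger$. The identity $\tilde{\cN}(\dyad{\phi}) = 0$ together with positivity of each rank-one term $A_k \dyad{\phi} A_k^\dagger$ forces $A_k \ket{\phi} = 0$ for every $k$, so $\tilde{\cN}(X) = \tilde{\cN}((I_A - \dyad{\phi}) X (I_A - \dyad{\phi}))$ for all $X$; in particular $\tilde{\cN}$ is determined by its action on operators supported in $A^{\perp}_{\phi}$. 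Since $\tilde{\cN}$ is nonzero, linearity combined with the fact that density matrices span the Hermitians shows there must be a density matrix $\tau$ supported on $A^{\perp}_{\phi}$ with $\tilde{\cN}(\tau) \neq 0$. Decomposing $\tau = \sum_i p_i \dyad{\psi_i}$ with $\ket{\psi_i} \in A^{\perp}_{\phi}$ and using $\Tr\tilde{\cN}(\tau) > 0$ while each $\Tr\tilde{\cN}(\dyad{\psi_i}) \geq 0$, at least one component satisfies $\Tr[\Pi \cN(\dyad{\psi_i})] > 0$. This gives the required $\ket{\psi}$.

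The main obstacle is that a generic pure input $\ket{\chi}$ with $\langle\phi\vert\chi\rangle \neq 0$ produces off-diagonal ``coherence'' terms such as $\cN(\ket{\phi}\!\bra{\chi^\perp})$, which could in principle be where the image of $\cN$ escapes the support of $\cN(\dyad{\phi})$, so that no orthogonal input directly witnesses it and the naive classical argument of \cref{prop:cl-chan-nec-cond-on-delta} breaks down. The Kraus reduction above rules this out cleanly: because every Kraus operator of $\tilde{\cN}$ annihilates $\ket{\phi}$, those coherences contribute nothing to $\tilde{\cN}$, and the nonvanishing of $\tilde{\cN}$ is automatically witnessed by some input supported in $A^{\perp}_{\phi}$, which is exactly what the SDP-based argument of the first paragraph needs.
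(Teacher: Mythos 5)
Your proof is correct, but it takes a genuinely different and more constructive route than the paper. The paper's argument works with a general mixed state $\rho$: by the span-of-image hypothesis there exists $\rho$ with $\supp\cN(\rho) \not\subseteq \supp\cN(\dyad{\phi})$; one then checks that both $E_\gamma(\rho\Vert\dyad{\phi})>0$ (since $\rho\neq\dyad{\phi}$) and $E_\gamma(\cN(\rho)\Vert\cN(\dyad{\phi}))>0$ (by the same choice $M=I-\Pi_{\cN(\dyad{\phi})}$ that you use), and concludes via the ratio definition~\eqref{eq:eta-gamma-defn} without ever producing an orthogonal pure-state pair. The paper even notes explicitly that this proposition ``is not the same conditions and the proof method does not use the previous proposition'' --- i.e.\ they do not try to exhibit an orthogonal witness. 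You instead construct an orthogonal pure state $\ket{\psi}\perp\ket{\phi}$ witnessing the escape, by observing that the CP map $\tilde\cN(X)=\Pi\cN(X)\Pi$ annihilates $\dyad\phi$, so every Kraus operator of $\tilde\cN$ kills $\ket\phi$, and nonvanishing of $\tilde\cN$ must therefore already show up on inputs supported in $A^\perp_\phi$; you then invoke~\eqref{eq:contraction-optimized-by-pure-states}. This costs you the Kraus argument but buys you more: it establishes that the escape is always witnessed by an orthogonal pure state, which in turn feeds into the preceding necessary-condition proposition (it shows $\min_{\rho\in\DD(A^\perp_\phi)}\Tr[\Pi_{\cN(\phi)}\cN(\rho)]<1$) and makes the quantum proof structurally parallel to the classical argument of~\cref{prop:cl-chan-nec-cond-on-delta}, resolving the ``coherence obstruction'' that you correctly flag as the reason a naive quantum translation might fail. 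Both proofs are valid; yours is the stronger statement and the paper's is the shorter derivation.
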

\begin{proof}
    By the definition of the space $B$, there is $\rho$ such that $\cN(\rho)$ is not contained in the support of $\cN(\dyad{\phi})$. Note that $\Tr[(\rho - \gamma 
    \dyad{\phi})_{+}] > 0$ unless $\rho = \dyad{\phi}$, which would contradict that the support of $\cN(\rho)$ is not contained in the support of $\cN(\dyad{\phi})$. Thus, $E_{\gamma}(\rho \Vert \dyad{\phi}) > 0$. It follows $\eta_{\gamma}(\cN) \geq \frac{E_{\gamma}(\cN(\rho) \Vert \cN(\dyad{\phi}))}{E_{\gamma}(\rho \Vert \dyad{\phi})}$ where the lower bound is zero if and only if $E_{\gamma}(\cN(\rho) \Vert \cN(\dyad{\phi}))=0$. However,
    \begin{align}
        E_{\gamma}(\cN(\rho) \Vert \cN(\dyad{\phi})) &= \max_{0 \leq M \leq I} \Tr[\cN(\rho) - \gamma \cN(\dyad{\phi})] \\
        &\geq \Tr[(I - \Pi_{\cN(\dyad{\phi}})\cN(\rho) - \gamma \cN(\dyad{\phi})] \\
        &= \Tr[(I - \Pi_{\cN(\dyad{\phi}})\cN(\rho)] \\ 
        &= 1 - \Tr[\Pi_{\cN(\dyad{\phi})}\cN(\rho)] \\ 
        & > 0 \ , 
    \end{align}
    where the strict inequality is because $\Tr[\Pi_{\cN(\dyad{\phi})}\cN(\rho)] < 1$ as the support of $\cN(\dyad{\phi})$ does not contain the support of $\cN(\rho)$. Therefore, $\eta_{\gamma}(\cN) > 0$, so, by \cref{eq:ball-containment-equiv-contraction-coeff}, $\cN \not \in \cB^{\gamma,0}$.
\end{proof}

\section{Reverse Pinsker-type inequalities}
So far, we have shown how to build (non-linear) bounds on data processing under the hockey stick divergence (\cref{thm:non_linear_HS_div}) and its implications for non-linear SDPI. To do this, we used a non-linear relation between hockey stick divergences of different $\gamma$ parameters (\cref{prop:HS_relationship_gamma'} and its implications). When the divergence is varied for the same inputs, i.e. one considers $(\sD_{1}(\rho \Vert \sigma), \sD_{2}(\rho \Vert \sigma))$ for input pairs $(\rho,\sigma)$ and distinct divergences $\sD_{1}$ and $\sD_{2}$, then one is considering the `joint range' of these two divergences \cite{harremoes2011pairs}. From this perspective, one may identify \cref{prop:HS_relationship_gamma'} as finding a non-linear outer bound on the joint range. A similar idea to this is reverse Pinsker inequalities which upper bound a divergence $\sD$ on inputs $\rho,\sigma$ in terms of the trace distance between $\rho$ and $\sigma$, which therefore also give outer bounds on the joint range for $\sD$ and trace distance. Given this view, here we show how to use \cref{prop:HS_relationship_gamma'} with the Hirche-Tomamichel $f$-divergences to establish new reverse Pinsker-type inequalities between the $f$-divergences and the hockey stick divergences.

The main ingredients will be the following bound on the Hockey-Stick divergence that is a more fine-grained version of~\cite[Lemma 5.1]{hirche2024quantumDivergences}. 
\begin{lemma}\label{Lem:HS-Bound-HS}
    Let $1\leq\gamma_1\leq\gamma\leq\gamma_2$, then
    \begin{align}
        E_\gamma(\rho\|\sigma) \leq \frac{\gamma-\gamma_2}{\gamma_1-\gamma_2} E_{\gamma_1}(\rho\|\sigma) + \frac{\gamma_1-\gamma}{\gamma_1-\gamma_2}E_{\gamma_2}(\rho\|\sigma). 
    \end{align}
\end{lemma}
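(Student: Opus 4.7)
The key observation is that the hypothesis $\gamma_1 \leq \gamma \leq \gamma_2$ means $\gamma$ lies in the convex hull of $\{\gamma_1,\gamma_2\}$. Concretely, setting
\begin{equation}
    \lambda \coloneqq \frac{\gamma-\gamma_2}{\gamma_1-\gamma_2}, \qquad 1-\lambda = \frac{\gamma_1-\gamma}{\gamma_1-\gamma_2},
\end{equation}
we have $\lambda \in [0,1]$ (since numerator and denominator are both nonpositive in the first expression) and $\gamma = \lambda\gamma_1 + (1-\lambda)\gamma_2$. The right-hand side of the claim is then exactly $\lambda\, E_{\gamma_1}(\rho\|\sigma) + (1-\lambda)\, E_{\gamma_2}(\rho\|\sigma)$, so the statement is a convexity-in-$\gamma$ statement for the hockey-stick divergence on the range $\gamma \geq 1$.

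The plan is to prove this convexity via the SDP representation \eqref{eq:HS_SDP}, which is valid because all three of $\gamma_1,\gamma,\gamma_2$ are at least $1$. For any $0 \leq M \leq I$, the operator $\rho - \gamma\sigma$ splits as
\begin{equation}
    \rho - \gamma\sigma \;=\; \lambda(\rho - \gamma_1\sigma) + (1-\lambda)(\rho - \gamma_2\sigma),
\end{equation}
which upon tracing against $M$ gives
\begin{equation}
    \Tr[M(\rho-\gamma\sigma)] \;=\; \lambda\,\Tr[M(\rho-\gamma_1\sigma)] + (1-\lambda)\,\Tr[M(\rho-\gamma_2\sigma)].
\end{equation}
Taking the supremum over $0 \leq M \leq I$ on the left, and using subadditivity of the supremum over a sum together with $\lambda,(1-\lambda)\geq 0$, one obtains
\begin{equation}
    E_\gamma(\rho\|\sigma) \;\leq\; \lambda\, E_{\gamma_1}(\rho\|\sigma) + (1-\lambda)\, E_{\gamma_2}(\rho\|\sigma),
\end{equation}
which is the claimed inequality after substituting the explicit expressions for $\lambda$ and $1-\lambda$.

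There is no real obstacle here: the argument is essentially the observation that $\gamma \mapsto E_\gamma(\rho\|\sigma)$ is a pointwise supremum of affine functions of $\gamma$, hence convex. The only mild care required is to note that since $\gamma_1 \geq 1$, the convex combination $\gamma$ is automatically in the regime $\gamma \geq 1$ where the SDP formulation \eqref{eq:HS_SDP} applies and the $(1-\gamma)_+$ correction in \eqref{eq:hockey_stick} vanishes for all three parameters; otherwise one would have to track that correction separately.
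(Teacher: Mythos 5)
Your proof is correct and follows essentially the same approach as the paper, which simply invokes convexity of $\gamma \mapsto E_\gamma(\rho\|\sigma)$ and notes that the claimed right-hand side is the secant line. You fill in the reason for that convexity — namely, the SDP representation \eqref{eq:HS_SDP} exhibits $E_\gamma$ as a pointwise supremum of functions affine in $\gamma$ — and carry out the interpolation arithmetic explicitly; this is a spelled-out version of the paper's one-line argument rather than a different route.
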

\begin{proof}
    The proof follows from the convexity of the function $\gamma\to E_\gamma$. As a result for any $1\leq\gamma_1\leq\gamma_2$ it is upper bounded by the straight line connecting the corresponding points of the function. That line can be checked to be the claimed result. 
\end{proof}
From here,~\cite[Lemma 5.1]{hirche2024quantumDivergences}, see also~\cite[Equation (23)]{zamanlooy2023strong} for the classical case, follows by setting $\gamma_1=1$ and $\gamma_2=e^{D_{\max}(\rho\|\sigma)}$. Let $f: (0,\infty) \to \RR$ be a convex and twice differentiable function 
satisfying $f(1)=0$. Then, for all quantum states $\rho$ and $\sigma$, the quantum $f$-divergence  defined in \cite[Definition~2.3]  {hirche2024quantumDivergences} is given by 
\begin{equation}\label{eq:f_divergence}
    D_f(\rho \Vert \sigma) \coloneqq \int_{1}^{\infty} f''(\gamma) E_\gamma(\rho \Vert \sigma) + \gamma^{-3} f''(\gamma^{-1}) E_\gamma(\sigma \Vert \rho) \ \mathrm{d} \gamma.
\end{equation}
These divergences reduce to the usual classical $f$-divergences for classical states. Their properties have recently been extensively discussed~\cite{hirche2024quantumDivergences,beigi2025some,liu2025layer}.
With this, we can get the following general result.
\begin{theorem}\label{thm:revPinGeneral}
    For any $\gamma_1\geq1$ and $\gamma_2\geq1$, assume that,
    \begin{align}
        E_{\gamma_1}(\rho\|\sigma) &\leq \delta_1 \\
        E_{\gamma_2}(\sigma\|\rho) &\leq \delta_2 \\
        E_1(\rho\|\sigma) &\leq\tau, 
    \end{align}
    then, 
    \begin{align}
        D_f(\rho\|\sigma) \leq f(\gamma_1) \frac{\tau-\delta_1}{\gamma_1-1} + \frac{\delta_1(f(e^a)-f(\gamma_1))}{e^a-\gamma_1} + f(\gamma_2^{-1})\frac{\gamma_2\tau-\delta_2}{\gamma_2-1} + \frac{e^b\delta_2 (f(e^{-b})-f(\gamma^{-1}))}{e^b-\gamma_2}, 
    \end{align}
    where $a = D_{\max}(\rho\|\sigma)$ and $b = D_{\max}(\sigma\|\rho)$. 
\end{theorem}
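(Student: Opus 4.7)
The plan is to bound each half of the integral representation~\eqref{eq:f_divergence} by majorizing $\gamma \mapsto E_\gamma(\rho\|\sigma)$ and $\gamma \mapsto E_\gamma(\sigma\|\rho)$ by piecewise-linear functions supplied by~\cref{Lem:HS-Bound-HS}, and then evaluating the resulting weighted integrals via integration by parts.

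First I would bound $E_\gamma(\rho\|\sigma)$ in three pieces: on $[1,\gamma_1]$, by~\cref{Lem:HS-Bound-HS} with endpoints $1,\gamma_1$ combined with $E_1(\rho\|\sigma)\leq\tau$ and $E_{\gamma_1}(\rho\|\sigma)\leq\delta_1$; on $[\gamma_1,e^a]$, with endpoints $\gamma_1,e^a$ combined with $E_{\gamma_1}\leq\delta_1$ and $E_{e^a}(\rho\|\sigma)=0$ (the latter from $a=D_{\max}(\rho\|\sigma)\Rightarrow\rho\leq e^a\sigma$); and trivially by $0$ on $[e^a,\infty)$. I would similarly bound $E_\gamma(\sigma\|\rho)$ in three pieces using the data $\tau,\delta_2,$ and $b=D_{\max}(\sigma\|\rho)$.

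To bring the second half of~\eqref{eq:f_divergence} into the same form, I would invoke the identity $\gamma^{-3}f''(\gamma^{-1})=\tilde{f}''(\gamma)$, where $\tilde{f}(x)\coloneqq x f(1/x)$ is the perspective transform; since $\tilde{f}''(x)=x^{-3}f''(1/x)\geq 0$ and $\tilde{f}(1)=0$, the same machinery applies to $\tilde{f}$ on the swapped states. All sub-integrals then collapse to the two elementary identities
\begin{align*}
\int_s^t f''(\gamma)(t-\gamma)\,d\gamma &= -(t-s)f'(s)+f(t)-f(s),\\
\int_s^t f''(\gamma)(\gamma-s)\,d\gamma &= (t-s)f'(t)-f(t)+f(s),
\end{align*}
applied on $[s,t]\in\{[1,\gamma_1],[\gamma_1,e^a]\}$ for the first integral and on $\{[1,\gamma_2],[\gamma_2,e^b]\}$ (with $\tilde{f}$) for the second. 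An internal cancellation of $\pm\delta_1 f'(\gamma_1)$ (resp.\ $\pm\delta_2\tilde{f}'(\gamma_2)$) arising from the two sub-intervals eliminates the interior kink, so only boundary values at $1,\gamma_1,e^a,\gamma_2,e^b$ survive.

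Adding the two halves and substituting $\tilde{f}(1)=0$, $\tilde{f}'(1)=-f'(1)$, $\tilde{f}(\gamma_2)=\gamma_2 f(\gamma_2^{-1})$, $\tilde{f}(e^b)=e^b f(e^{-b})$, the residual boundary terms $-\tau f'(1)$ from the $(\rho\|\sigma)$ half and $-\tau\tilde{f}'(1)=\tau f'(1)$ from the $(\sigma\|\rho)$ half cancel across the two halves, leaving a bound involving only $f(\gamma_1),f(e^a),f(\gamma_2^{-1}),f(e^{-b})$. The main obstacle is the bookkeeping: one must verify both the interior cancellations at $\gamma_1,\gamma_2$ and the cross-half cancellation of $f'(1)$ via $\tilde{f}'(1)=-f'(1)$ (which reflects the swap symmetry $D_f(\rho\|\sigma)=D_{\tilde{f}}(\sigma\|\rho)$). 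A secondary subtlety is that the natural output of the $\tilde{f}$ half has numerators $\gamma_2(\tau-\delta_2)$ and $e^bf(e^{-b})-\gamma_2 f(\gamma_2^{-1})$, which differ from the stated $\gamma_2\tau-\delta_2$ and $e^b(f(e^{-b})-f(\gamma_2^{-1}))$ by opposite copies of $\delta_2 f(\gamma_2^{-1})$ that cancel when combined, matching the claimed form exactly.
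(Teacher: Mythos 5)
Your proposal is correct and follows the same route as the paper: truncate the integral representation~\eqref{eq:f_divergence} at $e^a$ and $e^b$ using $E_\gamma(\rho\|\sigma)=0$ for $\gamma\geq e^{D_{\max}(\rho\|\sigma)}$, majorize $E_\gamma$ by the chords from \cref{Lem:HS-Bound-HS} on the subintervals, and evaluate the remaining weighted integrals explicitly. The perspective transform $\tilde f(x)=xf(1/x)$ and the careful tracking of the $f'(1)$, $f'(\gamma_1)$, $f'(\gamma_2)$ cancellations and of the $\pm\delta_2 f(\gamma_2^{-1})$ rearrangement are exactly the bookkeeping the paper's terse proof elides when it says ``explicitly calculating the remaining integral.''
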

\begin{proof}
    The proof is similar to that of~\cite[Proposition 5.2]{hirche2024quantumDivergences}. Crucial is that the Hockey-Stick divergence $E_\gamma(\rho\|\sigma)$ is zero for all $\gamma\geq\exp(D_{\max}(\rho\|\sigma))$. This allows to limit the range of the integral. Then, using~\cref{Lem:HS-Bound-HS} to bound the Hockey-Stick divergence by values that we assume known, leaves us with explicitly calculating the remaining integral, which can be done directly. 
\end{proof}
In the following, we will briefly discuss some special cases and consequences of that result. 
Define, 
\begin{align}
    E_\gamma^\leftrightarrow(\rho\|\sigma) = \max\{E_\gamma(\rho\|\sigma),E_\gamma(\sigma\|\rho) \}.
\end{align}
\begin{proposition}\label{Prop:RevPinCase1}
    For any $\gamma\geq1$ and $\rho,\sigma$ such that $E_\gamma^\leftrightarrow(\rho\|\sigma)\leq\delta$, we have,
    \begin{align}
        D_f(\rho\|\sigma) \leq \frac{(\gamma+\delta)f(\gamma^{-1}) + (1-\delta)f(\gamma)}{\gamma+1} + \left[ \frac{e^b(f(e^{-b})-f(\gamma^{-1}))}{e^b-\gamma} + \frac{f(e^a)-f(\gamma)}{e^a-\gamma} \right] \delta, 
    \end{align}
    where $a = D_{\max}(\rho\|\sigma)$ and $b = D_{\max}(\sigma\|\rho)$.
\end{proposition}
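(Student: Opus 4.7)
The plan is to derive \cref{Prop:RevPinCase1} as a direct specialization of \cref{thm:revPinGeneral}, with the parameters $\gamma_1$ and $\gamma_2$ both set equal to $\gamma$ and $\delta_1 = \delta_2 = \delta$. The only nontrivial input we need beyond the hypothesis $E_\gamma^\leftrightarrow(\rho\|\sigma)\leq\delta$ is a bound on the trace distance $\tau = E_1(\rho\|\sigma)$, which is not given but can be obtained from the earlier result.

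First, I would invoke \cref{prop:HS_relationship_gamma'} with $\gamma' = 1$. Combined with the assumption $\max\{E_\gamma(\rho\|\sigma), E_\gamma(\sigma\|\rho)\} \leq \delta$, this yields
\begin{equation}
E_1(\rho\|\sigma) \;\leq\; \frac{\gamma-1}{\gamma+1} + \frac{2}{\gamma+1}\,\delta \;=\; \frac{\gamma-1+2\delta}{\gamma+1} \;=:\; \tau \ .
\end{equation}
With this in hand, the three hypotheses of \cref{thm:revPinGeneral} are all satisfied with $\gamma_1=\gamma_2=\gamma$, $\delta_1=\delta_2=\delta$, and $\tau$ as above.

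Next, I would substitute into the bound of \cref{thm:revPinGeneral} and simplify the four resulting terms. Two of the simplifications are the same computation up to swapping $\rho \leftrightarrow \sigma$: one checks
\begin{equation}
\tau - \delta \;=\; \frac{(\gamma-1)(1-\delta)}{\gamma+1}, \qquad \gamma\tau - \delta \;=\; \frac{(\gamma-1)(\gamma+\delta)}{\gamma+1} \ ,
\end{equation}
so that the factors of $\gamma_1 - 1 = \gamma -1$ in the denominators cancel cleanly. The first and third terms of \cref{thm:revPinGeneral} then collapse to $\frac{(1-\delta)f(\gamma) + (\gamma+\delta)f(\gamma^{-1})}{\gamma+1}$, while the second and fourth terms combine into $\delta$ times the bracketed expression involving $a$ and $b$. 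Reading off both halves recovers the claimed inequality.

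The proof is essentially mechanical, so the main potential obstacle is bookkeeping: confirming the cancellation of $(\gamma-1)$ in the denominators (which also handles the boundary case $\gamma\to 1$ by continuity, since both numerators vanish to the same order), and verifying that the asymmetric factors $e^a$ and $e^b$ are assigned to the correct terms — namely, $a = D_{\max}(\rho\|\sigma)$ controls the cutoff for $E_\gamma(\rho\|\sigma)$ in the first half of the integral in \eqref{eq:f_divergence}, and $b = D_{\max}(\sigma\|\rho)$ controls the cutoff for $E_\gamma(\sigma\|\rho)$ in the second half, matching the placement in the stated bound.
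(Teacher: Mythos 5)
Your proposal is correct and follows exactly the paper's own route: invoke \cref{prop:HS_relationship_gamma'} with $\gamma'=1$ to obtain $\tau = \frac{\gamma-1+2\delta}{\gamma+1}$, then specialize \cref{thm:revPinGeneral} with $\gamma_1=\gamma_2=\gamma$ and $\delta_1=\delta_2=\delta$. The algebraic simplifications you spell out — $\tau-\delta = \frac{(\gamma-1)(1-\delta)}{\gamma+1}$ and $\gamma\tau-\delta = \frac{(\gamma-1)(\gamma+\delta)}{\gamma+1}$, with the resulting cancellation of $\gamma-1$ — are exactly the omitted bookkeeping the paper summarizes as "after some algebra."
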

\begin{proof}
From \cref{prop:HS_relationship_gamma'}, specifically the special case $\gamma=1$ previously proven in \cite[Theorem~5]{nuradha2024contraction}, we have,
    \begin{equation}
        E_{1}(\rho \Vert \sigma) \leq \frac{\gamma- 1}{\gamma+1} + \frac{2}{\gamma+1} \max\{ E_\gamma(\rho \Vert \sigma), E_\gamma(\sigma \Vert \rho)\}.
    \end{equation}
    Hence we set in \cref{thm:revPinGeneral} $\delta_1=\delta_2=\delta$ and $\tau=\frac{\gamma-1+2\delta}{\gamma+1}$ from which the results follows after some algebra. 
\end{proof}
For illustration, consider some special cases. For example in \cref{Prop:RevPinCase1}, for $\gamma=1$ and recalling $f(1)=0$, we have
\begin{align}
        D_f(\rho\|\sigma) \leq  \left[ \frac{e^bf(e^{-b})}{e^b-1} + \frac{f(e^a)}{e^a-1} \right] \delta.
\end{align}
This is exactly the reverse Pinsker inequality for $f$-divergences as proven in~\cite{hirche2024quantumDivergences}, see also~\cite{lanier2025classical} for this exact form and~\cite{binette2019note} for the classical case. Alternatively, the same result follows from \cref{thm:revPinGeneral} by setting $\delta_1=\frac{e^a-\gamma_1}{e^a-1}\tau$ and $\delta_2=\frac{e^b-\gamma_2}{e^b-1}\tau$. This also illustrates well, when that result can be more usefull than the usual reverse Pinsker inequality, namely, in particular, when $\delta_1$ and $\delta_2$ are smaller than the above thresholds. Interesting is also the case $\delta=0$ in \cref{Prop:RevPinCase1}, which gives, 
\begin{align}
        D_f(\rho\|\sigma) \leq \frac{\gamma f(\gamma^{-1}) + f(\gamma)}{\gamma+1}, 
\end{align}
which is similar to previous results in differential privacy~\cite{Christoph2024sample,nuradha2024contraction}.

\section{Mixing Times} \label{Sec:Mixing_times}

In this section, we study how the established results from previous sections can be utilized to obtain mixing times by having the hockey-stick divergence as the divergence of interest. For trace distance ($\gamma=1$), mixing times of several noisy channels have been studied in~\cite{levin2017markov,Raginsky2016,george2025quantumdoeblincoefficientsinterpretations}.


Define mixing time for $E_\gamma$ as follows: for $\gamma \geq 1$, $\beta\in[0,1]$, and channel $\cN$ with the fixed point $\sigma^*$
\begin{equation}
    t_\gamma^\cN(\beta) \coloneqq \min\left\{ n \in \mathbb{N}: \sup_{\rho}  E_{\gamma}\!\left( \cN^{(n)}(\rho) \Vert \sigma^*\right) \leq \beta  \right\}, 
\end{equation}
where $\cN^{(n)} \coloneqq \underbrace{\cN \circ \cdots \circ \cN}_{n}$.

\begin{proposition} \label{prop:mixing_times1}
    Let $\gamma \geq \gamma' > 1$ and $\beta \in [0,1]$. For a channel $\cN \in \cB^{\gamma, \delta=0}$ has the following upper bound on the mixing time: 
    \begin{equation}
        \label{Eq:tNbeta}
        t_{\gamma'}^\cN (\beta) 
        \leq  \left \lceil \frac{ \ln\!\left( \frac{\gamma' +1}{2 \beta + \gamma' -1}\right) }{\ln\!\left( \frac{\gamma +1}{\gamma -1}\right)} \right \rceil.
    \end{equation}
\end{proposition}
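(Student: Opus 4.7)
The plan is to reduce the mixing-time question to a pointwise evaluation of the $F_{\gamma'}$ curve at $t=1$ and then invert the resulting inequality in $n$.

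First I would use that the fixed point $\sigma^\ast$ satisfies $\sigma^\ast = \cN^{(n)}(\sigma^\ast)$, so for any state $\rho$
\begin{equation*}
    E_{\gamma'}\!\left( \cN^{(n)}(\rho) \Vert \sigma^\ast\right) = E_{\gamma'}\!\left( \cN^{(n)}(\rho) \Vert \cN^{(n)}(\sigma^\ast)\right).
\end{equation*}
Since $E_{\gamma'}(\rho\Vert\sigma^\ast)\le 1$ for all $\rho$ (as $\gamma'\ge 1$), the definition of the divergence curve gives
\begin{equation*}
    \sup_{\rho}\, E_{\gamma'}\!\left( \cN^{(n)}(\rho) \Vert \sigma^\ast\right) \;\le\; F_{\gamma'}^{\cN^{(n)}}(1).
\end{equation*}

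Next I would invoke the second bound of \cref{prop:F_gamma_n_channels}, which applies precisely because $\cN\in\cB^{\gamma,\delta=0}$, to obtain
\begin{equation*}
   F_{\gamma'}^{\cN^{(n)}}(1) \;\le\; \frac{1}{2}\Bigl(\bigl(\tfrac{\gamma-1}{\gamma+1}\bigr)^{\!n}(\gamma'+1) + 1-\gamma'\Bigr)_+.
\end{equation*}
Requiring this upper bound to be at most $\beta$ and dropping the positive part (which only makes the constraint more stringent) leads, after rearrangement, to
\begin{equation*}
   \bigl(\tfrac{\gamma-1}{\gamma+1}\bigr)^{\!n} \;\le\; \frac{2\beta+\gamma'-1}{\gamma'+1}.
\end{equation*}

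Finally I would take logarithms, being careful that $\ln\!\bigl(\tfrac{\gamma-1}{\gamma+1}\bigr)<0$ for $\gamma>1$, so that the inequality flips upon division. Rewriting both sides with their reciprocals (so everything is nonnegative) and taking the ceiling of the resulting real bound on $n$ recovers exactly \eqref{Eq:tNbeta}. The only mild subtlety is checking that the right-hand side $\tfrac{2\beta+\gamma'-1}{\gamma'+1}$ lies in $(0,1]$ whenever the claim is nontrivial (i.e.\ $\beta<1$ and $\gamma'>1$), which is immediate from $\beta\in[0,1]$ and $\gamma'>1$; this ensures the logarithm is well-defined and the ceiling is a nonnegative integer. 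No step appears to be a real obstacle—the only care needed is the sign handling when inverting the geometric decay and the mild algebraic rewriting between $\tfrac{\gamma-1}{\gamma+1}$ and $\tfrac{\gamma+1}{\gamma-1}$.
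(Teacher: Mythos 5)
Your proof is correct and follows essentially the same route as the paper's: the fixed-point identity $\cN^{(n)}(\sigma^\ast)=\sigma^\ast$, the bound on $F_{\gamma'}^{\cN^{(n)}}$ from \cref{prop:F_gamma_n_channels} (which requires $\cN\in\cB^{\gamma,0}$), and an algebraic inversion to solve for $n$. The only difference is cosmetic: you spell out the intermediate passage through $F_{\gamma'}^{\cN^{(n)}}(1)$ explicitly, whereas the paper folds this into its direct citation of the corollary.
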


\begin{proof}
 First, note that since $\sigma^*$ is a fixed point of the channel $\cN$, we have $\cN(\sigma^*)= \sigma^*$.
By \cref{prop:F_gamma_n_channels}, for all $\rho$,
\begin{align*}
    E_{\gamma'}\!\left( \cN^{(n)}(\rho) \Vert \sigma^*\right) = E_{\gamma'}\!\left( \cN^{(n)}(\rho) \Vert \cN^{(n)} (\sigma^*)\right)  &\leq \frac{1}{2} \left(-(\gamma'-1) + \left(\frac{\gamma -1}{ \gamma +1} \right)^n (\gamma'+1)  \right)_+ \ .
\end{align*}
As $\gamma' > 1$, for all $\beta \in [0,1]$, there exists $n_\beta \in \mbb{N}$ such that $-(\gamma'-1)  + \left(\frac{\gamma -1}{ \gamma +1} \right)^n (\gamma'+1) \leq 2\beta$ for all $n \geq n_{\beta}$. Thus, solving $n$ as a function of $\beta$ in this inequality, we find the first upper bound.
\end{proof}

Next, we study mixing times of channels that belongs to the set $\cB^{\gamma,\delta}$ with $\delta >0$.

\begin{proposition} \label{prop:mixing_linear_gam_del}
    Let $\gamma \geq \gamma' > 1$ and $\beta \in (0,1)$. For  $\cN \in \cB^{\gamma, \delta}$ and $\beta \in (0,1)$, we have 
\begin{equation}\label{eq:tNlinear}
     t_{\gamma'}^\cN(\beta) \leq \left \lceil \frac{ \ln\!\left( \frac{1}{\beta}\right) }{\ln\!\left( \frac{\gamma +1}{\gamma -\gamma' +(1+\gamma') \delta}\right)} \right \rceil.
\end{equation}
\end{proposition}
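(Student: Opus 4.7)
The plan is to combine the linear SDPI for the hockey-stick divergence established in Proposition \ref{prop:contraction_coeff_upper_bound} with the fact that $\sigma^*$ is a fixed point of $\cN$, then iterate and solve the resulting geometric inequality for $n$.

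First I would observe that because $\sigma^*$ is a fixed point of $\cN$, we have $\cN^{(n)}(\sigma^*) = \sigma^*$ for every $n \in \mathbb N$. Therefore
\begin{equation}
    E_{\gamma'}\!\left( \cN^{(n)}(\rho) \,\Vert\, \sigma^*\right) = E_{\gamma'}\!\left( \cN\!\left(\cN^{(n-1)}(\rho)\right) \,\Vert\, \cN(\sigma^*)\right).
\end{equation}
Next, I would invoke Proposition \ref{prop:contraction_coeff_upper_bound}, which under the assumption $\cN \in \cB^{\gamma,\delta}$ with $\gamma \geq \gamma' \geq 1$ yields
\begin{equation}
    \eta_{\gamma'}(\cN) \;\leq\; \frac{(\gamma - \gamma') + \delta(\gamma' + 1)}{\gamma + 1} \;=:\; \eta.
\end{equation}

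Iterating the linear SDPI $n$ times then gives, for every input state $\rho$,
\begin{equation}
    E_{\gamma'}\!\left( \cN^{(n)}(\rho) \,\Vert\, \sigma^*\right) \;\leq\; \eta^n\, E_{\gamma'}(\rho \Vert \sigma^*) \;\leq\; \eta^n,
\end{equation}
where I used the standard bound $E_{\gamma'}(\rho \Vert \sigma^*) \leq 1$ (which follows from $0 \leq M \leq I$ in the SDP formulation \eqref{eq:HS_SDP}). Finally, to achieve $E_{\gamma'}(\cN^{(n)}(\rho) \Vert \sigma^*) \leq \beta$ it suffices to take $\eta^n \leq \beta$, i.e.\ $n \,\ln(1/\eta) \geq \ln(1/\beta)$; taking the ceiling and substituting the explicit value of $\eta$ yields the claimed bound \eqref{eq:tNlinear}.

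The argument is essentially routine once the linear SDPI is in hand, so there is no real obstacle. The one thing to be a bit careful about is that the bound is only meaningful when $\eta < 1$, equivalently $\delta < \tfrac{\gamma'+1-(\gamma-\gamma')}{\gamma'+1} \cdot \text{(something)}$; more cleanly, when $(\gamma-\gamma') + \delta(\gamma'+1) < \gamma+1$, which holds for $\delta < 1$ whenever $\gamma' > 0$. Under the proposition's hypothesis $\beta \in (0,1)$ and $\gamma \geq \gamma' > 1$ this ensures the logarithm in the denominator of \eqref{eq:tNlinear} is positive, so the ceiling expression is well defined.
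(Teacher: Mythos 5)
Your proof is correct and takes essentially the same route as the paper: use the linear contraction bound $\eta_{\gamma'}(\cN)\leq \frac{(\gamma-\gamma')+\delta(\gamma'+1)}{\gamma+1}$ from \cref{prop:contraction_coeff_upper_bound}, iterate it $n$ times together with $E_{\gamma'}(\rho\Vert\sigma^*)\leq 1$ and $\cN(\sigma^*)=\sigma^*$, and solve the geometric inequality $\eta^n\leq\beta$ for $n$. Your remark about when $\eta<1$ is a reasonable sanity check but not an added ingredient.
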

\begin{proof}
       For $\cN \in \cB^{\gamma, \delta}$, by applying~\cref{prop:contraction_coeff_upper_bound} $n$ times for each application of the channel $\cN$ in the sequential composition, we get
    \begin{equation}
        \sup_{\rho} E_{\gamma'}\!\left( \cN^{(n)}(\rho) \Vert \cN^{(n)} (\sigma^*)\right) \leq \left(\frac{(\gamma- \gamma') + \delta (\gamma'+1) }{\gamma+1} \right)^n .
    \end{equation}
    Then, upper-bounding the above by $\beta$ and solving for $n$ provides an upper bound on the mixing time.
\end{proof}

\begin{remark}[Strength of Non-Linear SDPI for Obtaining Mixing Times] \label{rem:compare_mixing}
    Note that  for $\delta=0$, \eqref{Eq:tNbeta} results in a finite value even when $\beta = 0$ whereas \eqref{eq:tNlinear} is unbounded when $\beta = 0$. This is an example of the fact that for hockey stick divergences with $\gamma > 1$, nonlinear SDPI can determine mixing times such that the divergence becomes zero, but linear contraction coefficients cannot.

    Furthermore, in Figure~\ref{fig:mixing_linear_Non}, we illustrate how non-linear SDPI can lead to tighter mixing times compared to the bounds derived via linear SDPI, specifically when $\beta$ is small (i.e.; $\beta \ll 1$).
\end{remark}
\begin{figure}
    \centering
    \includegraphics[width=0.8\linewidth]{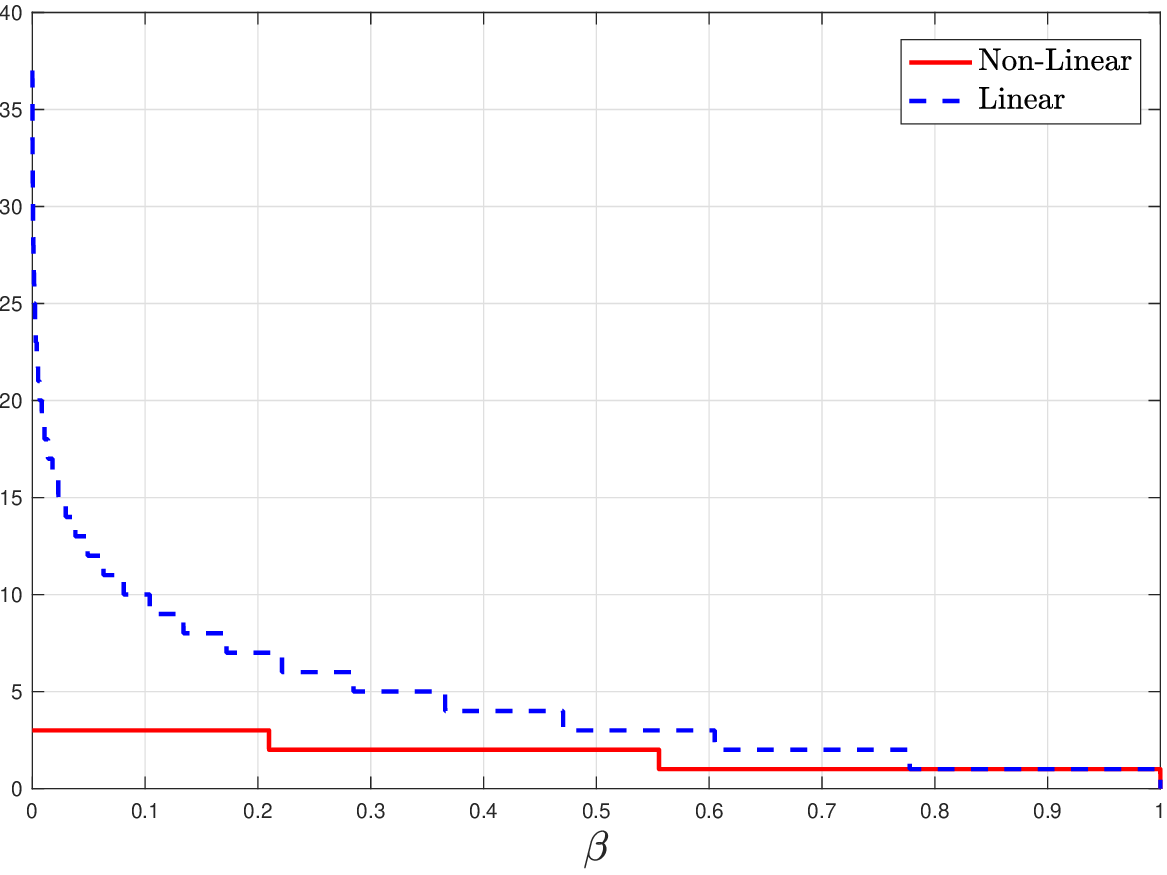}
    \caption{Comparison of Upper Bounds on Mixing Times with Linear and Non-Linear SDPI: We consider $\cN \in \cB^{\gamma, \delta}$ with $\gamma=8$ and $\delta=0$. With that, we plot the upper bounds on $t_{\gamma'}^{\cN}(\beta)$ given in~\cref{prop:mixing_times1} (marked as Non-linear) and~\cref{prop:mixing_linear_gam_del} (marked as Linear) for $\gamma'=3$.}
    \label{fig:mixing_linear_Non}
\end{figure}

So far, we have only addressed the case $\cN \in \cB^{\gamma,\delta}$  with $\delta \neq 0$ using a linear SDPI. Here we use our non-linear SDPI result (Proposition \ref{prop:F_gamma_n_delta}) to get a better understanding of the mixing times together with any transitions that may arise. 

\begin{proposition}
\label{prop:mixing_times_delta_not_0}
Let $1 < \gamma' <\gamma$, $\delta \in (0,1)$ and $\cN \in \cB^{\gamma, \delta}$. For $\beta \in (0,1)$, we have that 
\begin{equation}
        t_{\gamma'}^{\cN}(\beta)\leq
    k_* +  \left(\left\lceil\frac{\ln(\beta/T_*)}{\ln\delta}\right\rceil \right)_+,
\end{equation}
where $k_*\coloneqq k_*(1)$ and
  $T_*\coloneqq \Phi_{k_*}(1)$ by choosing $t=1$ in~\eqref{eq:kstar} and~\eqref{eq:phik}. 
\end{proposition}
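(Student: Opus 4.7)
\smallskip
\noindent\textbf{Proof plan for \cref{prop:mixing_times_delta_not_0}.}
The strategy is to reduce the mixing-time bound to a direct analysis of the non-linear $F_{\gamma'}$ curve under $n$ repeated applications of $\cN$ using \cref{prop:F_gamma_n_delta}. Fix an arbitrary input $\rho$. Because $\cN(\sigma^*)=\sigma^*$, we have $E_{\gamma'}\!\left(\cN^{(n)}(\rho)\Vert\sigma^*\right)=E_{\gamma'}\!\left(\cN^{(n)}(\rho)\Vert\cN^{(n)}(\sigma^*)\right)$. Since $E_{\gamma'}(\rho\Vert\sigma^*)\leq 1$ always, applying \cref{prop:F_gamma_n_delta} with $\cN_i=\cN$ for all $i$ and using the monotonicity~\eqref{eq:monotonicity_F_g} of $F_{\gamma'}$ gives, uniformly in $\rho$,
\begin{equation}
\sup_{\rho} E_{\gamma'}\!\left(\cN^{(n)}(\rho)\Vert\sigma^*\right)\;\le\; G_n(1).
\end{equation}
Thus it suffices to determine the smallest $n$ for which $G_n(1)\le\beta$.

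Next I would analyze the piecewise structure of $G_n(1)$ given by~\eqref{eq:Gn-piecewise} at $t=1$, using that by construction $k_* = k_*(1)$ is the first index for which $\Phi_{k_*}(1)\le t_*$, and $T_*=\Phi_{k_*}(1)$. In the first regime $1\le n\le k_*$ the bound equals $\Phi_n(1)$, which is monotonically decreasing in $n$ (since $a\in(0,1)$ by the parameter assumptions), and reaches $T_*$ at $n=k_*$. In the second regime $n>k_*$, the bound equals $\delta^{\,n-k_*} T_*$, a pure geometric decay because $\delta\in(0,1)$. Combining these two observations, $G_n(1)$ is non-increasing in $n$, so the mixing condition first activates either at the boundary $n=k_*$ (if $T_*\le\beta$) or somewhere in the geometric regime (if $T_*>\beta$).

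It then remains to solve $\delta^{\,n-k_*}T_* \le \beta$ in the geometric regime. Taking logarithms and using $\ln\delta<0$ yields the condition $n-k_*\ge \ln(\beta/T_*)/\ln\delta$. Hence the smallest integer value is $\lceil \ln(\beta/T_*)/\ln\delta\rceil$, which is non-positive precisely when $\beta\ge T_*$; in that case, $n=k_*$ itself already achieves the bound. Both cases are unified by inserting the positive-part operator on the ceiling, yielding
\begin{equation}
t_{\gamma'}^\cN(\beta)\ \le\ k_*+\left(\left\lceil \frac{\ln(\beta/T_*)}{\ln\delta}\right\rceil\right)_+,
\end{equation}
which is the claimed inequality. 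The only step requiring care is the unified sign handling between $\beta\ge T_*$ and $\beta<T_*$; the remaining work is routine monotonicity and algebra already encoded in $G_n$.
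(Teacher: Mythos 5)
Your proposal is correct and follows essentially the same route as the paper: both reduce the problem to bounding $\sup_\rho E_{\gamma'}(\cN^{(n)}(\rho)\Vert\sigma^*)$ by $G_n(1)$ via Proposition~\ref{prop:F_gamma_n_delta}, then split into the cases $\beta\geq T_*$ and $\beta<T_*$, solving the geometric decay condition $\delta^{\,n-k_*}T_*\leq\beta$ for $n$ in the latter. The unification of the two cases via the positive-part operator is likewise the same observation the paper makes.
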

\begin{proof}
    First notice that for all $\rho \in \cD$ (since $E_\gamma(\rho \Vert \sigma^*) \leq 1$)
    \begin{align}
      E_{\gamma'}\!\left( \cN^{(n)}(\rho) \Vert \cN^{(n)} (\sigma^*)\right)  \leq F_{\gamma'}^{\cN^{(n)}}(1) \leq G_n(1),
    \end{align}
 where the first inequality follows from~\eqref{eq:f_gamma_def}and the second from~\cref{prop:F_gamma_n_delta} with 
  \begin{equation}
 G_n(1) =
    \begin{cases}
      \Phi_n(1), & 1\leq n\leq  k_*(1),\\
      \delta^{\,n-k_*(1)} \Phi_{k_*(1)}(1), & n \geq k_*(1).
    \end{cases}
    \label{eq:Gn-piecewise_1}
  \end{equation}
Then by choosing $n$ such that $G_n(1) \leq \beta$ provides the desired upper bound. To this end, we consider two cases. 

\textbf{Case 1:} For the setting $\beta \ge T_*$ since $G_{k_*}(1)=T_*$ and $G_n(1)$ is non-increasing in $n$ after $k_*$ (geometric decay on the branch with $\delta$ dependence),
we already have $G_{k_*}(1)\leq \beta$. This leads to
\begin{equation}
    t_{\gamma'}^{\cN}(\beta) \leq k_*
\end{equation}
This matches the bound stated because $\ln(\beta/T_*)\le 0$ and $\ln\delta<0$ imply
$\big\lceil \ln(\beta/T_*)/\ln\delta \big\rceil \le 0$, yielding desired upper bound $k_*$.

\textbf{Case 2:} For $0<\beta < T_*$, then $n>k_*$ with $G_n(1)=\delta^{\,n-k_*}\,T_*$. The condition $G_n(1)\le \beta$ is equivalent to $ \delta^{\,n-k_*(1)} \Phi_{k_*(1)}(1) \leq \beta$.
This leads to 
\begin{equation}
    n \geq k_* + \frac{\ln(\beta/T_*)}{\ln\delta}.
\end{equation}
Thus the smallest integer $n$ satisfying $G_n(1)\le \beta$ is
\begin{equation}
    n = k_* + \left\lceil \frac{\ln(\beta/T_*)}{\ln\delta} \right\rceil,
\end{equation}
and consequently
\begin{equation}
    t_{\gamma'}^{\cN}(\beta) \leq k_* + \left\lceil \frac{\ln(\beta/T_*)}{\ln\delta} \right\rceil .
\end{equation}
Combining the two cases concludes the stated upper bound on the mixing time.

\begin{remark}[Non-linear Strong Data-Processing towards Mixing]
    By~\cref{prop:mixing_times_delta_not_0}, we see that the mixing time depends on two different parameters $k_*$ and $T_*$, where $k_*\coloneqq k_*(1)$ and
  $T_*\coloneqq \Phi_{k_*}(1)$ by choosing $t=1$ in~\eqref{eq:kstar} and~\eqref{eq:phik}. The upper bounds on mixing times have two different scalings depending on whether $\beta \geq T_*$ or $\beta <T_*$. In fact, for $\beta >T_*$, the upper bound on the mixing time is independent of $\beta$, and we see a transition at $\beta=T_*$.
\end{remark}



\end{proof}

As highlighted in Remark \ref{rem:compare_mixing}, a particularly appealing property of nonlinear SDPI for hockey stick divergences of parameter $\gamma' > 1$ is that they can derive finite mixing times for the case $\beta = 0$, i.e. a finite point at which the initial input state and the fixed point of the channel are indistinguishable with respect to the hockey-stick divergence. It was shown in~\cite[Theorem 2]{zamanlooy2024mathrm} that when a classical channel $\cW$ satisfies the condition $\sup_{p,q} D_{\max}(\cW(p) \Vert \cW(q)) = a < +\infty$, there exist simple bounds on the mixing time $t^{\cW}_{\gamma}(0)$. We end this section by generalizing this result and highlighting for what sort of channel this is relevant.

To generalize \cite[Theorem 2]{zamanlooy2024mathrm}, we require defining 
\begin{align}
    a  \equiv a(\cN) & \coloneq \sup_{\rho,\sigma} D_{\max}(\cN(\rho)\|\cN(\sigma)) \\
    \zeta(a) &\coloneq \frac{a-1}{a+1}. \label{eq:define_zeta}
\end{align}
One may observe that $\cN$ is always in $\cB^{e^{a}, 0}$. One may also observe that $a(\cN)$ is finite if and only if the support of all output states is the same, as follows from $D_{\max}(\rho \Vert \sigma) < +\infty$ only if the support of $\rho$ is contained in the support of $\sigma$. We remark that this is a strong condition on the channel $\cN_{A \to B}$ as it means all states are full rank on the span of the image of $\cN$ after a single iteration. In particular, this means any channel $\cN_{A \to B}$ such that $a(\cN) < +\infty$ is ``effectively'' a special case of a primitive channel \cite{wolf2012quantum}. To see this, given a channel $\cN$ such that $a(\cN) < +\infty$, we define $B'$ as the support of $\cN(\dyad{0})$ and $\Pi_{B'}$ as the projector onto $B'$. Then considering the Kraus decomposition of $\cN(\cdot) = \sum_{i} K_{i} \cdot K_{i}^{\ast}$, we may define the channel $\cM_{B' \to B'}(\cdot) = \sum_{i} K_{i}\Pi_{B'} \cdot \Pi_{B'}K_{i}^{\ast}$. By definition of $\cN$ and $B'$, for all input states, $\cN^{(n)} = \cM^{(n-1)} \circ \cN$. Moreover, as $a(\cN) < +\infty$, $a(\cM) < +\infty$, and thus maps all input states to full rank states in a single step. As a primitive channel is a channel that eventually maps all states to full rank states and equivalently converges to a unique full rank state \cite[Theorem 6.7]{wolf2012quantum}, $\cM$ is a specific case of a primitive channel that maps all states to a full rank state in a single step. Thus, as $\cN^{(n)}$ acts as $\cM$ for every iteration but the first, we can view $\cN$ as ``effectively" a stronger notion of a primitive channel, and thus the demand $a(\cN) < +\infty$ as selecting for a stronger condition than being primitive. We also remark the above shows if $\sigma^{\ast}$ is the fixed point of $\cN$, then $\sigma^{\ast}$ is full rank on $B'$.

With the above considerations specified, we study the mixing times of channels where $a(\cN)$ is finite. First, we observe what one can deduce directly from our earlier results.
\begin{corollary} \label{Cor:mixingtime2}
    Let $\cN_{A \to B}$ such that $a(\cN) < +\infty$ and $\sigma^{\ast}$ be its fixed point. For any $\gamma'$ such that $(\gamma'+1)\lambda_{\min}(\sigma^{\ast})\leq 1$, 
        \begin{align}
        t_{\gamma'}^\cN(0) \leq  \left \lceil \frac{ \ln\!\left( (\gamma' -1)\lambda_{\min}(\sigma^{\ast})\right) }{\ln\!\left( \zeta(e^a)\right)} \right \rceil,
    \end{align}
    where $\zeta(\cdot)$ is defined in~\eqref{eq:define_zeta}.
\end{corollary}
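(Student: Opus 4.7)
The plan is to reduce the task of making $E_{\gamma'}$ exactly zero to two ingredients already available: a trace-distance contraction bound under repeated application of $\cN$, and Proposition \ref{Prop:gamma-gamma-ev}, which ``upgrades'' a sufficiently small $E_{\gamma_1}$ to $E_{\gamma} = 0$ at a strictly larger parameter.

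First, I would use that the hypothesis $a(\cN) < +\infty$ directly gives $\cN \in \cB^{e^a, 0}$, as recorded right after the definition of $a(\cN)$. Then I would apply Corollary \ref{prop:F_gamma_n_channels} with channel parameter $\gamma = e^a$ and target parameter $1$ (so the $F_1$ curve). Since $\sigma^*$ is a fixed point of $\cN$ and $E_1(\rho\Vert\sigma^*)\leq 1$, the first bound in that corollary collapses (its $(\gamma'-1)/2$ term vanishes) to the clean uniform bound
\begin{equation*}
\sup_{\rho}\; E_1\!\left(\cN^{(n)}(\rho)\,\Vert\,\sigma^*\right) \;=\; \sup_{\rho}\; E_1\!\left(\cN^{(n)}(\rho)\,\Vert\,\cN^{(n)}(\sigma^*)\right) \;\leq\; \zeta(e^a)^n,
\end{equation*}
where $\zeta$ is as in \eqref{eq:define_zeta}.

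Next, I would invoke Proposition \ref{Prop:gamma-gamma-ev} with the assignment $(\gamma,\gamma')_{\text{Prop}} \leftarrow (\gamma', 1)$, where $\gamma'$ is the Corollary's target parameter: if $E_1(\omega\Vert\sigma^*) \leq (\gamma'-1)\,\lambda_{\min}(\sigma^*)$, then $E_{\gamma'}(\omega\Vert\sigma^*)=0$. Setting $\omega = \cN^{(n)}(\rho)$, it thus suffices to find the smallest $n$ with $\zeta(e^a)^n \leq (\gamma'-1)\,\lambda_{\min}(\sigma^*)$. Taking logarithms (both $\ln\zeta(e^a) < 0$ since $a>0$, and, under the hypothesis $(\gamma'+1)\lambda_{\min}(\sigma^*)\leq 1$, one has $(\gamma'-1)\lambda_{\min}(\sigma^*) \in (0,1]$ so $\ln((\gamma'-1)\lambda_{\min}(\sigma^*)) \leq 0$, so the inequality flips consistently) and rounding up produces exactly the claimed ceiling expression.

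The subtle step that deserves attention, and which I expect to be the only real obstacle, is the bookkeeping around the support of $\sigma^*$. The paragraph preceding the statement explains that $a(\cN) < +\infty$ forces $\cN^{(n)}(\rho)$ to be supported on a common subspace $B' \subseteq B$ on which $\sigma^*$ is full rank; this is exactly what guarantees $\lambda_{\min}(\sigma^*) > 0$ (taken on $B'$) and hence that Proposition \ref{Prop:gamma-gamma-ev} is not vacuous and produces a finite $n$. Once this support issue is properly tracked, the result is a direct two-line combination of Corollary \ref{prop:F_gamma_n_channels} and Proposition \ref{Prop:gamma-gamma-ev}; no new analytic work is required.
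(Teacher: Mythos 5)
Your proof is correct, but it is not the route the paper takes for this particular corollary. The paper's proof of \cref{Cor:mixingtime2} simply plugs $\gamma = e^{a}$ and $\beta = 0$ into \eqref{Eq:tNbeta} from \cref{prop:mixing_times1}, obtaining
\[
t_{\gamma'}^{\cN}(0) \leq \left\lceil \frac{\ln\!\big(\tfrac{\gamma'-1}{\gamma'+1}\big)}{\ln\zeta(e^{a})} \right\rceil,
\]
and then uses the stated hypothesis $(\gamma'+1)\lambda_{\min}(\sigma^{*}) \leq 1$ to relax the numerator: since $\tfrac{\gamma'-1}{\gamma'+1} \geq (\gamma'-1)\lambda_{\min}(\sigma^{*})$ and $\ln\zeta(e^{a}) < 0$, the quotient can only increase. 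That is a direct appeal to the non-linear SDPI on the $E_{\gamma'}$-curve itself, and the condition $(\gamma'+1)\lambda_{\min}(\sigma^{*}) \leq 1$ is exactly what makes the relaxation go through.

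What you wrote is instead the paper's proof of \cref{Cor:mixingtime3}: contract $E_{1}$ linearly via $\eta_{1}(\cN) \leq \zeta(e^{a})$ (equivalently, the $\gamma' = 1$ case of \cref{prop:F_gamma_n_channels}), then promote the small $E_{1}$ value to $E_{\gamma'} = 0$ by \cref{Prop:gamma-gamma-ev}. This route only needs $(\gamma'-1)\lambda_{\min}(\sigma^{*}) \leq 1$ to make the final ceiling nonnegative, a weaker hypothesis than \cref{Cor:mixingtime2} imposes (and exactly the hypothesis of \cref{Cor:mixingtime3}); the paper itself remarks that \cref{Cor:mixingtime3} is ``slightly more general, possibly due to its more specialized proof,'' and your argument illustrates precisely why. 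The trade-off is conceptual rather than quantitative: both routes land on the identical ceiling expression, but the paper's \cref{Cor:mixingtime2} proof showcases that the $F_{\gamma'}$-curve alone forces a finite hit-time at $\beta = 0$ (without routing through $E_{1}$ and $\lambda_{\min}$), whereas your (and the paper's \cref{Cor:mixingtime3}) argument shows that the weaker linear $E_{1}$-contraction plus \cref{Prop:gamma-gamma-ev} already suffices, under a milder constraint on $\gamma'$. Your handling of the support/full-rank issue for $\sigma^{*}$ on $B'$ is also consistent with the paper's preceding discussion and is the right point to flag.
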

\begin{proof}
    This follows directly from Equation~\eqref{Eq:tNbeta}, by setting $\gamma=\exp(a)$, $\beta=0$ and using the assumed bound on $\gamma'$.  
\end{proof}
This can be contrasted with~\cite[Theorem~2]{zamanlooy2024mathrm}, for which we can also give a quantum generalization as follows. 
\begin{corollary} \label{Cor:mixingtime3}
Let $\cN_{A \to B}$ such that $a(\cN) < +\infty$ and $\sigma^{\ast}$ be its fixed point. For any $\gamma'$ such that $(\gamma'-1)\lambda_{\min}(\sigma)\leq 1$, 
        \begin{align}
        t_{\gamma'}^\cN(0) \leq  \left \lceil \frac{ \ln\!\left( (\gamma' -1)\lambda_{\min}(\sigma^{\ast})\right) }{\ln\!\left( \zeta(e^a)\right)} \right \rceil.
    \end{align} 
\end{corollary}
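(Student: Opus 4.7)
The plan is to leverage Proposition~\ref{Prop:gamma-gamma-ev} to reduce the problem of forcing $E_{\gamma'}(\cN^{(n)}(\rho)\|\sigma^*) = 0$ to a trace-distance mixing problem, for which sharp contraction under any channel in $\cB^{e^a,0}$ is already available. The key preliminary observation is that the hypothesis $a(\cN) < +\infty$ immediately gives $\cN \in \cB^{e^a,0}$: the inequality $D_{\max}(\cN(\rho)\|\cN(\sigma)) \leq a$ for all $\rho,\sigma$ is equivalent to $\cN(\rho) \leq e^a\cN(\sigma)$, which in turn yields $E_{e^a}(\cN(\rho)\|\cN(\sigma)) = 0$.

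First, I would apply Proposition~\ref{Prop:gamma-gamma-ev} with its roles chosen as $\gamma \mapsto \gamma'$ (the target of the corollary) and $\gamma' \mapsto 1$. This shows that if $E_1(\cN^{(n)}(\rho) \| \sigma^*) \leq (\gamma'-1)\lambda_{\min}(\sigma^*)$, then $E_{\gamma'}(\cN^{(n)}(\rho)\|\sigma^*) = 0$. Hence it suffices to find the smallest $n$ for which this trace-distance hypothesis is guaranteed uniformly in $\rho$.

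Next, I would iterate the linear SDPI for trace distance from Proposition~\ref{prop:contraction_coeff_upper_bound} (taking $\gamma' = 1$, $\delta = 0$, $\gamma = e^a$), together with $\cN^{(n)}(\sigma^*) = \sigma^*$ and $E_1(\rho,\sigma^*) \leq 1$, to obtain
\begin{equation*}
E_1(\cN^{(n)}(\rho)\|\sigma^*) \;\leq\; \left(\frac{e^a-1}{e^a+1}\right)^{\!n} E_1(\rho,\sigma^*) \;\leq\; \zeta(e^a)^n.
\end{equation*}
Requiring $\zeta(e^a)^n \leq (\gamma'-1)\lambda_{\min}(\sigma^*)$ and taking logarithms (noting that $\ln \zeta(e^a) < 0$, and by the hypothesis $(\gamma'-1)\lambda_{\min}(\sigma^*) \leq 1$ the right-hand side's logarithm is non-positive) gives $n \geq \ln((\gamma'-1)\lambda_{\min}(\sigma^*)) / \ln \zeta(e^a)$. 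Taking the ceiling yields the claimed bound.

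The main obstacle I foresee is purely the index bookkeeping in Proposition~\ref{Prop:gamma-gamma-ev}: the conclusion ``$E_{\gamma_{\mathrm{prop}}}=0$'' is reached from a hypothesis on $E_{\gamma'_{\mathrm{prop}}}$ with $\gamma_{\mathrm{prop}} \geq \gamma'_{\mathrm{prop}} \geq 1$, so one must set the proposition's $\gamma'_{\mathrm{prop}} = 1$ and $\gamma_{\mathrm{prop}} = \gamma'$, not the reverse. Compared with Corollary~\ref{Cor:mixingtime2}, this route weakens the side condition on $\gamma'$ from $(\gamma'+1)\lambda_{\min}(\sigma^*) \leq 1$ to $(\gamma'-1)\lambda_{\min}(\sigma^*) \leq 1$ while delivering the same mixing-time bound; the only substantive new ingredient is the zero-forcing step supplied by Proposition~\ref{Prop:gamma-gamma-ev}.
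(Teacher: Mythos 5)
Your proposal is correct and follows essentially the same route as the paper's proof: both invoke Proposition~\ref{Prop:gamma-gamma-ev} with $\gamma_{\mathrm{prop}}=\gamma'$, $\gamma'_{\mathrm{prop}}=1$ to convert a trace-distance bound into $E_{\gamma'}=0$, and both obtain that trace-distance bound by iterating the linear contraction $\eta_1(\cN)\le\zeta(e^a)$ (which is just Equation~\eqref{eq:contraction_hockey_delta_0} with $\gamma=e^a$, $\gamma'=1$, equivalently the $\delta=0$ case of Proposition~\ref{prop:contraction_coeff_upper_bound}). Your preliminary remark that $a(\cN)<\infty$ gives $\cN\in\cB^{e^a,0}$ and your closing comparison with Corollary~\ref{Cor:mixingtime2} also match the paper's surrounding discussion.
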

\begin{proof}
    We follow the proof of~\cite[Theorem~2]{zamanlooy2024mathrm}. Starting from Equation~\eqref{eq:contraction_hockey_delta_0}, we have
    \begin{align} 
        \eta_{1}(\cN) \leq \left(\frac{e^a -1}{e^a +1} \right) = \zeta(e^a) .
    \end{align}
    Hence,
    \begin{align}
        E_{1}\!\left( \cN^{(n)}(\rho) \Vert \sigma^*\right) \leq  \eta_{1}(\cN)^n \leq  \zeta(e^a)^n.        
    \end{align}
    Now choosing, 
    \begin{align}
        n=\frac{\log\left[ (\gamma'-1)\lambda_{\min}(\sigma^{\ast})\right]}{\log\left[\zeta(e^a) \right]},
    \end{align}
    leads to
        \begin{align}
        E_{1}\!\left( \cN^{(n)}(\rho) \Vert \sigma^*\right) \leq (\gamma'-1)\lambda_{\min}(\sigma^{\ast}),         
    \end{align}
    which then implies, by Proposition~\ref{Prop:gamma-gamma-ev},
    \begin{align}
      E_{\gamma'}\!\left( \cN^{(n)}(\rho) \Vert \sigma^*\right) =0, 
    \end{align}
    from which the results follows.
\end{proof}
That \Cref{Cor:mixingtime3} is applicable to a larger range of $\gamma'$ and hence slightly more general, possibly due to its more specialized proof.

\section{Quantum Privacy}
\label{Sec:privacy}

With the development of quantum technologies and the generation of quantum data, ensuring the privacy of quantum systems is an important direction to explore. To this end,
statistical privacy frameworks for quantum data have been developed recently~\cite{QDP_computation17, aaronson2019gentle,hirche2022quantum, nuradha2023quantum}, which are generalizations of classical statistical privacy frameworks, including differential privacy~\cite{DMNS06,DR14} and pufferfish privacy~\cite{KM14,nuradha2022pufferfishJ}. One such framework is quantum local differential privacy (QLDP), which ensures that two distinct quantum 
states passed through a quantum channel are hard to distinguish by a measurement~\cite{hirche2022quantum}.  QLDP is also a special case of quantum pufferfish privacy~\cite{nuradha2023quantum}, and it has been studied in~\cite{angrisani_localModel25,guan2024optimal,nuradha2024contraction,Christoph2024sample} with respect to its performance tradeoffs on various statistical tasks. Also, the classical setting of local differential privacy has been studied in~\cite{zamanlooy2024mathrm} by utilizing non-linear SDPI for hockey-stick divergences.
\begin{definition}[Quantum Local Differential Privacy]\label{def:QLDP}
    Fix $\varepsilon \geq 0$ and $\delta \in [0,1]$. 
    Let $\cA$ be a quantum algorithm (viz., a quantum channel). The algorithm~$\cA$ is $ (\varepsilon, \delta)$-local differentially private if 
\begin{equation} \Tr\!\left[M \cA(\rho)\right] \leq e^\varepsilon \Tr\!\left[M \cA(\sigma)\right] + \delta,\qquad \forall  \rho, \sigma \in \cD(\cH), \quad \forall M: 0\leq M \leq I.
\label{eq:QLDP-def}
\end{equation}
We say that $\cA$ satisfies $\varepsilon$-QLDP if it satisfies $(\varepsilon,0)$-QLDP.
\end{definition}
Note that private channels that satisfy quantum local differential privacy (QLDP) as defined in~\cref{def:QLDP} fits into the noisy channel criterion in~\eqref{eq:intro_class_of_channels} by choosing $\gamma= e^\varepsilon$ therein~\cite[Eq. (V.1)]{hirche2022quantum}.

\subsection{Stronger Privacy Guarantees from Composition}
In this section, we study how we can improve the privacy guarantees of quantum channels imposed by QLDP by utilizing the SDPI derived in this work. 
We ask the following question: How can the composition results be improved in contrast to DPI and contraction coefficients?

Let $\cA$ be $(\varepsilon, 0)$- QLDP mechanism with $\cA : \cL(\cH_A) \to \cL(\cH_B)$ with $d_A=d_B$. Then, by data-processing property, we have that $\cA \circ \cA$ also 
satisfies $(\varepsilon,0)$- QLDP. Furthermore, by using contraction coefficients of channels, we can obtain strengthened privacy parameters such that they satisfy $(\varepsilon',   \left(\eta_{e^{\varepsilon'}}(\cA) \right)^2 )$ with $\varepsilon' \leq \varepsilon$, due to 
\begin{align}
    \sup_{\rho, \sigma} E_{e^{\varepsilon'}}\!\left( \cA \circ \cA (\rho) \Vert \cA \circ  \cA (\sigma) \right) & \leq \eta_{e^{\varepsilon'}}(\cA) \  \sup_{\rho, \sigma} E_{e^{\varepsilon'}}\!\left( \cA(\rho) \Vert \cA(\sigma) \right) \\ 
    & \leq \left(\eta_{e^{\varepsilon'}}(\cA) \right)^2 \sup_{\rho,\sigma} E_{e^\varepsilon}(\rho \Vert \sigma) \\
    & =\left(\eta_{e^{\varepsilon'}}(\cA) \right)^2 \\
    & \leq \left(\frac{e^\varepsilon - e^{\varepsilon'}}{e^{\varepsilon} +1} \right)^2,
\end{align}
where the penultimate inequality follows by noticing that the supremum is achieved by a pair of orthogonal states and evaluates to one; 
and the last inequality by~\eqref{eq:contraction_hockey_delta_0} with $\gamma=e^\varepsilon$ and $\gamma'=e^{\varepsilon'}$.

We show next that by employing the results derived in this work by establishing both linear and non-linear contraction coefficients provides improved sequential composition guarantees even for the setting with $\delta \neq 0$. Also, assume that all channels considered in the rest of this section have the same input and output dimensions (i.e.; $\cA : \cL(\cH_A) \to \cL(\cH_B)$ with $d_A=d_B$). 

\begin{proposition} \label{prop:strong_privacy_eps_delta}
Let $\varepsilon \geq 0$. 
    Let $\cA_i$ for $i=\{1, \ldots, n\}$ satisfies $(\varepsilon, 0)$-QLDP. Then, the sequential composition of $\cA_1 \circ \cdots \circ \cA_n$ satisfies $(\varepsilon', \delta')$-QLDP, where $0 <\varepsilon' \leq \varepsilon$
     \begin{equation}
        \delta' \coloneqq  \min \left\{ \frac{1}{2} \left((\zeta(\varepsilon))^n (e^{\varepsilon'}+1) +1 -e^{\varepsilon'} \right)_+, \left( \frac{e^\varepsilon- e^{\varepsilon'}  }{e^\varepsilon +1} \right)^n \right\}
    \end{equation}
and 
\begin{equation}
    \zeta(\varepsilon) \coloneqq \frac{e^\varepsilon -1}{e^\varepsilon +1}.
\end{equation}
Furthermore, let $\cA_i$ be $(\varepsilon_i,0)$-QLDP, where $\varepsilon_i \geq 0$. Then, we have that $\cA_1 \circ \cdots \circ \cA_n$ satisfies $(\varepsilon^*, 0)$-QLDP, where 
\begin{equation}
    \varepsilon^* \coloneqq \frac{1+\prod_{i=1}^n  \zeta(\varepsilon_i)}{1 - \prod_{i=1}^n  \zeta(\varepsilon_i)}.
\end{equation}

\end{proposition}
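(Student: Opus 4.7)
\medskip

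\noindent\textbf{Proof plan.} The core observation I would leverage is the equivalence, already noted after \cref{def:QLDP}, that $\cA$ is $(\varepsilon,\delta)$-QLDP if and only if $\cA \in \cB^{e^{\varepsilon},\delta}$. So each privacy claim of the form ``$\cM$ is $(\varepsilon^{\star},\delta^{\star})$-QLDP'' reduces to showing the uniform bound $\sup_{\rho,\sigma} E_{e^{\varepsilon^{\star}}}(\cM(\rho)\|\cM(\sigma))\leq \delta^{\star}$, which is exactly what the $F_{\gamma}$-curve machinery controls, since by \eqref{eq:f_gamma_def} with $t=1$ we have $\sup_{\rho,\sigma} E_\gamma(\cM(\rho)\|\cM(\sigma)) \leq F_\gamma^{\cM}(1)$.

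For the first part (homogeneous $\varepsilon$), I would set $\gamma = e^{\varepsilon}$ and $\gamma' = e^{\varepsilon'}$, and establish the bound by taking the minimum of two independent estimates on $F_{\gamma'}^{\cA_1\circ\cdots\circ\cA_n}(1)$. The first estimate is the non-linear one: since each $\cA_i \in \cB^{\gamma,0}$, \cref{prop:F_gamma_n_channels} directly gives
\begin{equation*}
F_{\gamma'}^{\cA_1\circ\cdots\circ\cA_n}(1) \leq \tfrac{1}{2}\bigl(\zeta(\varepsilon)^{n}(e^{\varepsilon'}+1) + 1 - e^{\varepsilon'}\bigr)_+,
\end{equation*}
which is the first term in the $\min$. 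The second estimate is the linear one: iterating the linear SDPI \eqref{eq:HS_Con_delta_0} (or equivalently \cref{prop:contraction_coeff_upper_bound} with $\delta=0$) $n$ times and using $E_{\gamma'}(\rho\|\sigma)\leq 1$ yields
\begin{equation*}
E_{\gamma'}(\cA_1\circ\cdots\circ\cA_n(\rho)\|\cA_1\circ\cdots\circ\cA_n(\sigma)) \leq \Bigl(\tfrac{e^{\varepsilon}-e^{\varepsilon'}}{e^{\varepsilon}+1}\Bigr)^{n},
\end{equation*}
which is the second term. Taking the minimum and invoking the QLDP equivalence concludes this part.

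For the heterogeneous part, I would apply \cref{prop:F_gamma_hetero_comp} with $\gamma_i = e^{\varepsilon_i}$, $\gamma' = e^{\varepsilon^\star}$, and $t=1$. Writing $p \coloneqq \prod_{i=1}^n \zeta(\varepsilon_i) = \prod_i \tfrac{\gamma_i-1}{\gamma_i+1}$, the bound reads
\begin{equation*}
F_{e^{\varepsilon^\star}}^{\cA_1\circ\cdots\circ\cA_n}(1) \leq \Bigl( p - \tfrac{e^{\varepsilon^\star}-1}{2}(1-p) \Bigr)_+.
\end{equation*}
To obtain $(\varepsilon^\star,0)$-QLDP, I solve for the threshold at which this expression vanishes, i.e.\ $p = \tfrac{e^{\varepsilon^\star}-1}{2}(1-p)$, giving $e^{\varepsilon^\star} = \tfrac{1+p}{1-p}$. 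This matches the stated expression (reading $\varepsilon^\star$ in the proposition as shorthand for $e^{\varepsilon^\star}$) and completes the argument.

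The steps are essentially bookkeeping once the right inputs from \cref{prop:contraction_coeff_upper_bound}, \cref{prop:F_gamma_hetero_comp}, and \cref{prop:F_gamma_n_channels} are invoked; the only subtle point, and the main thing to verify carefully, is the algebra that turns the vanishing threshold of the heterogeneous $F_\gamma$-bound into the M\"obius-type expression $\tfrac{1+p}{1-p}$, which one may double-check using the identity $\zeta(\varepsilon^\star)=p$ equivalent to $e^{\varepsilon^\star}=\tfrac{1+p}{1-p}$.
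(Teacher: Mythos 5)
Your proposal is correct and takes essentially the same route as the paper: invoke the equivalence $\cA \in \cB^{e^{\varepsilon},\delta} \iff \cA$ is $(\varepsilon,\delta)$-QLDP, bound $\sup_{\rho,\sigma} E_{e^{\varepsilon'}}(\cM(\rho)\|\cM(\sigma))$ by the minimum of the non-linear estimate from \cref{prop:F_gamma_n_channels} and the iterated linear estimate from \eqref{eq:HS_Con_delta_0}, and for the heterogeneous part find the vanishing threshold of the bound in \cref{prop:F_gamma_hetero_comp}. You also correctly noticed a typo in the proposition statement: since $\zeta(\varepsilon^{*}) = p$ is equivalent to $e^{\varepsilon^{*}} = \tfrac{1+p}{1-p}$ (and the $n=1$ sanity check gives $\tfrac{1+\zeta(\varepsilon_1)}{1-\zeta(\varepsilon_1)} = e^{\varepsilon_1}$, not $\varepsilon_1$), the displayed formula should define $e^{\varepsilon^{*}}$, or equivalently $\varepsilon^{*} = \ln\tfrac{1+\prod_i\zeta(\varepsilon_i)}{1-\prod_i\zeta(\varepsilon_i)}$.
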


\begin{proof}
The proof follows by using~\cref{prop:F_gamma_n_channels} for $\delta=0$ and~\eqref{eq:HS_Con_delta_0}, so as to obtain that for $ \gamma' \geq 1$ together with $E_{\gamma'}(\rho \Vert \sigma) \leq 1$
\begin{equation}
   \sup_{\rho, \sigma} E_{\gamma'}\!\left( \cA^{(n)} (\rho) \Vert \cA^{(n)} \sigma\right) \leq \min \left\{F_{\gamma'}^{\cN^{(n)}}(1), \left( \left(\frac{\gamma- \gamma'  }{\gamma+1} \right)_+\right)^n  \right \}.
\end{equation}
We conclude the proof of the first statement by choosing $\gamma' = e^{\varepsilon'}$ and $\gamma= e^{\varepsilon}$.
For the second setting, proof follows similarly to the first setting, by using~\cref{prop:F_gamma_hetero_comp} instead of ~\cref{prop:F_gamma_n_channels}.
\end{proof}

\begin{corollary}\label{Cor:QLDP_2}
    Let $\varepsilon \geq 0$ and $\delta \in [0,1]$.
    Let $\cA_i$ for $i=\{1, \ldots, n\}$ satisfies $(\varepsilon, \delta)$-QLDP. Then, the sequential composition of $\cA_1 \circ \cdots \circ \cA_n$ satisfies $(\varepsilon', \delta')$-QLDP, where $\varepsilon' \leq \varepsilon$ and
     \begin{equation}
        \delta' \coloneqq  \left( \frac{(e^\varepsilon- e^{\varepsilon'} ) + \delta (e^{\varepsilon'}+1) }{e^{\varepsilon}+1} \right)^n.
    \end{equation}
\end{corollary}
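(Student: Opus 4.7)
The plan is to translate the privacy condition into the divergence-ball language used throughout the paper, then iterate the linear SDPI of Proposition~\ref{prop:contraction_coeff_upper_bound} across the sequential composition.

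First, I would invoke the equivalence noted just after Definition~\ref{def:QLDP}, namely that $(\varepsilon,\delta)$-QLDP of $\cA_i$ is the same as $\cA_i\in\cB^{e^{\varepsilon},\delta}$. Given $0\leq\varepsilon'\leq\varepsilon$ (so that $e^{\varepsilon}\geq e^{\varepsilon'}\geq 1$), Proposition~\ref{prop:contraction_coeff_upper_bound} applied with $\gamma=e^{\varepsilon}$ and $\gamma'=e^{\varepsilon'}$ yields
\begin{equation*}
\eta_{e^{\varepsilon'}}(\cA_i)\;\leq\; \frac{(e^{\varepsilon}-e^{\varepsilon'})+\delta(e^{\varepsilon'}+1)}{e^{\varepsilon}+1}\;=:\;c.
\end{equation*}
This is the per-channel contraction factor that will be raised to the $n$-th power.

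Next, I would iterate the linear SDPI along the composition. Writing $\cB_{k}\coloneq\cA_k\circ\cdots\circ\cA_1$ and applying the SDPI one channel at a time,
\begin{equation*}
E_{e^{\varepsilon'}}\!\bigl(\cB_n(\rho)\,\Vert\,\cB_n(\sigma)\bigr)
\;\leq\;\eta_{e^{\varepsilon'}}(\cA_n)\cdot E_{e^{\varepsilon'}}\!\bigl(\cB_{n-1}(\rho)\,\Vert\,\cB_{n-1}(\sigma)\bigr)
\;\leq\;\cdots\;\leq\;\Bigl(\prod_{i=1}^{n}\eta_{e^{\varepsilon'}}(\cA_i)\Bigr)\,E_{e^{\varepsilon'}}(\rho\Vert\sigma).
\end{equation*}
Since $e^{\varepsilon'}\geq 1$ gives $E_{e^{\varepsilon'}}(\rho\Vert\sigma)\leq 1$ for all states, taking the supremum over $\rho,\sigma$ yields
\begin{equation*}
\sup_{\rho,\sigma} E_{e^{\varepsilon'}}\!\bigl(\cB_n(\rho)\,\Vert\,\cB_n(\sigma)\bigr)\;\leq\; c^{\,n}\;=\;\delta'.
\end{equation*}

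Finally, I would close the loop by translating this back to privacy. The inequality above is exactly $\cB_n\in\cB^{e^{\varepsilon'},\delta'}$, which by the same QLDP/$\cB^{\gamma,\delta}$ equivalence is the statement that $\cA_1\circ\cdots\circ\cA_n$ is $(\varepsilon',\delta')$-QLDP, as claimed. There is no real obstacle here: once the linear SDPI of Proposition~\ref{prop:contraction_coeff_upper_bound} is in hand, the whole argument reduces to the submultiplicativity of contraction coefficients under composition together with the trivial bound $E_{\gamma'}\leq 1$. The only care needed is to verify the regime $\gamma\geq\gamma'\geq 1$ is the relevant one and to note that for $\gamma'\geq\gamma$ the simpler bound $\eta_{\gamma'}(\cA_i)\leq\delta$ would make $\delta'=\delta^n$, which is also captured by the stated formula via $c\leq \delta$ in that (uninteresting) regime.
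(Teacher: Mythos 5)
Your proof is correct and matches the paper's intended argument, which is simply to iterate the linear SDPI of Proposition~\ref{prop:contraction_coeff_upper_bound}. The paper's one-line proof (``sequentially applying Proposition~\ref{prop:contraction_coeff_upper_bound} repeatedly $n$ times'') is exactly the chain you spell out: translate $(\varepsilon,\delta)$-QLDP into $\cA_i\in\cB^{e^\varepsilon,\delta}$, bound each $\eta_{e^{\varepsilon'}}(\cA_i)$ by $c$, compose to get $c^n$ against the trivial bound $E_{e^{\varepsilon'}}(\rho\Vert\sigma)\leq 1$, and translate back via the equivalence with Definition~\ref{def:QLDP}.
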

\begin{proof}
    Proof follows by sequentially applying~\cref{prop:contraction_coeff_upper_bound} repeatedly $n$ times.
\end{proof}

We next show that by composition of $(\varepsilon, \delta)$-QLDP mechanisms, it is possible to generate mechanisms satisfying $(\varepsilon',0)$-QLDP, generalizing~\cite[Lemma~3]{zamanlooy2024mathrm} for the quantum setting. 
\begin{proposition} \label{prop:delta_zero_mechanisms}
    Let $\cA$ be $(\varepsilon,\delta)$-QLDP with $\varepsilon >0$ and $\delta \in (0,1)$, and $\sigma^*$ be a fixed point of $\cA$ with $\lambda_{\min}(\sigma^*) >0$. Then, applying $\cA$ $n$-times sequentially produces a channel that satisfies $(\varepsilon',0)$-QLDP with $0 <\varepsilon' <\varepsilon $ when 
    \begin{equation}
        n \geq \max\left\{   \frac{\ln\!\left( (e^{\varepsilon'} -1)  \lambda_{\min}(\sigma^*) /2 \right)}{\ln (f_{\varepsilon,\delta})}, \frac{\ln\!\left(   \lambda_{\min}(\sigma^*)/2 \right)}{\ln (f_{\varepsilon,\delta})}\right\},
    \end{equation}
where 
\begin{equation}
    f_{\varepsilon,\delta} \coloneqq \left(\frac{e^{\varepsilon} -1 + 2 \delta }{e^{\varepsilon} +1} \right).
\end{equation}
\end{proposition}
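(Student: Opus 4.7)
The goal is to show $\cA^{(n)}\in\cB^{e^{\varepsilon'},0}$, and I would reach this by deploying \cref{Prop:gamma-gamma-ev} with parameters $\gamma'=1$ and $\gamma=e^{\varepsilon'}$. That reduces the task to verifying the single inequality
\[
E_1\!\left(\cA^{(n)}(\rho)\Vert\cA^{(n)}(\sigma)\right)\leq (e^{\varepsilon'}-1)\,\lambda_{\min}\!\left(\cA^{(n)}(\sigma)\right)
\]
for every pair of input states $\rho,\sigma$. The plan is then to upper bound the left-hand side and lower bound $\lambda_{\min}(\cA^{(n)}(\sigma))$ separately, both in terms of $f_{\varepsilon,\delta}^n$, and to pick $n$ large enough that the two threshold conditions in the max match exactly the sufficient condition for \cref{Prop:gamma-gamma-ev}.

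\textbf{Numerator and denominator bounds.} For the left-hand side I would iterate the linear trace-distance SDPI in~\eqref{Eq:E1-contraction} $n$ times, giving
\[
E_1\!\left(\cA^{(n)}(\rho)\Vert\cA^{(n)}(\sigma)\right)\leq f_{\varepsilon,\delta}^n\,E_1(\rho\Vert\sigma)\leq f_{\varepsilon,\delta}^n.
\]
For $\lambda_{\min}(\cA^{(n)}(\sigma))$ I would exploit the fixed-point identity $\sigma^*=\cA^{(n)}(\sigma^*)$. Applying the same iterated contraction to the pair $(\sigma,\sigma^*)$ yields $\tfrac12\|\cA^{(n)}(\sigma)-\sigma^*\|_1\leq f_{\varepsilon,\delta}^n$. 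Because $X\coloneqq \cA^{(n)}(\sigma)-\sigma^*$ is trace-zero Hermitian, the sharper estimate $\|X\|_\infty\leq \tfrac12\|X\|_1$ applies (it follows from $\|X\|_1=2\Tr X_+$ and $\|X\|_\infty\leq \max(\Tr X_+,\Tr X_-)=\Tr X_+$). Combined with Weyl's inequality this gives
\[
\lambda_{\min}\!\left(\cA^{(n)}(\sigma)\right)\geq \lambda_{\min}(\sigma^*)-f_{\varepsilon,\delta}^n.
\]

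\textbf{Assembling the thresholds.} Since $f_{\varepsilon,\delta}\in(0,1)$, we have $\ln f_{\varepsilon,\delta}<0$, so the two entries of the max rearrange to $f_{\varepsilon,\delta}^n\leq \lambda_{\min}(\sigma^*)/2$ and $f_{\varepsilon,\delta}^n\leq (e^{\varepsilon'}-1)\lambda_{\min}(\sigma^*)/2$, respectively. The first forces $\lambda_{\min}(\cA^{(n)}(\sigma))\geq \lambda_{\min}(\sigma^*)/2$ through the denominator bound, and chaining this with the second yields
\[
E_1\!\left(\cA^{(n)}(\rho)\Vert\cA^{(n)}(\sigma)\right)\leq f_{\varepsilon,\delta}^n\leq (e^{\varepsilon'}-1)\frac{\lambda_{\min}(\sigma^*)}{2}\leq (e^{\varepsilon'}-1)\,\lambda_{\min}\!\left(\cA^{(n)}(\sigma)\right),
\]
which is precisely the hypothesis of \cref{Prop:gamma-gamma-ev}. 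Invoking that proposition forces $E_{e^{\varepsilon'}}(\cA^{(n)}(\rho)\Vert\cA^{(n)}(\sigma))=0$ for every $\rho,\sigma$, and via \eqref{eq:ball-containment-equiv-contraction-coeff} this is exactly the claimed $(\varepsilon',0)$-QLDP guarantee.

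\textbf{Main obstacle.} The argument is a clean chaining of results already in the paper, so the one nontrivial subtlety is constant tracking. Using only the generic $\|X\|_\infty\leq \|X\|_1$ would replace $\lambda_{\min}(\sigma^*)/2$ by $\lambda_{\min}(\sigma^*)/4$ in the second entry of the max and therefore would not reproduce the stated bound; the sharpened operator-norm/trace-norm estimate for trace-zero Hermitians is what recovers the factor of two. Everything else is essentially a direct assembly of \eqref{Eq:E1-contraction}, Weyl's inequality, and \cref{Prop:gamma-gamma-ev}.
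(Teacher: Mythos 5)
Your proposal is correct and follows essentially the same route as the paper: iterate the linear $E_1$ contraction from \eqref{Eq:E1-contraction} to bound $E_1(\cA^{(n)}(\rho)\Vert\cA^{(n)}(\sigma))$ and $T(\cA^{(n)}(\sigma),\sigma^*)$ by $f_{\varepsilon,\delta}^n$, convert the latter into the lower bound $\lambda_{\min}(\cA^{(n)}(\sigma))\geq\lambda_{\min}(\sigma^*)-f_{\varepsilon,\delta}^n\geq\lambda_{\min}(\sigma^*)/2$, and then invoke \cref{Prop:gamma-gamma-ev} with $\gamma'=1,\gamma=e^{\varepsilon'}$. The one place you differ is the eigenvalue step: the paper does this directly by picking $M=|v\rangle\!\langle v|$ in the SDP formulation of trace distance, where $|v\rangle$ witnesses $\lambda_{\min}(\cA^{(n)}(\sigma))$, which immediately gives $\lambda_{\min}(\sigma^*)-\lambda_{\min}(\cA^{(n)}(\sigma))\leq T(\sigma^*,\cA^{(n)}(\sigma))$ with no loss of constant. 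Your route via $\|X\|_\infty\leq\tfrac12\|X\|_1$ for trace-zero Hermitian $X$ plus Weyl's inequality lands on the same estimate, but you correctly flagged that using only the crude $\|X\|_\infty\leq\|X\|_1$ would lose a factor of two; the paper's direct SDP argument sidesteps that worry entirely, which is arguably slightly cleaner, though the two are otherwise equivalent.
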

\begin{proof}
    First, by choosing $n$ large enough such that $n \geq \frac{\ln \left(   \lambda_{\min}(\sigma^*)/2 \right)}{\ln (f_{\varepsilon,\delta})}$, we have that 
    \begin{equation} \label{eq:n_first_bound_eps_delt}
        (f_{\varepsilon,\delta})^n \leq \frac{\lambda_{\min}(\sigma^*)}{2}.
    \end{equation}
    Note that $f_{\varepsilon,\delta} <1$ for $\varepsilon >0$ and $\delta \in (0,1)$ and there exists $n$ such that the above inequality holds.
    
    Let $\omega$ be an arbitrary quantum state where  $|v\rangle$  is the eigenvector achieving 
    \begin{equation}
        \lambda_{\min}\!\left( \cA^{(n)}(\omega)\right) = \langle v| \cA^{(n)}(\omega) | v \rangle.
    \end{equation}
    Then, we also have that 
    $ \lambda_{\min}(\sigma^*) = \inf_{\tau \in \cD}  \Tr[ \tau \sigma^*]\leq \langle v| \sigma^* | v \rangle$.
    With that, we consider 
    \begin{align}
        \lambda_{\min}(\sigma^*) -  \lambda_{\min}\!\left( \cA^{(n)}(\omega)\right)  & \leq   \langle v| \sigma^* | v \rangle- \lambda_{\min}\!\left( \cA^{(n)}(\omega)\right)  \\ 
        &=  \Tr\! \left[ |v \rangle \!\langle v| ( \sigma^* - \cA^{(n)}(\omega))\right] \\
        & \leq T\!\left(\sigma^*,\cA^{(n)}(\omega) \right) \\
        &=T\!\left( \cA^{(n)}(\sigma^*),\cA^{(n)}(\omega) \right) \\
        & \leq \left(\frac{e^{\varepsilon} -1 + 2 \delta }{e^{\varepsilon} +1} \right)^n T(\sigma^*,\omega) \\
       & \leq \left(\frac{e^{\varepsilon} -1 + 2 \delta }{e^{\varepsilon} +1} \right)^n,
    \end{align}
where the second inequality follows by the SDP formulation of trace distance by noticing that $0 \leq |v \rangle \!\langle v| \leq I$, penultimate inequality by applying SDPI inequality for $\gamma'=1$ and $\gamma=e^\varepsilon$ in~\cref{prop:contraction_coeff_upper_bound} for $n$ consecutive times, and the last inequality by bounding the trace distance between two states by one.
Together with~\eqref{eq:n_first_bound_eps_delt} we have that for all $\omega \in \cD$ (recall that $\omega$ was chosen arbitrarily)
\begin{equation}
    \frac{ \lambda_{\min}(\sigma^*)}{2}  \leq \lambda_{\min}\!\left( \cA^{(n)}(\omega)\right).
\end{equation}

Let $\rho, \omega$ be arbitrary states.  then Since $\cA$ is $(\varepsilon, \delta)$-QLDP, we have 
\begin{equation}
    T\!\left( \cA^{(n)}(\rho) , \cA^{(n)}(\omega)\right) \leq  \left(\frac{e^{\varepsilon} -1 + 2 \delta }{e^{\varepsilon} +1} \right)^n,
\end{equation}
by applying SDPI inequality for $\gamma'=1$ and $\gamma=e^\varepsilon$ in~\cref{prop:contraction_coeff_upper_bound} for $n$ consecutive times, and bounding the trace distance between inputs by one. 
Now by choosing $n \geq \frac{\ln\!\left( (e^{\varepsilon'} -1)  \lambda_{\min}(\sigma^*) /2 \right)}{\ln (f_{\varepsilon,\delta})}$ in addition to $n \geq \frac{\ln \left(   \lambda_{\min}(\sigma^*)/2 \right)}{\ln (f_{\varepsilon,\delta})}$, we get 
\begin{align}
     T\!\left( \cA^{(n)}(\rho) , \cA^{(n)}(\omega)\right) & \leq  (e^{\varepsilon{'}} -1) \  \frac{\lambda_{\min}(\sigma^*)}{2}\\
     & \leq  (e^{\varepsilon{'}} -1) \ \lambda_{\min}\!\left( \cA^{(n)}(\omega)\right). 
\end{align}
With the above inequality, by applying~\cref{Prop:gamma-gamma-ev} with the choice $\gamma'=1$ and $\gamma=e^{\varepsilon'}$, we have that $ E_{e^{\varepsilon'}}\left( \cA^{(n)}(\rho) , \cA^{(n)}(\omega)\right) =0$. Since $\rho$ and $\omega$ was chosen arbitarily, we have that 
\begin{equation}
    \sup_{\rho,\omega \in \cD} E_{e^{\varepsilon'}}\left( \cA^{(n)}(\rho) , \cA^{(n)}(\omega)\right) =0,
\end{equation}
implying that $\cA^{(n)}$ is $(\varepsilon',0)$-QLDP, whenever $n$ is chosen as given in the proposition statement.
\end{proof}

\begin{remark}[Generating stronger privacy frameworks by composition] 
   \cref{prop:strong_privacy_eps_delta} provides a provable privacy guarantees on the composition of $n$  less private mechanisms (with $(\varepsilon, \delta)$-QLDP) to obtain a stronger privacy mechanism having $\varepsilon' \leq \varepsilon$ while $0 \leq \delta' \leq \delta$. 
   
    As a special case, by utilizing~\cref{prop:strong_privacy_eps_delta}, one can obtain a private mechanism with  $\epsilon' \leq \epsilon$ and  $\delta' =0$  
    with repeated sequential application ($n$ times) of a private mechanism satisfying $(\varepsilon,0)$-QLDP by choosing
   \begin{equation}
       n \geq \left \lceil \frac{ \ln\!\left( \frac{e^{\varepsilon'} +1}{e^{\varepsilon'} -1}\right) }{\ln\!\left( \frac{1}{\zeta}\right)}  \right\rceil.
   \end{equation}
   Also, from~\cref{prop:delta_zero_mechanisms}, one can even generate mechanisms having $\delta'=0$ starting with $(\varepsilon, \delta)$-QLDP mechanisms whenever the channel has a full-rank fixed-point. To this end, depolarizing channels satisfy the phenomenon since the maximally mixed state is a fixed point. 
\end{remark}

\subsection{Bounds on $f$-Divergences under Privacy Constraints}

Finally, in this section, we will use some of the previous results to give bounds on $f$-divergences under differential privacy constraints, which are valid for all $\delta \in [0,1]$. 
The main result is the following. 
\begin{proposition} \label{prp:f_div_Contraction}
    Let $\cN$ be an $(\epsilon,\delta)$-QLDP channel, then
    \begin{align}
        D_f(\cN(\rho)\|\cN(\sigma)) \leq &\frac{f(e^\epsilon)+e^\epsilon f(e^{-\epsilon})}{e^\epsilon-1}\frac{e^\epsilon-1+2\delta}{e^\epsilon+1}\tau - \frac{f(e^\epsilon)+f(e^{-\epsilon})}{e^\epsilon-1}\delta \nonumber\\
        &+ \lambda \left( f\left(e^\epsilon+\frac{\delta}{\lambda}\right) - f\left(e^\epsilon\right) + \left(e^\epsilon+\frac{\delta}{\lambda}\right)f\left(\left(e^\epsilon+\frac{\delta}{\lambda}\right)^{-1}\right)-\left(e^\epsilon+\frac{\delta}{\lambda}\right)f\left(e^{-\epsilon}\right) \right),
    \end{align}
    where $\tau=E_1(\rho\|\sigma)$ is the trace distance and $\lambda=\inf_\sigma\lambda_{\min}(\cN(\sigma))$. 
\end{proposition}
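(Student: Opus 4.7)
The plan is to apply \cref{thm:revPinGeneral} to the pair $(\cN(\rho),\cN(\sigma))$ with $\gamma_1=\gamma_2=e^\epsilon$ and $\delta_1=\delta_2=\delta$. This is legitimate because $(\epsilon,\delta)$-QLDP is equivalent to $\cN\in\cB^{e^\epsilon,\delta}$, which simultaneously gives $E_{e^\epsilon}(\cN(\rho)\|\cN(\sigma))\le\delta$ and its reverse. The resulting bound involves three quantities I then need to eliminate in favour of $\tau$, $\epsilon$, $\delta$, and $\lambda$: the output trace distance $\tau':=E_1(\cN(\rho)\|\cN(\sigma))$, and the two max-divergences $a:=D_{\max}(\cN(\rho)\|\cN(\sigma))$ and $b:=D_{\max}(\cN(\sigma)\|\cN(\rho))$.

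For the trace-distance term I would apply the linear SDPI \eqref{Eq:E1-contraction} to get $\tau'\le \frac{e^\epsilon-1+2\delta}{e^\epsilon+1}\tau$. Substituting this into the two $\tau'$-dependent contributions in \cref{thm:revPinGeneral} and collecting them with the additive $\delta$-terms yields, after routine algebra, precisely the first line
\[
\frac{f(e^\epsilon)+e^\epsilon f(e^{-\epsilon})}{e^\epsilon-1}\,\frac{e^\epsilon-1+2\delta}{e^\epsilon+1}\,\tau\;-\;\frac{f(e^\epsilon)+f(e^{-\epsilon})}{e^\epsilon-1}\,\delta
\]
of the claim.

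For the max-divergences, I would use the duality between the Hockey-Stick and smooth max divergences (so that $(\epsilon,\delta)$-QLDP gives $D^\delta_{\max}(\cN(\rho)\|\cN(\sigma))\le\epsilon$ and similarly for the reverse direction) together with \cref{Cor:Dmax-by-smooth} to obtain $e^a,\,e^b\le e^\epsilon+\delta/\lambda$. It remains to verify that replacing $e^a,e^b$ by this common upper bound in the two secant terms
\[
\frac{\delta\bigl(f(e^a)-f(e^\epsilon)\bigr)}{e^a-e^\epsilon}\;+\;\frac{e^b\,\delta\bigl(f(e^{-b})-f(e^{-\epsilon})\bigr)}{e^b-e^\epsilon}
\]
yields an upper bound. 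Once this is established, the denominator in each term collapses to $\delta/\lambda$ and cancels the $\delta$ in the numerator, leaving an overall factor $\lambda$ and exactly the bracketed expression that forms the second line of the claim.

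The main technical step, and the one where care is required, is the monotonicity check underlying the substitution $e^a,e^b\mapsto e^\epsilon+\delta/\lambda$. The first secant $y\mapsto \bigl(f(y)-f(e^\epsilon)\bigr)/(y-e^\epsilon)$ is non-decreasing by convexity of $f$; the second has the form $\bigl(g(y)-g(e^\epsilon)\bigr)/(y-e^\epsilon)$ with $g(y):=y\bigl(f(1/y)-f(e^{-\epsilon})\bigr)$, which is non-decreasing because the perspective $y\mapsto yf(1/y)$ of a convex $f$ is itself convex (a standard fact in convex analysis). Beyond this convexity argument, the proof is essentially bookkeeping: stitching together \cref{thm:revPinGeneral}, the trace-distance SDPI \eqref{Eq:E1-contraction}, and \cref{Cor:Dmax-by-smooth}, and then matching coefficients term by term against the stated right-hand side.
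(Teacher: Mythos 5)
Your proof follows the paper's approach step for step: apply \cref{thm:revPinGeneral} with $\gamma_1=\gamma_2=e^\epsilon$, $\delta_1=\delta_2=\delta$; bound the output trace distance via \cref{Eq:E1-contraction}; and eliminate the max-divergences using \cref{Cor:Dmax-by-smooth}, noting that $(\epsilon,\delta)$-QLDP gives $D^\delta_{\max}\leq\epsilon$ in both directions. The only place you add something the paper leaves implicit is the monotonicity check justifying the replacement of $e^a,e^b$ by the common bound $e^\epsilon+\delta/\lambda$: the paper simply asserts the bound is non-decreasing in $a$ and $b$, whereas you supply the reason, namely that $y\mapsto \bigl(f(y)-f(e^\epsilon)\bigr)/(y-e^\epsilon)$ is non-decreasing by convexity of $f$, and the second secant has the same form with $g(y)=y\bigl(f(1/y)-f(e^{-\epsilon})\bigr)$, which is convex since the perspective $y\mapsto y f(1/y)$ is convex — this is a genuine improvement in rigor. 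One small omission shared by both your write-up and the paper: replacing $\tau'$ by the upper bound $\frac{e^\epsilon-1+2\delta}{e^\epsilon+1}\tau$ is only valid because its coefficient $\frac{f(e^\epsilon)+e^\epsilon f(e^{-\epsilon})}{e^\epsilon-1}$ is non-negative, which follows from convexity and $f(1)=0$ via the supporting-line bound $f(x)\geq f'(1)(x-1)$; it would be worth making that explicit.
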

\begin{proof}
    The main ingredient is~\cref{thm:revPinGeneral}, which we apply with $\gamma_1=\gamma_2=e^\epsilon$ and $\delta_1=\delta_2=\delta$. The resulting  $E_1(\cN(\rho)\|\cN(\sigma))$ we bound further with~\cref{Eq:E1-contraction}, which leads to the first term in the claim. The second term is immediate. For the unwieldy third term, we need to also bound the max-relative entropy. For that we use~\cref{Cor:Dmax-by-smooth}, which gives, 
    \begin{align}
    e^{D_{\max}(\rho\|\sigma)} \leq e^{D^\delta_{\max}(\rho\|\sigma)} + \frac{\delta}{\lambda_{\min}(\sigma)} \leq e^{\epsilon} + \frac{\delta}{\lambda_{\min}(\sigma)}.
\end{align}
To apply this, note that the bound in~\cref{thm:revPinGeneral} is monotonically non-decreasing in $a$ and $b$. Since we don't make any further assumptions on $\sigma$, we choose $\lambda=\inf_\sigma\lambda_{\min}(\cN(\sigma))$, such that the previous relation holds for all possible output states of the channel. Bringing all the above steps together completes the proof.
\end{proof}
This is a very general result. To get a better idea, we will compare some special cases. First note that for $\delta=0$, we recover
    \begin{align}
        D_f(\cN(\rho)\|\cN(\sigma)) \leq &\frac{f(e^\epsilon)+e^\epsilon f(e^{-\epsilon})}{e^\epsilon+1}\tau,
    \end{align}
which was previously shown in~\cite[Proposition 5]{nuradha2024contraction}. For further comparison it will be helpful to specialize to the relative entropy.
\begin{corollary}\label{Cor:RE-LDP-bound}
       Let $\cN$ be an $(\epsilon,\delta)$-QLDP channel, then
    \begin{align}
        D(\cN(\rho)\|\cN(\sigma)) \leq &\epsilon\frac{e^\epsilon-1+2\delta}{e^\epsilon+1}\tau - \epsilon\frac{e^\epsilon-e^{-\epsilon}}{e^\epsilon-1}\delta \nonumber\\
        &+ \lambda \left( \left(e^\epsilon+\frac{\delta}{\lambda}-1\right)\log\left(e^\epsilon+\frac{\delta}{\lambda}\right) + \left(1-e^\epsilon+\frac{\delta}{\lambda}e^{-\epsilon}\right)\epsilon \right),\label{Eq:RE-LDP-bound}
    \end{align}
    where $\tau=E_1(\rho\|\sigma)$ is the trace distance and $\lambda=\inf_\sigma\lambda_{\min}(\cN(\sigma))$. 
\end{corollary}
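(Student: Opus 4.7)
The plan is to specialize \cref{prp:f_div_Contraction} to the case of the quantum relative entropy. The quantum relative entropy arises as $D(\rho\|\sigma) = D_f(\rho\|\sigma)$ for $f(x) = x\log x$, which is convex, twice differentiable on $(0,\infty)$, and satisfies $f(1) = 0$. Since \cref{prp:f_div_Contraction} is stated for arbitrary such $f$, the entire proof amounts to inserting this specific $f$ into the general bound and simplifying. Hence my plan is simply to evaluate the coefficients of the three terms in \cref{prp:f_div_Contraction} with $f(x) = x\log x$.

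First I would compute the two basic ingredients $f(e^\epsilon) = \epsilon e^\epsilon$ and $f(e^{-\epsilon}) = -\epsilon e^{-\epsilon}$. For the $\tau$-term, I would compute
\begin{equation}
\frac{f(e^\epsilon) + e^\epsilon f(e^{-\epsilon})}{e^\epsilon - 1} \;=\; \frac{\epsilon e^\epsilon - \epsilon}{e^\epsilon - 1} \;=\; \epsilon,
\end{equation}
which when multiplied by $\tfrac{e^\epsilon-1+2\delta}{e^\epsilon+1}\tau$ reproduces the first summand of \cref{Eq:RE-LDP-bound}. For the pure $\delta$-term,
\begin{equation}
\frac{f(e^\epsilon) + f(e^{-\epsilon})}{e^\epsilon - 1} \;=\; \epsilon\,\frac{e^\epsilon - e^{-\epsilon}}{e^\epsilon - 1},
\end{equation}
which multiplied by $-\delta$ gives the second summand.

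For the third (bracketed) term, I would introduce the shorthand $A \coloneq e^\epsilon + \delta/\lambda$ for the upper bound on $e^{D_{\max}(\rho\|\sigma)}$ used in the proof of \cref{prp:f_div_Contraction}. Then $f(A) - f(e^\epsilon) = A\log A - \epsilon e^\epsilon$ and $A f(A^{-1}) - A f(e^{-\epsilon}) = -\log A + A\epsilon e^{-\epsilon}$, so their sum collapses to
\begin{equation}
(A-1)\log A \;+\; \epsilon\bigl(A e^{-\epsilon} - e^\epsilon\bigr) \;=\; (A-1)\log A \;+\; \epsilon\bigl(1 - e^\epsilon + \tfrac{\delta}{\lambda} e^{-\epsilon}\bigr),
\end{equation}
using $A e^{-\epsilon} = 1 + (\delta/\lambda)e^{-\epsilon}$. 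Multiplying through by $\lambda$ and unfolding $A$ yields exactly the third summand of \cref{Eq:RE-LDP-bound}.

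There is no real obstacle here: the result follows from \cref{prp:f_div_Contraction} by substitution and bookkeeping. The only minor point worth flagging is the implicit appeal to $\lambda = \inf_\sigma \lambda_{\min}(\cN(\sigma))$ being the correct uniform choice so that the $D_{\max}$ upper bound from \cref{Cor:Dmax-by-smooth} applies to every output state of the channel; this is inherited directly from the statement and proof of \cref{prp:f_div_Contraction} and requires no additional argument.
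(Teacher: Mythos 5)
Your proposal is correct and follows exactly the same route as the paper: the paper's own proof simply states that the result follows by choosing $f(x)=x\log x$ in \cref{prp:f_div_Contraction} and rearranging. Your explicit computation of the three coefficients (and in particular the simplification $A f(A^{-1}) = -\log A$ followed by the identity $Ae^{-\epsilon} = 1+(\delta/\lambda)e^{-\epsilon}$) correctly carries out the "rearranging" the paper leaves implicit.
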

\begin{proof}
    This follows by choosing $f(x)=x\log(x)$ in~\cref{prp:f_div_Contraction} and rearranging the result.
\end{proof}
There are only few bounds in the literature that apply to $\delta>0$. A recent point of comparison can be found in~\cite[Theorem 6]{dasgupta2025quantum}, which shows for classical probability distributions $P$ and $Q$ that, 
\begin{align}
    D(\cN(P)\|\cN(Q)) \leq &\left( \epsilon\tanh(\frac\epsilon2) + \delta\left(\frac{2\epsilon}{e^\epsilon+1}+\frac{e^\epsilon+\delta-1}{e^\epsilon} + \log\frac1{1-\delta}\right)\right)\tau \nonumber\\
    &+ \delta\left( \frac{e^\epsilon}{1-\delta} + 2\log\frac{e^\epsilon}{1-\delta} - \frac{1-\delta}{e^\epsilon} + 2\left( \epsilon + \log\frac1{1-\delta} +\frac2m \right), \right)\label{Eq:dasgupta}
\end{align}
where $m\equiv m(\cN,Q,P)$ involves a truncated distribution defined in~\cite{dasgupta2025quantum}. We always have $m\leq\lambda$ as a point of comparison. For better comparison, we can write out the $\tanh$ function and reformulate~\cref{Eq:dasgupta} as
\begin{align}
    D(\cN(P)\|\cN(Q)) \leq &\left( \epsilon\frac{e^\epsilon-1+2\delta}{e^\epsilon+1} + \delta\left(\frac{e^\epsilon+\delta-1}{e^\epsilon} + \log\frac1{1-\delta}\right)\right)\tau \nonumber\\
    &+ \delta\left( \frac{e^\epsilon}{1-\delta}  - \frac{1-\delta}{e^\epsilon} + 4\left( \epsilon + \log\frac1{1-\delta} +\frac1m \right) \right).
\end{align}
Comparing to~\cref{Cor:RE-LDP-bound}, we see that the prefactor of $\tau$ is strictly better in our result. For the overall bounds, we provide a numerical comparison in Figure~\ref{fig:rev-pin-LDP}. For all examples we tried, our new bound performs significantly better (even in the setting $\lambda=m$, which provides a lower bound on the bound given in~\eqref{Eq:dasgupta} since $\lambda \leq m$).

\begin{figure} 
    \centering
\includegraphics[width=0.48\linewidth]{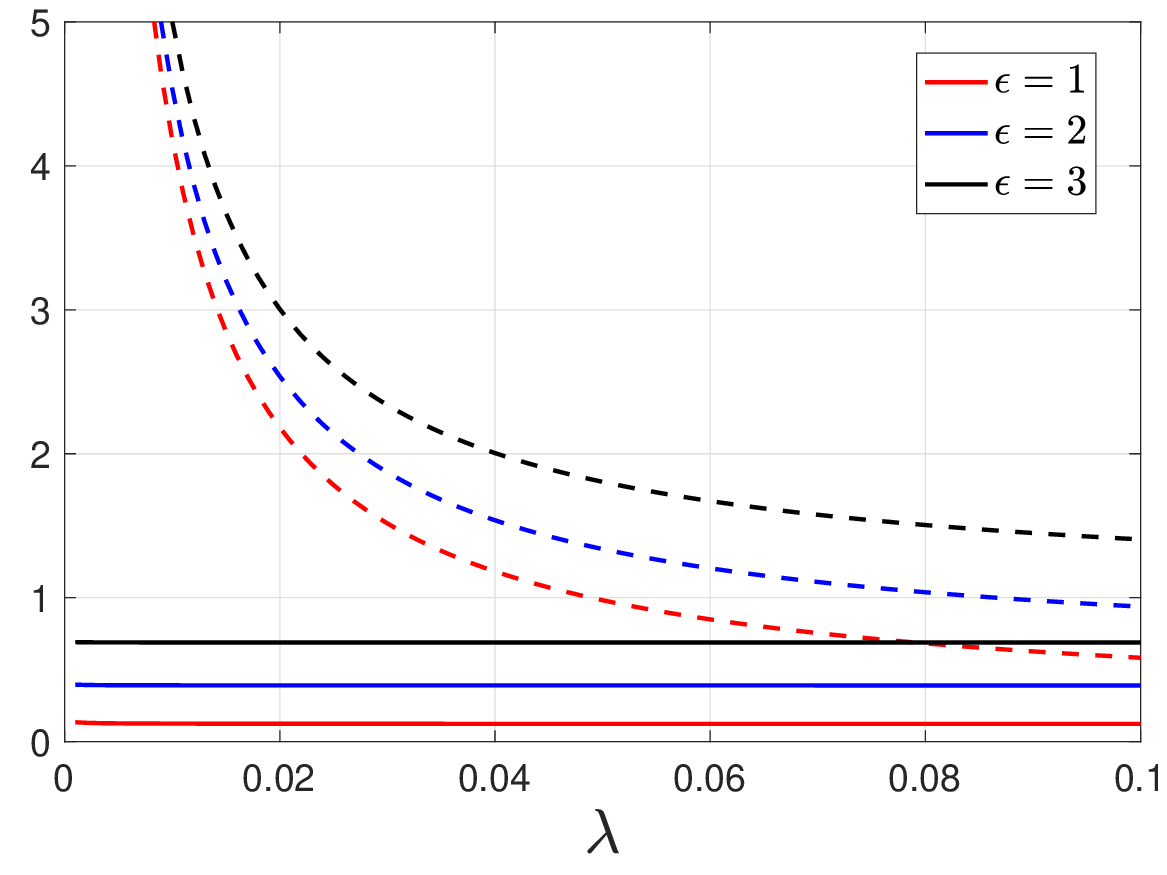}
\includegraphics[width=0.48\linewidth]{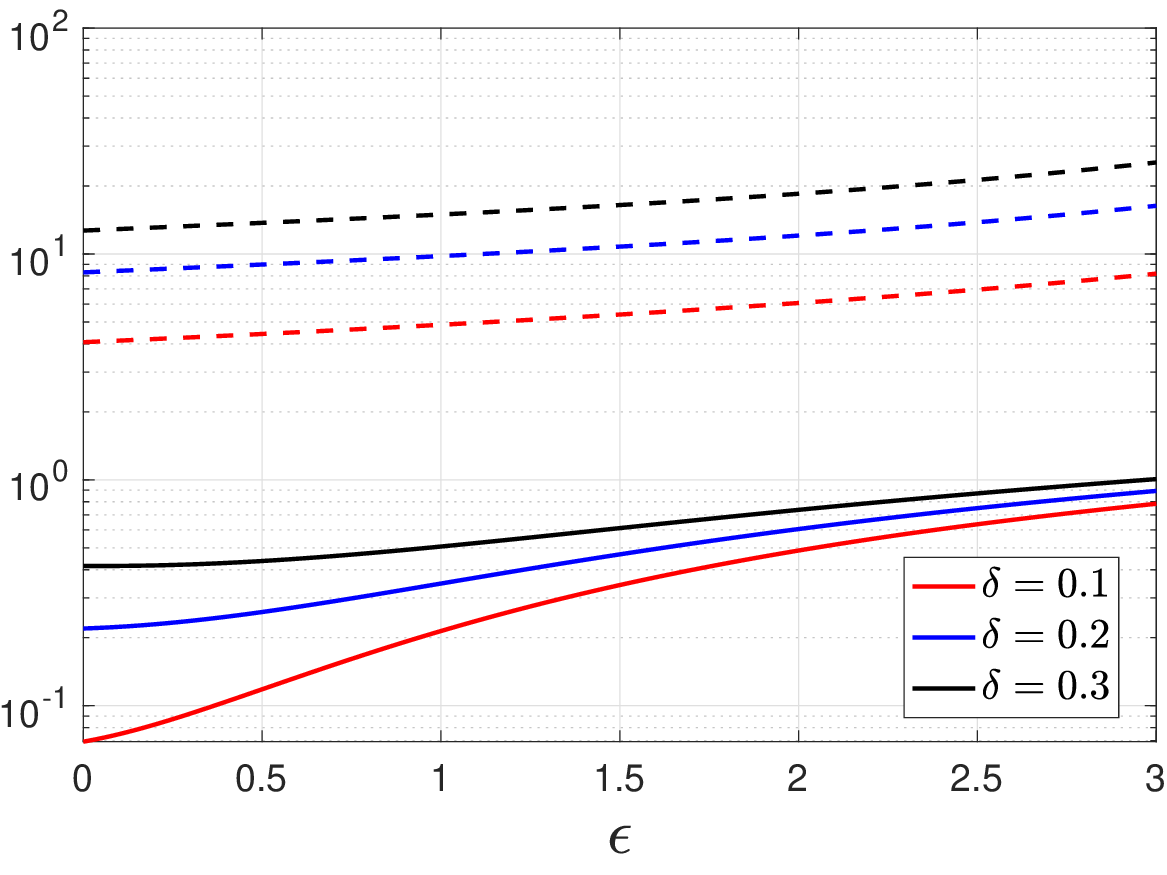}
    \caption{Comparing LDP bounds on the relative entropy. Dashed lines represent Equation~\eqref{Eq:dasgupta} (in particular a lower bound on that since we chose $\lambda=m$) and solid lines our new bound in Equation~\eqref{Eq:RE-LDP-bound}. Left: Plot over $\lambda$, respectively $m$, for fixed $\epsilon=\{1,2,3\}, \delta=0.01, \tau=0.25$. Right: Plot over $\epsilon$ for fixed $\delta=\{0.1,0.2,0.3\},  \lambda=m=0.1, \tau=0.25$.}
    \label{fig:rev-pin-LDP}
\end{figure}

\section{Conclusion and Future Work}
In this work, we study strong data-processing inequalities (SDPI) for quantum hockey-stick divergence. We obtain both linear and non-linear SDPI for quantum hockey-stick divergences, generalizing and improving upon the best known linear bounds in both the classical and quantum settings when the channel satisfies a certain criterion given in~\eqref{eq:intro_class_of_channels}. In fact, the non-linear SDPI in~\cref{thm:non_linear_HS_div} is tight as well as the first of its kind in the quantum setting.
Then, we analyze the setting where the composition of noisy channels is applied by defining $F_\gamma$ curves that reduce to Dobrushin curves for $\gamma=1$. We also studied in which cases the required criterion on the hockey-stick divergence holds for these bounds to be valid. Furthermore, we also obtained reverse Pinsker-type inequalities for $f$-divergences with additional constraints on hockey-stick divergences that would be of independent interest. Finally, we showed how the established results are useful in establishing finite, tighter bounds on mixing times of channels and stronger composition results with sequential composition of noisy quantum channels in ensuring privacy for quantum systems where privacy is quantified by quantum local differential privacy.

This work establishes a foundation to study non-linear SDPI for other quantum divergences, including quantum relative entropy and other families of $f$-divergences, possibly with the tools established in~\cite{hirche2024quantumDivergences}. In~\cref{rem:measured_HS_SDPI}, we showed how some of the results here generalize to the setting of measured hockey-stick divergences and how measured hockey-stick divergences can encode practical restrictions imposed by the systems. With that, it is an important research direction to explore SDPI for measured hockey-stick divergences and measured quantities in general. Another interesting exploration is to study non-linear SDPI under several operational constraints, similar to the study of energy-constrained Dobrushin curves studied in~\cite[Section 4]{huber2019jointly}.

\section*{Acknowledgments}
We thank Mark M.~Wilde for helpful discussions on strong data-processing inequalities. 
We also thank Behnoosh Zamanlooy, Shahab Asoodeh, Mario Diaz, and Flavio Calmon for sharing the extended version of their article~\cite{zamanlooy2024mathrm}, where we got inspired to derive quantum generalizations of their results. 
TN acknowledges support from the
Department of Mathematics and the IQUIST Postdoctoral Fellowship from
the Illinois Quantum Information Science and Technology Center at
the University of Illinois Urbana-Champaign. TN also acknowledges the support from the Dieter Schwarz Exchange Programme on Quantum Communication and Security at the Centre for Quantum Technologies, National University of Singapore, Singapore, during her research visit, and the hospitality of Marco Tomamichel's group, where the initial idea generation of this project happened.
IG is supported by the Ministry of Education, Singapore, through grant T2EP20124-0005. CH received funding by the Deutsche Forschungsgemeinschaft (DFG, German
Research Foundation) – 550206990. This work was supported, in part, by the Federal Ministry of Research, Technology and Space (BMFTR), Germany, under the QC service center QUICS (grant no. 13N17418).

\bibliographystyle{quantum}
\bibliography{lib}
\end{document}